\definecolor{pink}{rgb}{0.72,0.18,0.95}
\definecolor{celeste}{rgb}{0.09, 0.81, 0.9}
\definecolor{emerald}{rgb}{0, 0.677778, 0.555556}
\definecolor{lavender}{rgb}{0.64, 0.47, 0.87}
\definecolor{apricot}{rgb}{0.98, 0.71, 0.43}
\definecolor{skyblue}{rgb}{0.53, 0.70, 0.80}
\definecolor{darklavender}{rgb}{0.48, 0.35, 0.65} 
\definecolor{darkapricot}{rgb}{0.80, 0.55, 0.33} 
\definecolor{navyblue}{rgb}{0.1, 0.2, 0.5}  
\newcommand{\beginsupplement}{
    \setcounter{section}{0}
    \renewcommand{\thesection}{S\arabic{section}}
    \setcounter{equation}{0}
    \renewcommand{\theequation}{S\arabic{equation}}
    \setcounter{table}{0}
    \renewcommand{\thetable}{S\arabic{table}}
    \setcounter{figure}{0}
    \renewcommand{\thefigure}{S\arabic{figure}}
    \newcounter{SIfig}
    \renewcommand{\theSIfig}{S\arabic{SIfig}}
}
\newcommand{\id}{\mathbb{I}}
\newcommand{\ketbra} [2] {|#1\rangle\langle#2| }
\newcommand{\yd}[1] {\left[#1\right]}
\newcommand{\yt}[1] {\left(#1\right)}
\newcommand{\wt}[1] {\ulcorner\!#1\!\urcorner}
\renewcommand\listoffigures{\@starttoc{lof}}
\renewcommand\listoftables{\@starttoc{lot}}
\DeclarePairedDelimiter\ceil{\lceil}{\rceil}
\DeclareMathOperator{\Tr}{tr}
\newtheorem{theorem}{Theorem}[section]
\newtheorem{definition}[theorem]{Definition}
\newtheorem{lemma}[theorem]{Lemma}
\newtheorem{proposition}[theorem]{Proposition}
\crefname{lemma}{Lemma}{Lemmas}
\crefname{proposition}{Proposition}{Propositions}
\crefname{algorithm}{Alg.}{Algs.}
\crefname{definition}{Def.}{Defs.}
\crefname{theorem}{Thm.}{Thms.}
\crefname{claim}{Claim}{Claims}
\crefname{section}{Sec.}{Secs.}
\crefname{figure}{Fig.}{Figs.}
\crefname{equation}{Eq.}{Eqs.}
\crefname{table}{Tab.}{Tabs.}
\crefname{enumi}{}{}
\begin{document}

\preprint{MIT-CTP/5775}
\title{Optimal Quantum Purity Amplification}

\author{Zhaoyi Li}
\email{ladmon@mit.edu}
\affiliation{Department of Physics, Massachusetts Institute of Technology, Cambridge, MA 02139, USA}
\author{Honghao Fu}
\affiliation{Computer Science and Artificial Intelligence Lab, Massachusetts Institute of Technology, Cambridge, MA 02139, USA \\
Concordia Institute for Information Systems Engineering, Concordia University, Montreal, QC H3G 1S6, Canada}
\date{\today}
\author{Takuya Isogawa}
\affiliation{Department of Nuclear Science and Engineering, Massachusetts Institute of Technology, Cambridge, MA 02139, USA}
\date{\today}
\author{Caio Silva}
\affiliation{Department of Physics, Massachusetts Institute of Technology, Cambridge, MA 02139, USA}
\date{\today}
\author{Isaac Chuang}
\affiliation{Department of Physics, Massachusetts Institute of Technology, Cambridge, MA 02139, USA}

\begin{abstract}
Quantum purity amplification (QPA) provides a novel approach to counteracting the pervasive noise that degrades quantum states. We present the optimal QPA protocol for general quantum systems and global noise, resolving a two-decade open problem. Under strong depolarization, our protocol achieves an exponential reduction in sample complexity over the best-known methods. We provide an efficient implementation of the protocol based on generalized quantum phase estimation. Additionally, we introduce SWAPNET, a sparse and shallow circuit that enables QPA for near-term experiments. Simulations in both digital and analog quantum settings, along with experiments on superconducting quantum processors, confirm the protocol’s robustness and practical utility. Our findings suggest that QPA could improve the performance of quantum information processing tasks, particularly in the context of Noisy Intermediate-Scale Quantum (NISQ) devices, where reducing the effect of noise with limited resources is critical.
\end{abstract}

\maketitle
As we transition into the era of Noisy Intermediate-Scale Quantum (NISQ) technologies~\cite{P18}, managing errors and improving computational reliability becomes crucial for advancing practical quantum applications. Fault-tolerant quantum computing (FTQC)~\cite{dennis2002topological,CDT09, Li2023concatenation} with quantum error-correcting codes (QEC)~\cite{PhysRevA.57.127,Gottesman2014, aliferis2005quantum} provides a robust framework for error control in large-scale quantum devices. However, performing quantum error correction on NISQ devices is less practical, as coding introduces significant overhead, with benefits only becoming apparent with large block sizes or multiple levels of concatenation. Achieving a universal set of quantum operations on code-protected logical qubits also requires complex methods like magic state distillation and sophisticated engineering, which significantly increase hardware demands, making many of these protocols challenging for state-of-the-art physical platforms, including neutral atoms, trapped ions, and superconducting circuits~\cite{PhysRevA.86.032324}.

Quantum purity amplification (QPA), building on previous quantum state purification~\cite{CEM99,KW01,CFLL23}, error symmetrization~\cite{P99}, or stabilization~\cite{B97} protocols, offers a potential alternative to FTQC and QEC for NISQ applications. 
In contrast to QEC, which requires full knowledge of the computational process, QPA is process-agnostic and only assumes access to the output states. By consuming multiple copies of noise-corrupted states, QPA produces a state with higher purity. Therefore, it is particularly useful when rerunning the algorithm is straightforward and the results can be stored in quantum memory, but achieving high fidelity in the output is challenging. In fact, in most practical applications, having access to multiple copies of a resource state is entirely realistic. For those tasks, QPA may achieve outcomes similar to those of QEC, potentially using fewer resources. Compared to quantum error mitigation, which uses classical post-processing to extrapolate a noiseless scenario, or virtual state purification~\cite{HMOL20,Koczor21} which improves expectation value estimation of a certain operator, QPA directly outputs quantum states, making it an ideal subroutine for other quantum information tasks, particularly in quantum simulation~\cite{GAN14}, sensing~\cite{DRC17}, quantum state preparation, quantum cryptography~\cite{Gisin2002}, and quantum machine learning~\cite{Biamonte2017}.

The study of related protocols began around two decades ago, yet two key issues remain unsolved. First, from a theoretical standpoint, previous work has focused primarily on processing states with simple symmetries, such as single-qubit states and coherent states~\cite{KW01, CEM99, AFFJ05}, due to the challenges stemming from optimization problems that lack inherent symmetry. Single-qubit protocols are not capable of handling correlations in many-body systems, so applying them independently to each register is suboptimal. The protocol for coherent states in Ref.~\cite{AFFJ05} is also restricted, as it is unsuitable for tasks requiring superpositions essential for quantum information processing. Although some work has explored extensions to higher-dimensional systems, these methods do not account for all symmetries, leading to either suboptimal sample complexity or imperfect yield, as seen in streaming protocols~\cite{F16, CFLL23} and probabilistic protocols~\cite{YCH24, lindner2023optical, yang2024quantum}. Second, on the practical side, efforts toward implementation in realistic quantum systems remain limited. Although depolarizing noise has served as a useful starting point, little attention has been devoted to studying general noisy inputs~\cite{grier2025streaming}, or to benchmarking at the circuit level. Moreover, even for single-qubit protocols, its efficient implementation remains largely unexplored, with little guidance on gate complexity. Experimental demonstrations have confirmed the feasibility of such methods, though their scope remains limited to depolarized single-qubit inputs~\cite{RMCF04}. 

To address these problems, we develop QPA, an efficient and sample-optimal algorithm applicable to general quantum systems with generic noise. On the theoretical side, we resolve the mathematical difficulty of optimizing QPA by converting the problem into a more symmetric form, leveraging recent advancements in combinatorics~\cite{S16}, eventually leading to a simple three-step interpretation of our protocol, consisting of Schur sampling, correction, and trace-out. We also derive the optimal sample complexity in terms of infidelity for generic inputs, extending beyond depolarizing noise. The theory behind our protocol is based on $d$-dimensional states (qudits), providing a structured mathematical framework. To ensure practical applicability, it can be easily converted to \(k=\lceil \log(d) \rceil\) qubit systems, ensuring feasibility on standard qubit hardware. We present an efficient implementation of gate complexity \( \text{poly}(n,\log d) \) based on generalized quantum phase estimation (GQPE). For near-term applications, we introduce the lightweight SWAPNET algorithm optimized for low depth and gate efficiency. 
The robustness of QPA is demonstrated through simulations with realistic noise models, showing improved fidelity in Hamiltonian simulation and adiabatic evolution aligning with theoretical results. Experiments on superconducting quantum processors further validate QPA's effectiveness in state preparation.

In~\cref{sec:setup}, we formalize the problem of finding the optimal QPA protocol as an optimization task, and in~\cref{sec:optimal}, we present its solution, which leads to the operational interpretation. In~\cref{sec:implementation}, we present an efficient implementation and introduce the SWAPNET circuit. Finally, in~\cref{sec:applications}, we present the results of numerical studies demonstrating its
application in Hamiltonian simulations, adiabatic evolutions, and state preparation on IBM superconducting quantum processors.

\section{Problem Setup}
\label{sec:setup}
\begin{figure}
\includegraphics[scale=0.2]{"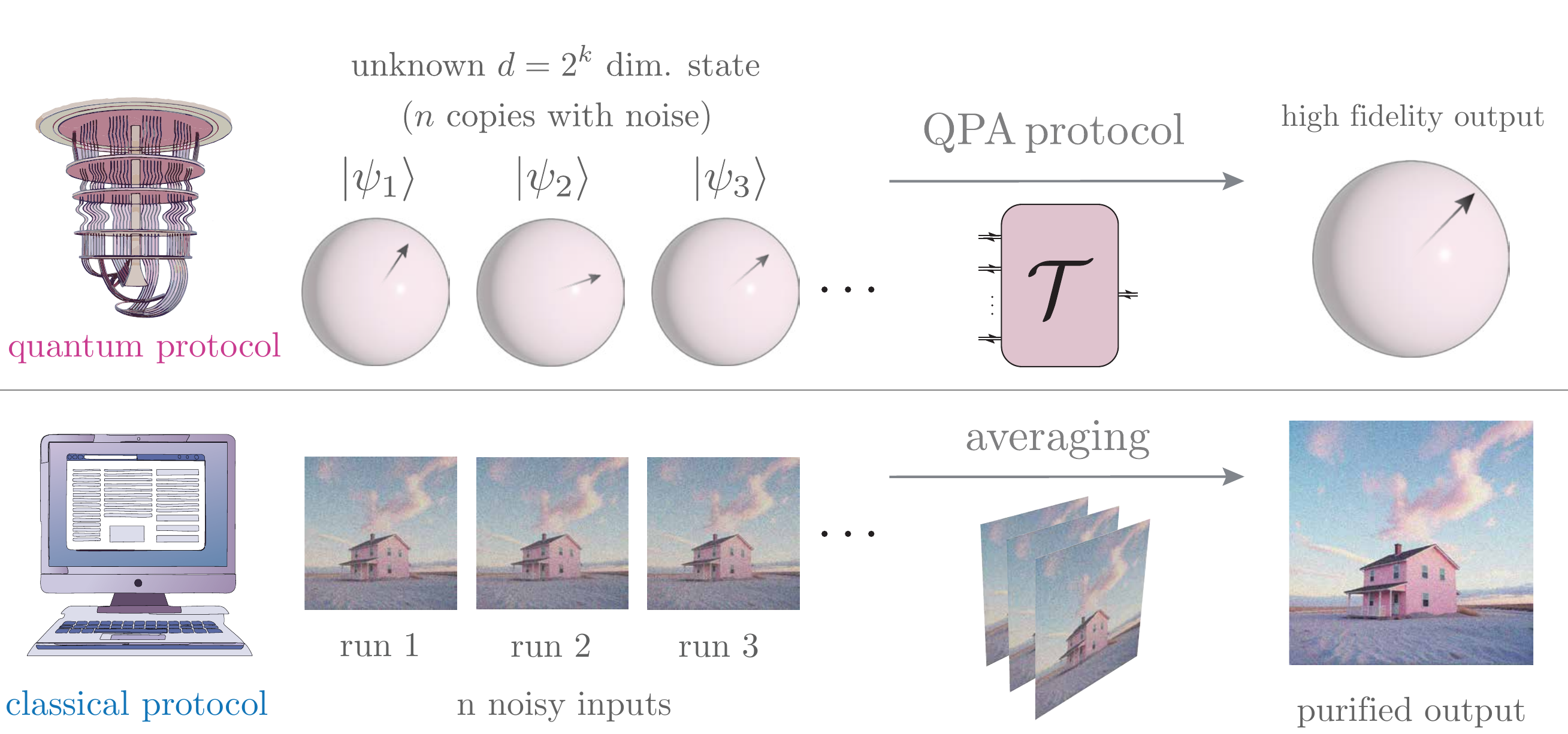"}
\caption{Classical analogy illustrating the QPA protocol $\mathcal{T}$ applied to noisy quantum states. Each of the $n$ copies represents a $d$-dimensional many-body state with low fidelity.
}
\label{fig:illustration}
\end{figure}

Consider a procedure $\mathcal{P}$, a digital or analog algorithm, designed to prepare an unknown quantum state $\ket{\psi}$. However, due to intrinsic noise in quantum operations, the actual output is not the ideal state $\ket{\psi}$ but instead a perturbed state $\ket{\psi_i}$ varying with each run, as illustrated in~\cref{fig:illustration}. To quantify our ignorance, we represent the stochastically corrupted state as a mixed state, $\rho$. Just as overlaying several blurry images sharpens a picture, we apply the QPA protocol $\mathcal{T}$ after multiple runs of $\mathcal{P}$ to synthesize copies of $\rho$. This transforms the inputs which initially lie within a high-dimensional Bloch sphere into a state closer to a target pure state $\sigma = \ketbra{\psi}{\psi}$ on the boundary, which can then be used as a resource for further computations. 

The optimality of a QPA protocol is determined by its ability to recover corrupted states. Fidelity, which compares the output state \(\rho\) to the target pure state \(\sigma\), quantifies this recovery.  To study this, we introduce the following definition:

\begin{definition}[QPA Protocol]  
Let \(\mathcal{R} \subseteq \mathcal{B}(W)\) be a compact set of noisy input states, where \(W = \mathbb{C}^d\). For each \(\rho \in \mathcal{R}\), let \(\sigma\) denote the corresponding ideal pure state. An \((n, \delta)\)-QPA protocol for \(\mathcal{R}\) and $\sigma(\rho)$ is a quantum channel  
\(\mathcal{T}: \mathcal{B}(W^{\otimes n}) \to \mathcal{B}(W)\)  
that takes \(n\) copies of \(\rho\) as input and produces an output state \(\rho'\) such that the fidelity satisfies  
\[
\mathcal{F}(\sigma, \rho') = \Tr(\sigma \rho') \geq 1 - \delta, \quad \forall \rho \in \mathcal{R}.
\]  \end{definition}
\noindent Previous protocols align with this definition, where \(\sigma\) corresponds to coherent states~\cite{AFFJ05} or graph states~\cite{KPDH06}.

As a starting point, we take $\mathcal{R}$ as the set of all depolarized $d$-dimensional pure states, though this condition will be lifted later. Specifically, for any pure state \(\sigma = |\psi\rangle\langle\psi|\) with \(|\psi\rangle \in W\), the depolarized state is given by \(\rho = \mathcal{D}_\lambda(\sigma) = (1 - \lambda)\sigma + \lambda\frac{\mathbb{I}_d}{d}\), where \(\lambda \in (0,1)\) is the depolarization strength. This implies that every \(\rho \in \mathcal{R}\) is associated with a unique pure state \(\sigma\). Taking into account all possible \(\sigma\), we arrive at the figure of merit given by the minimum fidelity~\cite{KW01} \(\min_{\ket{\psi} \in W} \Tr(\sigma \mathcal{T}(\rho^{\otimes n}))\). Importantly, this choice retains generality: twirling can transform any noisy channels into a depolarizing form~\cite{leditzky2017useful}. Moreover, the optimal protocol, defined this way, also optimally amplifies purity toward the principal eigenstate of generic mixed states and applies in the presence of circuit-level noise, as demonstrated later in the paper.

Due to the unitary invariance of the figure of merit, the expression can be reduced to a linear form by averaging \(\sigma\) over a Haar-random prior on \(W\). This allows us to cast the determination of the optimal QPA protocol as an SDP (semidefinite program)~\cite{CGW21}, in which we optimize over $T$, the Choi matrix of the channel $\mathcal{T}$. In particular, $T$ must be normalized (i.e., tracing out the output register yields the identity) and positive semi-definite. We also rewrite the figure of merit using the cost matrix \(C_{\mathrm{out},\mathrm{in}} = \int \sigma^\top_{\mathrm{out}} \otimes \rho^{\otimes n}_{\mathrm{in}}\).
The input and output registers are explicitly labeled as ``in'' and ``out'', though these labels will be omitted when the context is clear. The resulting SDP is:
\begin{equation}
    \label{eq:sdp}
    \begin{aligned}
    \text{find a matrix} \quad & T,\\
\text{maximize} \quad & \mathrm{tr}(C^\top T),\\
\text{subject to} \quad & \mathrm{tr_{out}}T=\mathbb{I}_{\mathrm{in}},\\
& T \succeq 0.
\end{aligned}
\end{equation}

\section{The optimal QPA protocol}
\label{sec:optimal}
We construct the optimal QPA protocol with reversibility in mind, ensuring that information is preserved as much as possible, and discarded only when necessary. The protocol follows three steps: First, we perform Schur sampling to project the input states onto irreducible representations (irreps), which correspond to sectors of different symmetries. Next, unlike previous work that discards certain ``bad states," our approach retains all data by processing every mixed symmetry sector, applying a correction to relocate the best-symmetrized state to the last register. Finally, since only one state can be output, the excess registers are discarded, leaving a purified output state with improved fidelity. This process is illustrated in~\cref{fig:flowchart}.

\begin{figure}[H]
\centering
\includegraphics[scale=0.28]{"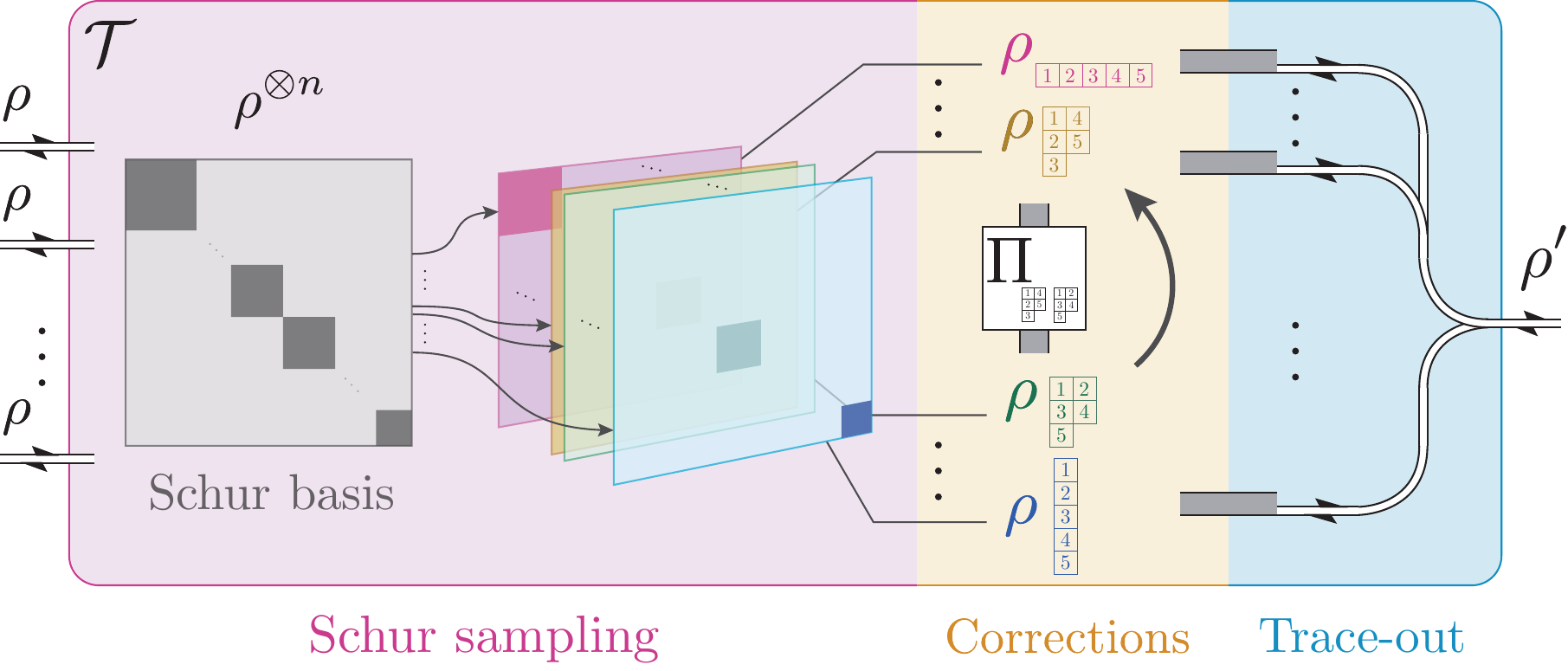"}
\caption{Illustration of the channel $\mathcal{T}$ for the optimal QPA protocol with $n=5$ and $d>5$. The procedure consists of three steps: \textbf{Step 1: Schur sampling.} The state $\rho^{\otimes n}$ attains a block diagonal form (gray) under the Schur basis. The first step involves projections onto irrep subspaces, each indicated by layers in different colors. Here, only four layers are shown for simplicity. \textbf{Step 2: Corrections.} Corrections are applied to arrange the irreps into the column-ordered form. \textbf{Step 3: Trace-out.} Finally, the last register is output as the state $\rho^\prime = \mathcal{T}(\rho^{\otimes 5})$ while all other registers are discarded.}
\label{fig:flowchart}
\end{figure}

In this section, we first analyze the symmetries of the optimal QPA protocol in~\cref{subsec:symmetry} and use them to simplify the SDP. We then define the optimal protocol in~\cref{subsec:optimality} based on the solution of the optimization. In~\cref{subsec:operational}, we relate this solution to the three-step formulation and study its sample complexity scaling.

\subsection{Symmetries of the Protocol}
\label{subsec:symmetry}
Symmetries play a key role in the first two steps of the protocol, specifically the exchange (or permutation) symmetry between the $n$ input registers and the global unitary equivalence symmetry~\cite{F16,BLM+22}, which acts on both input and output registers.

To study those symmetries, we need to introduce Young diagrams (YD), which are finite sequence of natural numbers in decreasing order, and can be represented graphically, for example, $\left[3,2\right]$ is represented as $\Yvcentermath1\scriptsize\yng(3,2)$. From Schur-Weyl duality, we know that under the (active) Schur transform $U_{\mathrm{Sch}}$, the input decomposes into a direct sum of irrep blocks, with each block uniquely labeled by a YD. These YDs are denoted by Greek letters in square brackets, such as $\yd{\varsigma}$ here:
\begin{equation} 
\label{equ:SW}
\rho^{\otimes n}=U_{\mathrm{Sch}}^\dag\left(\bigoplus_{\yd{\varsigma}\vdash n}\rho^{\yd{\varsigma}}\otimes\id_{g^{\yd{\varsigma}}}\right) U_{\mathrm{Sch}},
\end{equation} 
where $\yd{\varsigma}\vdash n$ indicates a valid YD with $n$ boxes and $g^{\yd{\varsigma}}$ refers to the multiplicity of the irrep associated with it. 

Similarly, the cost matrix $C_{\mathrm{out},\mathrm{in}}$ decomposes into a direct sum of $C^{\yd{\varsigma}}_{\mathrm{out},\mathrm{in}}$. Since random permutations on \( T \) can be absorbed into \( C \), we can, without loss of generality, average over them. 
As a result, \( T \) also decomposes accordingly into \(T^{\yd{\varsigma}}\)~\cite{BLM+22}.

Operationally, the block-diagonal form of the input can be viewed as a probability distribution over different irreps, corresponding to various outcomes of Schur sampling. Consequently, the initial step of the optimal QPA protocol always involves Schur sampling~\cite{CM23}. Following this, the respective branch of the optimal QPA protocol, $\mathcal{T}^{\yd{\varsigma}}$, corresponding to the Choi matrix $T^{\yd{\varsigma}}$, is applied to each outcome $\rho^{\yd{\varsigma}}$.
The optimal $T^{\yd{\varsigma}}$ can be obtained from an SDP (see~\cref{equ:subSDP}) similar to~\cref{eq:sdp}.

Additionally, $C^{\yd{\varsigma}}$ exhibits unitary covariance symmetry: it remains unchanged when the input register is transformed according to $\yd{\varsigma}$, and the output register is transformed according to the anti-fundamental representation, denoted by $\overline{\square}$. In other words, $\left[U^\ast_{\mathrm{out}}\otimes U^{\yd{\varsigma}}_{\mathrm{in}},C^{\yd{\varsigma}} \right]=0$ holds for all $U$ of $SU(d)$. Applying the unitaries $U^{\yd{\varsigma}}_{\mathrm CG}$ correspond to the dual Clebsch-Gordan (CG) transform, Schur's Lemma allows us to further decompose $C^{\yd{\varsigma}}$ into a direct sum of constant multiple of identities $\id^{\yd{\mu_i}}$ on the different blocks labeled by $\yd{\mu_i}$. We denote these coefficients by $c_i$:
\begin{align}
    U_{\mathrm{CG}}^{\yd{\varsigma}} C^{\yd{\varsigma}} U_{\mathrm{CG}}^{{\yd{\varsigma}}\dagger} =\bigoplus_{\text{feasible}\,i} C^{\yd{\mu_i}},\text{ and }
    C^{\yd{\mu_i}}=c_i \id^{\yd{\mu_i}}.
    \label{equ:DCG}
\end{align}
In the equation above, $\yd{\mu_i}$ are new sequences obtained by removing the last box on the $i$-th row. If this results in a valid YD, we say that $i$ is feasible. For instance, when $\yd{\varsigma}=\Yvcentermath1\scriptsize\yng(2,2)$, removing a box from the $i=2$-th row gives rise to the valid configuration  $\scriptsize\Yvcentermath1\yng(2,1)$ but removing from the $i=1$-st row does not. Since the Choi Matrix $T^{\yd{\varsigma}}$ shares the same symmetry as $C^{\yd{\varsigma}}$ again by averaging, such decomposition also applies and we denote its constant coefficients by $t_i$. Therefore, the SDP in~\cref{eq:sdp} reduces to a linear program (LP) in which $c_i$ are the coefficients and the parameters $t_i$ are to be optimized~\cite{GO23}.

\subsection{Proof of Optimality}
\label{subsec:optimality}
Next, we provide an explicit solution to this LP, thus characterizing \( T \) by the following~\cref{thm:choi}. We solve the problem by reducing it to a more symmetrized form and leveraging advanced combinatorial results, overcoming challenges unresolved by previous attempts~\cite{GO23}.
\begin{theorem}
    \label{thm:choi}
    The Choi matrix of the optimal QPA protocol is given by $T=U_{\mathrm{mSch}}^\dag\bigoplus_{\yd{\varsigma}}\frac{d^{\yd{\varsigma}}}{d^{\yd{\mu_{i^\ast}}}}\Pi^{\yd{\mu_{i^\ast}}}\otimes \mathbb{I}_{g^{\yd{\varsigma}}}U_{\mathrm{mSch}}$, where $i^\ast$ is the smallest feasible row index, and $
\Pi^{\yd{\mu_{i}}} = \bigoplus_{j < i} \mathbf{0}^{\yd{\mu_j}} \oplus \id^{\yd{\mu_{i}}} \oplus \bigoplus_{j > i} \mathbf{0}^{\yd{\mu_j}}
$ is the projector onto the block corresponding to the irrep $\yd{\mu_{i}}$.
\end{theorem}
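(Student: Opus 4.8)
The plan is to make the linear program from \cref{subsec:symmetry} fully explicit, solve it with a one-line simplex argument, and then pin down its optimal vertex through a combinatorial comparison of the coefficients $c_i$. First I would spell out the reduced program. The covariance $[\,U^\ast_{\mathrm{out}}\otimes U^{\yd{\varsigma}}_{\mathrm{in}},\,C^{\yd{\varsigma}}\,]=0$ and the identical relation for $T^{\yd{\varsigma}}$ give, via Schur's lemma, $C^{\yd{\varsigma}}=\bigoplus_i c_i\,\id^{\yd{\mu_i}}$ and $T^{\yd{\varsigma}}=\bigoplus_i t_i\,\id^{\yd{\mu_i}}$; hence $T\succeq 0$ reduces to $t_i\ge 0$, and since $\mathrm{tr_{out}}T^{\yd{\varsigma}}$ commutes with every $U^{\yd{\varsigma}}$, Schur's lemma collapses the constraint $\mathrm{tr_{out}}T=\mathbb{I}_{\mathrm{in}}$, block by block, to the single scalar equation $\sum_{i\ \mathrm{feasible}} t_i\,d^{\yd{\mu_i}}=d^{\yd{\varsigma}}$, with objective $\sum_{\yd{\varsigma}} g^{\yd{\varsigma}}\sum_i c_i\,t_i\,d^{\yd{\mu_i}}$ (the transpose in $\Tr(C^\top T)$ being harmless since $C$ and $T$ share the same covariance). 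That restricting $T$ to this symmetric form costs nothing is the usual twirl argument: the feasible set is convex and symmetry-invariant and the objective is linear, so averaging a feasible $T$ over the symmetry group keeps it feasible without decreasing the objective.

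The program now decouples over $\yd{\varsigma}$, and writing $w_i:=t_i\,d^{\yd{\mu_i}}\ge 0$ each block reads: maximize $\sum_i c_i w_i$ on the simplex $\{\,w_i\ge 0,\ \sum_i w_i = d^{\yd{\varsigma}}\,\}$. A linear functional on a simplex is maximized at a vertex, so the optimum puts the entire budget $d^{\yd{\varsigma}}$ on one index --- the index maximizing $c_i$ --- which gives $t_i = d^{\yd{\varsigma}}/d^{\yd{\mu_i}}$ there and $0$ elsewhere, i.e.\ $T^{\yd{\varsigma}}=\tfrac{d^{\yd{\varsigma}}}{d^{\yd{\mu_i}}}\Pi^{\yd{\mu_i}}$. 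The theorem is therefore equivalent to the statement that the maximizing index is always $i^\ast$, the smallest feasible row.

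To prove this I would first write $c_i$ in closed form. Substituting $\rho=\mathcal{D}_\lambda(\sigma)$ into $C^{\yd{\varsigma}}$, expanding, and carrying out the Haar average over $\sigma$ identifies each block coefficient (up to a positive factor independent of $i$) with the purity-weighted overlap of $\sigma$ with the $W$-register output by the branch that isometrically embeds $V^{\yd{\varsigma}}\hookrightarrow V^{\yd{\mu_i}}\otimes W$ and traces out $V^{\yd{\mu_i}}$; Schur's lemma then presents $c_i$ as an explicit, $\lambda$-dependent combination of Weyl dimensions of Young diagrams obtained from $\yd{\varsigma}$ by box moves, with nonnegative weights built from Clebsch--Gordan data. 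The main obstacle is the ensuing monotonicity statement, $c_{i^\ast}\ge c_i$ for every feasible $i$ and \emph{every} $\lambda\in(0,1)$: a soft argument will not do, because the ordering must hold uniformly in $\lambda$. The intended route is to exhibit $c_i$ as a nonnegative combination --- with $\lambda$-dependent but $i$-independent weights --- of combinatorial quantities each individually dominated at $i^\ast$, and to establish those individual inequalities (statements about counts of skew semistandard tableaux, equivalently about ratios of Weyl dimensions under box removal and about which box removal maximizes the leading Clebsch--Gordan weight) by invoking the combinatorial result of~\cite{S16}. This last reduction carries essentially all the content; everything else is formal.

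Finally I would reassemble. Identifying $U_{\mathrm{mSch}}$ as the transform that simultaneously performs Schur sampling and the dual Clebsch--Gordan decomposition --- so that it block-diagonalizes in the labels $(\yd{\varsigma};\yd{\mu_i})$ while splitting off the $S_n$-multiplicity $\mathbb{I}_{g^{\yd{\varsigma}}}$ --- the per-block solutions assemble into $T=U_{\mathrm{mSch}}^\dag\bigoplus_{\yd{\varsigma}}\tfrac{d^{\yd{\varsigma}}}{d^{\yd{\mu_{i^\ast}}}}\,\Pi^{\yd{\mu_{i^\ast}}}\otimes\mathbb{I}_{g^{\yd{\varsigma}}}\,U_{\mathrm{mSch}}$, the claimed expression. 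It then remains to check admissibility directly: this $T$ is manifestly positive semidefinite, and since $\Tr\Pi^{\yd{\mu_{i^\ast}}}=d^{\yd{\mu_{i^\ast}}}$ the $\yd{\varsigma}$ block of $\mathrm{tr_{out}}T$ has trace $d^{\yd{\varsigma}}$, hence equals $\mathbb{I}_{V^{\yd{\varsigma}}}$ by covariance and Schur's lemma; combined with the vertex computation this certifies optimality for the original SDP. (If $c_i$ happens to be maximized at more than one index at some exceptional $\lambda$, any maximizer gives the same optimal value and $i^\ast$ is the canonical one; generically the maximizer is unique.)
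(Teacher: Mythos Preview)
Your reduction to the LP and the simplex argument are correct and match the paper exactly. The gap is in the crucial step: showing $c_{i^\ast}\ge c_i$ for every feasible $i$ and every $\lambda\in(0,1)$. You describe the plan as ``exhibit $c_i$ as a nonnegative combination --- with $\lambda$-dependent but $i$-independent weights --- of combinatorial quantities each individually dominated at $i^\ast$,'' but you never say what those quantities are, nor how nonnegativity of the weights or the pointwise domination would follow. This is not a detail one can fill in mechanically: a direct Haar expansion of $C^{\yd{\varsigma}}=\int\sigma^\top\otimes\rho^{\yd{\varsigma}}$ produces Weingarten-type sums whose sign structure is not transparent, and it is not clear the pieces align with the normalized-Schur-polynomial inequalities of~\cite{S16}. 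You yourself flag that ``this last reduction carries essentially all the content''; as written it is an assertion, not an argument.

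The paper closes this gap with a specific algebraic trick you are missing. Because $\rho=\mathcal{D}_\lambda(\sigma)$ is invertible with only two distinct eigenvalues, one can write $\sigma=A(-\rho^{-1})+B\,\mathbb{I}_d/d$ for explicit $A,B>0$, and replace the cost matrix by $C^{\prime\yd{\varsigma}}=-\int(\rho^\top)^{-1}\otimes\rho^{\yd{\varsigma}}$. A short projection lemma then shows that $C^{\yd{\varsigma}}$ and $C^{\prime\yd{\varsigma}}$ differ only by a component orthogonal to the affine constraint hyperplane $\mathrm{tr_{out}}(\cdot)=\mathbb{I}^{\yd{\varsigma}}/d^{\yd{\varsigma}}$, so they yield the \emph{same} LP optimizer. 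The point of the substitution is that $(\rho^\top)^{-1}\otimes\rho^{\yd{\varsigma}}$ is a rational representation of $\rho$, whose block traces are Schur polynomials on the nose: $c'_i=-S^{\yd{\mu_i}}(\lambda/d,\dots,\lambda/d,\,1-\lambda+\lambda/d)$. Now~\cite{S16} applies directly, since $\yd{\mu_i}\preceq\yd{\mu_j}$ in the majorization order whenever $i\le j$, giving $c'_{i^\ast}\ge c'_i$ immediately. Your proposal has the right scaffolding and correctly identifies~\cite{S16} as the endgame, but without this substitution-and-projection step (or a concrete alternative that actually lands on normalized Schur polynomials) the monotonicity claim is unsupported.
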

Here, we have combined the two transforms together, resulting in the mixed Schur transform~\cite{Ngu23}: \begin{equation}
U_{\mathrm{mSch}}=\left(\bigoplus_{\yd{\varsigma}}U_{\mathrm{CG}}^{\yd{\varsigma}}\otimes \mathbb{I}_{g^{\yd{\varsigma}}}\right) \left(\id_\mathrm{out} \otimes U_{\mathrm{Sch}}\right).
\label{equ:Msch}
\end{equation}

This is to say, the Choi matrix of every branch of the optimal QPA protocol is proportional to a projector on to a certain irrep of the overall symmetry of the system. Additionally, the irrep is always obtained by removing the bottom-right box from $\yd{\varsigma}$. For instance, if $\yd{\varsigma}=\scriptsize\Yvcentermath1\yng(2,2,1)$, the $(2,2)$ box will be removed and $i^\ast=2$, resulting in $\yd{\mu_{i^\ast}}=\scriptsize\Yvcentermath1\yng(2,1,1)$. 

The proof is structured as follows, with the complete proof presented in Supplemental ~\cref{subsec:optimality_supp}: Using a projection argument, we relate $C^{\yd{\varsigma}}$ to a more symmetric operator $C^{\prime\yd{\varsigma}}$, which is a mixed-tensorial representation of $\rho$ averaged over the Haar measure. The new LP coefficients are given by $c_i^\prime$, which are proportional to the normalized Schur polynomial $-S^{\yd{\mu_i}}(\lambda/d, \ldots, \lambda/d, 1 - \lambda + \lambda/d)$ on $d$ variables corresponding to the $d$ eigenvalues of $\rho$. These polynomials are ordered according to the majorization relation between YDs~\cite{S16}. Specifically, $\yd{\mu}$ majorizes $\yd{\nu}$ when for $k \in \{1, \ldots, n\}$, $\sum_{a=1}^k \mu_a \geq \sum_{a=1}^k \nu_a$. Thus, we solve the LP by choosing the YD $\yd{\mu_{i^\ast}}$ corresponding to the largest coefficient $c^\prime_{i^\ast}$. 

With the optimal QPA channel, the figure of merit reaches its optimal value, from which we derive the corresponding optimal sample complexity.

\begin{theorem}
    \label{thm:optimal}
    Asymptotically, the optimal sample complexity required to produce an output with infidelity \(\delta = 1 - \mathcal{F} \) is  
    \[
    n = \frac{1}{\delta} \left( 1 - \frac{1}{d} \right) \frac{\lambda}{(1-\lambda)^2} + O\left(\log\left(\delta^{-1}\right)\right).
    \]
\end{theorem}

In Supplementary~\cref{subsec:fidelity_supp}, we prove this by leveraging the concentration of the Schur sampling probability distribution. For qubits and qutrits, our result matches the known expression~\cite{CEM99,F16}. 
Compared to the protocol proposed in Ref.~\cite{CFLL23}, the sample complexity exhibits an exponential reduction for $\lambda \geq \frac{1}{2}$ and maintains a similar asymptotic scaling in $n$ while achieving improved scaling in $d$ for $\lambda < \frac{1}{2}$.

Our protocol remains asymptotically optimal for generic noisy inputs. Specifically, for any quantum state \( \rho \) with a non-degenerate principal eigenstate, whose eigenvalues satisfy \( p_d > p_{d-1} \geq \cdots \geq p_1 \) without loss of generality, our protocol effectively concentrates the state onto its principal eigenstate, \( \sigma = |\psi_d\rangle\langle\psi_d| \). The scaling of the sample complexity is summarized in the following theorem, which is proven in ~\cref{subsec:gen_inputs}.

\begin{theorem}
    \label{thm:genericoptimal}
    For generic quantum states, the optimal sample complexity required to output a quantum state with infidelity $\delta$
    is given by the asymptotic expression:
    $$n=\frac{1}{\delta}\sum_{i=1}^{d-1}\frac{p_i}{(p_d-p_i)^2}+O(1).$$
\end{theorem}
\noindent 
A similar result was independently shown for the streaming protocol~\cite{grier2025streaming}.  In comparison, our protocol maintains optimal sample complexity, with asymptotic behavior similar to the depolarizing case. We supply the proof in Supplementary~\cref{subsec:gen_inputs}.

\subsection{Operational Interpretation}
\label{subsec:operational}
Now, we relate the Choi matrix $T$ to the three-step protocol previously introduced, to provide a clearer understanding of how $\mathcal{T}$ acts on the input states. To achieve this, we need to introduce some additional mathematical background. Young Tableaux (YTs) are number-filled YDs such that entries in each row and each column are increasing, such as $\Yvcentermath1\scriptsize\young(134,25)$. We label YTs with parenthesized letters, such as $\yt{s}$~\footnote{When referring to a box in a YD, we use the ``matrix'' indexing, starting from top to bottom, then left to right.}. YTs can be used to label the degenerate blocks arising from Schur-Weyl duality, as defined in~\cref{equ:SW}. We use $V^{\yt{s}}$ to denote the irrep subspace of $V^{\otimes n}$ associated with the YT $\yt{s}$. Moreover, for any two YTs, $\yt{m}$ and $\yt{n}$, corresponding to the same YD $\yd{\varsigma}$, there exists a transition operator \( \Pi^{\;\;\yd{\varsigma}}_{\yt{n}\yt{m}} \). That maps the subspace labeled by \( \yt{m} \) isomorphically onto that of \( \yt{n} \)~\cite{Alc18}.

Thus, Step 1 can alternatively be formulated as a projection onto the subspace of $\yd{\varsigma}$,  
\begin{equation}  
\rho^{\otimes n}=\sum_{\yd{\varsigma}\vdash n}\sum_{\yt{s}\vdash\yd{\varsigma}}\rho_{\yt{s}},
\end{equation}  
and Step 2 applies correction operators to transform the irrep's YT into the column-ordered form, as defined in~\cite{AW17}:  
\begin{definition}[Column-ordered YT]  
    A YT is column-ordered if it is filled column-wise from left to right. Given a YD, its corresponding column-ordered YT is denoted with the superscript~$\diamond$.  
\end{definition}
\noindent For instance, consider the YD $\scriptsize\Yvcentermath1\yng(3,2)$. Its column-ordered form is $\left.\scriptsize\Yvcentermath1\yng(3,2)\right.^\diamond=\scriptsize\Yvcentermath1\young(135,24)$.
The alignment follows from the intuition that, for a given Schur sampling outcome, the different registers are not equally symmetrized. In a YT, rows correspond to symmetrization, and columns correspond to antisymmetrization, making the register associated with the bottom-right box the most symmetrized. Thus, the goal is to move the state in this register to the $n$-th register and trace out all other registers. For a detailed proof showing the equivalence of the operational interpretation to the Choi matrix in~\cref{subsec:optimality}, we refer the readers to 
Supplemental~\cref{subsec:operational_supp}. To summarize, the optimal QPA protocol consists of the following three steps:

\begin{enumerate}[label=, leftmargin=0.5em, labelwidth=!,align=left]
    \item \textbf{Step 1: Schur sampling}\\ \hspace*{1.5em} Measure the inputs' symmetry configuration \( \yd{\varsigma} \).
    \item \textbf{Step 2: Corrections}\\ \hspace*{1.5em} Apply a correction \( \Pi^{\;\;\;\yd{\varsigma}}_{\yt{\varsigma^\diamond}\yt{s}} \) conditioned on \( \yd{\varsigma} \). 
    \item \textbf{Step 3: Trace-out} \\\hspace*{1.5em} Return the last register.
\end{enumerate}

\noindent Finally, note that when $d=2$, our optimal QPA protocol reduces to the previously studied optimal protocol for qubits~\cite{CEM99,KW01}.

\section{Implementation}
\label{sec:implementation}
\subsection{Efficient Algorithm}

To practically implement the three-step QPA protocol, we present an efficient algorithm in~\cref{alg:efficient_qpa}, with the corresponding circuit in~\cref{fig:GQPE_QPA}. Note that efficiency is characterized by gate complexity scaling polynomially in \( n \) and logarithmically in \( d \), with every qudit implementable with \(\ceil{\log(d)}\) qubits. 

Intuitively, the algorithm utilizes another control register (labeled ``ctrl") in addition to the data register (labeled ``data") that holds the states being processed. Since we aim to manipulate the symmetry sectors of the data without directly operating on the \( d^n \)-dimensional data register (which would potentially incur a $\text{poly}(d)$ gate cost), we employ GQPE as a ``probe." The GQPE entangles with the data register while transferring its symmetry information into the ctrl register. In Step 1, Schur sampling is achieved by indirectly measuring the YD information in the ctrl register. Subsequently, in Step 2, we reset the ctrl register to the computational basis state encoding the column-ordered YD. Then, we uncompute the GQPE, injecting the updated corrections back into the data register. The correctness of the algorithm is demonstrated in Supplementary~\cref{sec:GQPE-QPA}.

\begin{algorithm}[H]
\caption{Efficient implementation}
\label{alg:efficient_qpa}
\textbf{Registers:} Quantum data register, labeled as ``data", consisting of $n$ qudit sub-registers \( q_1, q_2, \dots, q_n \); quantum ancillae register, labeled as ``ctrl''; classical register \( A \) for storing measurement outcomes.\\
\textbf{Input:} $n$ qudits stored in data.\\
\textbf{Output:} A processed qudit.\\
\textbf{Runtime:} \( 4T_F + 2T_{C\!P}+T_\text{M-P}\), where \( T_F \) is the generalized quantum Fourier transform (GQFT) time, \( T_{C\!P} \) is the controlled permutation time, and \( T_\text{M-P} \) is the measure-and-prepare time.
\begin{algorithmic}[1]
\Statex \textbf{Step 1: Schur sampling}
\State Initialize \text{ctrl} in the trivial irrep state \( \ket{[n,0,\cdots,0]}_{\Lambda} \ket{(12\cdots n)}_L \ket{(12\cdots n)}_R \).\vspace{0.1cm}
\State Apply inverse GQFT \( F^{-1} \).
\State Apply controlled permutation \( C\!P_{\text{ctrl},\text{data}} \).
\State Apply GQFT \( F \).
\State Measure \( \Lambda \) to obtain \( \yd{\varsigma} \) and store result in $A$.

\Statex \textbf{Step 2: Corrections}
\State Reinitialize the $R$ register to \( \ket{(\varsigma^\diamond)}_R\) with ``Prep''.
\State Apply inverse GQFT \( F^{-1} \).
\State Apply inverse controlled permutation \( C\!P^{-1}_{\text{ctrl},\text{data}} \).
\State Apply GQFT \( F \).
\Statex \textbf{Step 3: Trace-out}
\State \Return \( q_n \).
\end{algorithmic}
\end{algorithm}

In terms of gate complexity, the primary dependence on \( d \) arises from the controlled permutations, which scale logarithmically with \( d \)~\cite{H05}. Since GQFT can be implemented in \( \text{poly}(n)\) gates~\cite{Bea97}, GQPE also requires \( \text{poly}(n,\log d) \) gates. The measure-and-prepare step consists of measuring only the \( \Lambda \) register and preparing a specific computational basis state in the \( R \) register, requiring at most \( \text{poly}(n) \) gates. Therefore, the overall gate complexity is \( \text{poly}(n,\log d) \), making it an efficient implementation.

\begin{figure}[H]
\centering
\includegraphics[scale=0.19]{"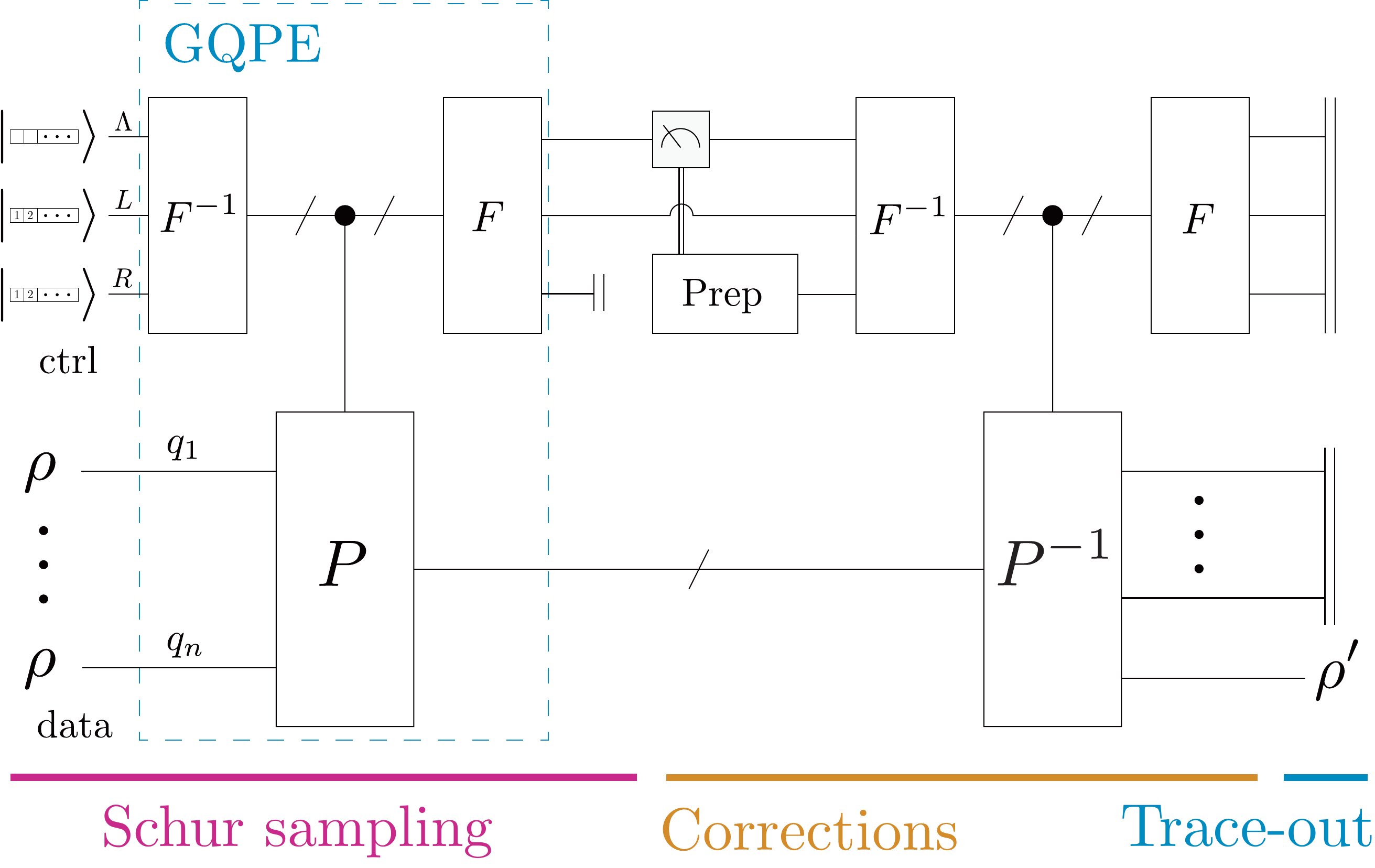"}
\caption{Circuit diagram of~\cref{alg:efficient_qpa}. \textbf{Step 1: Schur sampling} performs weak Schur sampling using GQPE. \textbf{Step 2: Corrections} are done using inverse GQPE. \textbf{Step 3: Trace-out} is performed at the end of the process, the final register with the state $\rho^\prime$ is returned, while all other registers are discarded.}
\label{fig:GQPE_QPA}
\end{figure}

\subsection{SWAPNET}
For NISQ experiments, where the ability to implement deep circuits remains challenging, we developed a low-depth circuit, SWAPNET, which bypasses the need for complex gate compilation. The algorithm is presented in~\cref{alg:SWAPNET}, and its circuit diagram is shown in~\cref{fig:swapnet}a.
\begin{algorithm}[H]
\caption{SWAPNET}
\textbf{Registers:} Quantum data register consisting of three (effective) qudit sub-registers \( q_1, q_2, q_3 \); quantum control register consisting of single ancilla qubit \( q_0 \); classical register \( A \) for storing the measurement outcome $z$.\\
\textbf{Input:} Qudits stored in \( q_1, q_2, q_3 \), number of trials \( N_{\mathrm{trials}} \).\\
\textbf{Output:} A processed qudit.
\begin{algorithmic}[1]
\For{$N = 1$ \textbf{to} $N_{\mathrm{trials}}$}
    \State Initialize ancilla qubit \( q_0 \) in the state \( \ket{0} \).
    \State Perform SWAP test between qudits \( q_1 \) and \( q_2 \) using ancilla \( q_0 \).
    \State Measure ancilla and obtain outcome \( z \).
    \If{$z = 1$}
        \Return \( q_3 \).
    \Else
        \State Swap qudits \( q_2 \) and \( q_3 \).
    \EndIf
\EndFor\\
\Return \( q_3 \).
\end{algorithmic}
\label{alg:SWAPNET}
\end{algorithm}
\begin{figure*}
\begin{center}
\includegraphics[scale=0.11]{"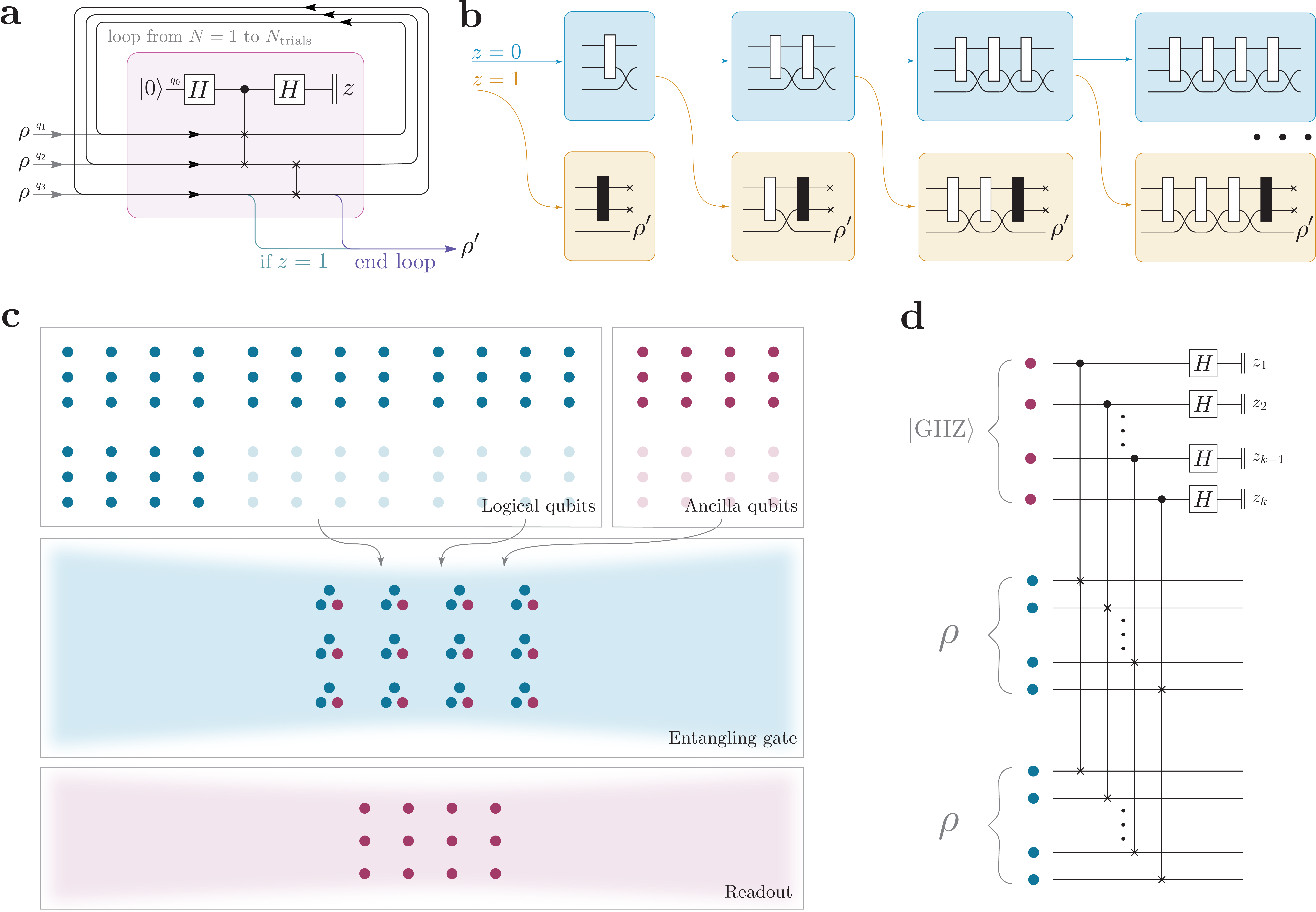"}\vspace{0.5cm}
\caption{{\bf a} Circuit diagram of the SWAPNET algorithm. The ancilla qubit is measured at the end, and based on the outcome $z$, the process either repeats or outputs the purified state $\rho^\prime$. If $z=1$ is never measured, the process terminates after $N_\mathrm{trials}$ iterations, yielding the purified state. {\bf b} Unrolling the circuit over time, the $z=0(1)$ outcomes of the SWAP tests amount to applying (anti)symmetrizers, represented by white(black) boxes. The outcomes, $\rho^\prime$, are labeled at the returned registers, with the traced-out registers crossed out. {\bf c} Physical layout for implementation with Rydberg atoms. Resource qubits (including both data qubits and ancillae) are initially stored in the uppermost zone. They are then transferred to the middle zone, where a 3 qubit CSWAP gate is performed using a global pulse. Afterwards, the ancillae qubits are moved to the lowermost zone for readout. {\bf d} Circuit of the logical CSWAP gate on 3 multiqubit states, achieved by applying transversal CSWAP gates to respective qubits.}
\label{fig:swapnet}
\end{center}
\vspace{-0.2cm}%
\end{figure*}
\noindent SWAPNET is designed to execute the optimal QPA protocol by interlacing a network of SWAP tests. In Supplementary~\cref{subsec:circuit}, we prove the correctness of the algorithm. Intuitively, it provides a reversible generalization of tree-structured SWAP test protocols, up to the trace-out step. 

Experimentally, because our protocol can be implemented transversally by using post-selected high-fidelity GHZ ancillae as a resource, it is particularly well-suited for Rydberg atom arrays~\cite{scholl2023erasure}. In Rydberg quantum simulators, strong interactions are used to model complex systems, with information digitally encoded in the hyperfine levels of the atoms. CSWAP gates can be implemented natively using techniques such as Rydberg anti-blockade or Rydberg pumping~\cite{WWHF+21, SCMW+24}, or with a sequence of CZ and H gates, as outlined in Ref.~\cite{liu2022multi}.
As shown in~\cref{fig:swapnet}c, we follow the zoned architecture of the experimental setup~\cite{BEGL24}. In this setup, suppose the subroutine $\mathcal{P}$ prepares copies of $k$-qubit data states. We first run the subroutine $\mathcal{P}$ in either digital or analog mode, then switch the Rydberg system to its analog mode for subsequent operations. In the entangling zone, the data qubits and the ancillae are brought together and global laser pulses are used to implement the CSWAP gates, as shown in the circuit diagram in~\cref{fig:swapnet}d. Afterward, the ancillae are moved to the read-out zone for measurement. After the process, the data qubits with higher fidelity can be used for subsequent tasks.

\section{Practical Applications}
\label{sec:applications}

\begin{figure*}
\begin{center}
\includegraphics[scale=0.13]{"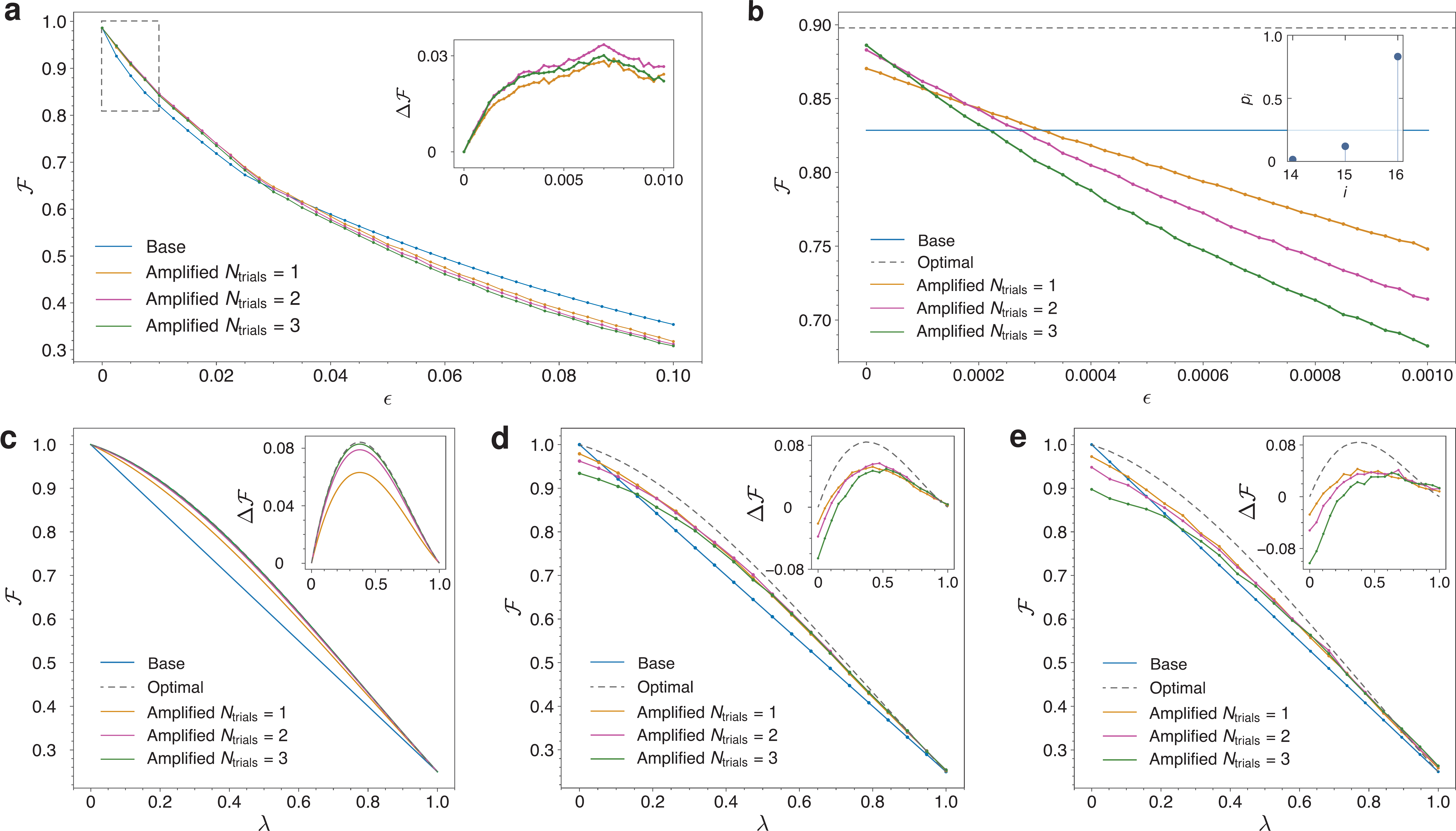"}\vspace{0.5cm}
\caption{{\bf a} Base and amplified fidelities for $N_\mathrm{trials} = 1, 2, 3$ as functions of gate error rate $\epsilon$ from Hamiltonian simulation. Inset shows fidelity gain. {\bf b}
Base, amplified, and optimal fidelities for $N_{\mathrm{trials}}=1,2,3$ as functions of gate error rate $\epsilon$ from adiabatic state preparation. Inset shows the top three eigenvalues of the input state. {\bf c} Base, amplified, and optimal fidelities for $N_{\mathrm{trials}} = 1, 2, 3$ versus depolarization strength $\lambda$ from numerical simulation with global noise only. {\bf d, e} Base and amplified fidelities for $N_{\mathrm{trials}} = 1, 2, 3$ versus  $\lambda$, shown for (d) simulated processor and (e) experimental processor.}
\label{fig:curves}
\end{center}
\vspace{-0.2cm}%
\end{figure*}

We evaluate the performance of QPA across various scenarios using the architecture shown in~\cref{fig:swapnet}, comparing the fidelity gain, defined as the difference between the amplified fidelity (output vs. ideal state) and the base fidelity (input vs. ideal state).

\textbf{Hamiltonian simulation.} Consider a many-body Hamiltonian $H$; our goal is to prepare the time-evolved state $\ket{\psi} = e^{-iHt}\ket{0}$. For illustration, we will choose $H$ to be a transverse-field Ising model on four sites. Simulating the first-order Trotterized state-preparation circuit $\mathcal{P}$ yields the three inputs to SWAPNET. To assess the robustness of QPA to circuit-level noise, we consider a phenomenological noise model, where each single/multi-qubit gate, both in $\mathcal{P}$ and in the QPA circuit, is modeled as a perfect application followed by a single/multi-qubit depolarizing channel with error rate $\epsilon$. The base fidelity is optimized over the number of Trotter steps $N_{\mathrm{Trot}}$ (See \ref{sec:numerical}).

The QPA protocol remains robust (Fig.~\ref{fig:curves}a) with a pseudothreshold emerging near $\epsilon \approx 0.03$, below which fidelity improvements are observed, falling within reach of near-term experimental capabilities. Two iterations already yield significant gains, while additional rounds degrade performance due to the increased noise from deeper circuits.

\textbf{Adiabatic preparation of $Z_2$-ordered entangled state.}
To evaluate the applicability of QPA beyond digital protocols, we benchmark it in the analog setting by applying it to adiabatic state preparation using the native Hamiltonian of Rydberg simulators. The Hamiltonian is adiabatically tuned to prepare the one-dimensional $Z_2$ state with $k = 4$, given by $\frac{1}{\sqrt{2}}\left( \ket{grgr} + \ket{rgrg} \right)$, where $\ket{g}$ and $\ket{r}$ denote the ground and Rydberg states, respectively. The adiabatic ramp profile is optimized to ensure high-fidelity preparation of the input state for the QPA protocol. The noise arises from Lindbladian dynamics, dominated by dephasing, and is obtained via numerical integration of the master equation. QPA is simulated by sampling eigenstates based on the input density matrices. We model post-selected GHZ-state preparation as an ideal GHZ state followed by independently applied single-qubit depolarizing noise with error probability $\epsilon$, and apply measurement errors with the same probability. Three-qubit CSWAP gates are assigned an error probability of $100\epsilon$~\cite{evered2023high}. The parameters and optimization region are chosen to model typical conditions on Rydberg platforms~\cite{ebadi2022quantum}.

The noise in the input states $\rho$ leads to a spectrum in which the three dominant eigenstates, shown in the inset panel of~\cref{fig:curves}b, differ from one another by roughly an order of magnitude. The state is far from depolarized; however, QPA remains effective in this regime, enhancing fidelity for small $\epsilon$ and exhibiting a pseudothreshold around $0.0003$ (see ~\cref{fig:curves}b). The enhancement is attributed to a small bias, as the fidelity between the principal eigenstate and the ideal target state exceeds $0.997$ (see \ref{sec:numerical}). The amplified fidelities asymptotically approach values near the theoretical optimum around $0.90$, and additional improvements are expected with more input copies $n$, validating the applicability of QPA to analog procedures.

\textbf{Experimental state preparation.} We apply QPA to depolarized inputs on a superconducting quantum processor to further validate the theoretical predictions. Global depolarization acting on ququarts ($d=4$) is emulated using stochastically sampled Pauli strings, followed by execution of the transpiled SWAPNET circuit for various \(N_\mathrm{trials}\). Given current hardware's lack of complex classical control, we run separate circuits for all measurement outcomes of the control register and classically combine the results. As a baseline, we include numerical simulations with only global depolarization (\cref{fig:curves}c) and results from a simulated processor excluding time-dependent fluctuations (\cref{fig:curves}d); the experimental results are shown in~\cref{fig:curves}e. The optimal fidelity for $n=3$ is shown as dashed lines (\ref{sec:numerical})).

We observe a fidelity increase across most of the \(\lambda\) range, particularly under strong global noise. The experimental processor data is consistent with both the simulated processor and theoretical predictions. At larger \(\lambda\), the gain from QPA persists despite circuit-level noise, but deviations emerge at smaller \(\lambda\), where deeper circuits increase susceptibility to temporally correlated noise. The effectiveness of shallow circuits is further supported by the observation that just two iterations (\(N_{\mathrm{trials}} = 2\)) are sufficient to closely match the theoretically optimal fidelity for \(n = 3\) at large $\lambda$. The data suggest that QPA can complement QEC by reducing global noise with shallow circuits, particularly in high-noise regimes where QEC is less effective.

\section{Outlook}
In this work, we have established the optimal QPA protocol and demonstrated its efficient implementation. By extending QPA to larger quantum systems with generic noise models and demonstrating its broader applicability through numerical studies, we open new avenues for practical applications in quantum computing, particularly in the NISQ era. Looking ahead, several exciting directions for future research emerge from our findings. 

In Ref.~\cite{KW01}, protocols with multiple outputs were considered, from which a ``rate'' similar to the Shannon theory notion can be defined. 
We conjecture that it may be possible to generalize the optimal QPA protocol to these cases as well. Similar to~\cref{thm:choi}, the Choi matrix projects onto the unique irrep subspace $\yd{\mu_{i^\ast}}$, defined by removing $m$ boxes in a specific order, as detailed in Supplemental~\cref{sec:conjecture}. However, proving optimality might require more advanced tools of representation theory~\cite{Koc07a}, so we leave this as an open question. 

On the experimental front, we aim to demonstrate the feasibility of our protocol using Rydberg quantum simulators, potentially with hybrid analog-digital methods that combine adiabatic evolution for simulation with digital control for implementing QPA~\cite{scholl2023erasure}. Given the platform-agnostic nature of our scheme, the optimal QPA protocol could also be adapted for other quantum computing platforms, such as ions traps, which can potentially implement qudits natively~\cite{ringbauer2022universal}.

Our protocol holds promise in various applications, including quantum sensing~\cite{YEHM+22}, quantum state estimation, and quantum state preparation. Furthermore, QPA also has intimate connection with quantum cloning~\cite{SIGA05,BDEF98}, tomography~\cite{B06,haah2016sample}, quantum spectrum testing~\cite{o2015quantum}. These connections stem from shared underlying mathematical structures, which might be illuminated using similar symmetry properties. As a result, our approach may inspire further developments in related areas. 

\section{Acknowledgement}
The authors thank Adam Wills, Andy Liu, Angus Lowe, Aram Harrow, Debbie Leung, Dmitry Grinko, Elias Theil, Frank Zhang, Guanghao Ye, Hongyue Li, Jiani Fei, Jinzhao Wang, John Martyn, Kaifeng Bu, Lambert Lin, M\={a}ris Ozols, Nazlı Uğur Köylüoğlu, Norah Tan, Patrick Hayden, Quynh Nguyen, Richard Allen, Sa\'ul Pilatowsky-Cameo, Scott Xu, Soonwon Choi, Suvrit Sra, Wenjie Gong, Xiaoyang Shi, Xin Wang, Yuichiro Matsuzaki, Yu-jie Liu, Zi-yin Liu, and Zijing Di for helpful discussions. The authors especially thank Debbie Leung and Soonwon Choi for sharing their manuscripts. This project was supported by the U.S. Department of Energy, Office of Science, National Quantum Information Science Research Centers, Co-design Center for Quantum Advantage (C$^2$QA) under contract number DE-SC0012704, and the US National Science Foundation QLCI program (grant OMA-2016245). 

Source code for the numerical simulations conducted in this paper is available at \url{https://github.com/qoan-projects/OQPA}.

\clearpage
\newpage
\onecolumngrid
\vspace{2cm} %

\begin{center}
    {\Large \textbf{Supplementary Information}}\\[0.5cm]
    {\large for}\\[0.5cm]
    {\large \textbf{Optimal Quantum Purity Amplification}}\\[0.5cm]
    
    \normalsize{Zhaoyi Li,$^{1}$ Honghao Fu,$^{2}$ Takuya Isogawa,$^{3}$ Caio Silva,$^{1}$ and Isaac Chuang$^{1}$}\\[0.5cm]
    
    \footnotesize{\it
    $^{1}$Department of Physics, Massachusetts Institute of Technology, Cambridge, MA 02139, USA \\
    $^{2}$Computer Science and Artificial Intelligence Lab, Massachusetts Institute of Technology, Cambridge, MA 02139, USA \\
    Concordia Institute for Information Systems Engineering, Concordia University, Montreal, QC H3G 1S6, Canada \\
    $^{3}$Department of Nuclear Science and Engineering, Massachusetts Institute of Technology, Cambridge, MA 02139, USA
    }\\
    (Dated: \today)
\end{center}
\beginsupplement

\section{Math Prerequisite}
\subsection{Group Theory}\label{sec:GT}
\subsubsection{Young Diagrams, Young Tableaux, and Weyl Tableaux }
We denote integer compositions by sequences in square brackets. Specifically, a composition $\yd{a_1,a_2,\cdots,a_k}\vdash n$ satisfies $\sum_{i=1}^k a_i =n$, where $k, a_1,\cdots,a_k,n\in\mathbb{N}$. For example, $\yd{1,3,2}\vdash{6}$ represents a composition of 6. 

Young diagrams (YDs) are graphical representations of integer partitions, i.e., compositions that are in non-increasing order. For the composition $\yd{\varsigma_1, \varsigma_2, \cdots, \varsigma_k}$ with $\varsigma_1\geq\cdots\geq\varsigma_k$, the corresponding YD has $k$ rows with $\varsigma_i$ boxes on the $i$-th row. For example, $\yd{3,2}=\Yvcentermath1\scriptsize\yng(3,2)$. We index YDs using square brackets with Greek letter labels such as $\yd{\varsigma}$. 
Moreover, we write $\yd{\varsigma}\vdash n$ to indicate that $\yd{\varsigma}$ is a valid YD with $n$ boxes, as shown in~\cref{fig:YD}. 

\begin{figure}[h]
    \centering
    \begin{tikzpicture}[scale=0.8, every node/.style={minimum size=0.8cm,font=\normalsize}, on grid]
    \foreach \x in {1,...,3}
        \node [draw] at (\x, 3) {};
    \node at (4.05,3) {$\cdots$};
    \node [draw] at (5, 3) {};
    \node [draw] at (6, 3) {};

    \foreach \x in {1,...,3}
    \node [draw] at (\x, 2) {};
    \node at (4.05,2) {$\cdots$};
    \node [draw] at (5, 2) {};
    
    \node at (1,1.1) {$\vdots$};
    \node at (2,1.1) {$\vdots$};

    \foreach \x in {1,...,2}
        \node [draw] at (\x, 0) {};

    \draw [decorate, decoration={calligraphic brace, amplitude=5pt, raise=2pt}, yshift=0pt,line width=0.75pt]
    (0.5,3.5) -- (6.5,3.5) node [black, midway, yshift=0.6cm] { $\varsigma_1$ };

    \draw [decorate, decoration={calligraphic brace, amplitude=5pt, mirror, raise=2pt}, yshift=0pt,line width=0.75pt]
    (0.5,-0.5) -- (2.5,-0.5) node [black, midway, yshift=-0.6cm] { $\varsigma_k$ };

    \draw [decorate, decoration={calligraphic brace, amplitude=5pt, raise=2pt}, xshift=0pt,line width=0.75pt]
    (0.5,-0.5) -- (0.5,3.5) node [black, midway, xshift=-0.6cm, rotate=90] { $k$ };
    \end{tikzpicture}
    \caption{Layout of a YD $\yd{\varsigma}$ with $k$ rows. Here the first row has $\varsigma_1$ boxes and the last row has $\varsigma_k$ boxes.}
    \label{fig:YD}
\end{figure}

(Standard) Young tableaux (YTs) are YDs filled with integers $1$ to $n$ according to specific rules: the numbers must increase strictly down each column and across each row, for instance, $\Yvcentermath1\scriptsize\young(125,34)$. We use parenthesized Latin letters, such as $\yt{s}$, to label YTs. We use \( \yt{s} \vdash \yd{\varsigma} \) to indicate that \( \yt{s} \) is a valid YT of shape \( \yd{\varsigma} \). Moreover, \( \yt{s} \vdash n \) means that there exists a YD \( \yd{\varsigma} \vdash n \) such that \( \yt{s} \vdash \yd{\varsigma} \).

Finally, we introduce Weyl Tableaux (WTs), also known as semi-standard Young Tableaux (SSYTs). Unlike YTs, WTs are filled with integers from $1$ to $d$, allowing repetitions; their entries increase weakly across each row, such as $\Yvcentermath1\scriptsize\young(122,23)$ for $d=3$. We denoted WTs using Latin letters with corner brackets, such as $\wt{m}$. By $\wt{m}\vdash \yd{\varsigma}$, we mean that $\wt{m}$ is a valid WT of the YD $\yd{\varsigma}$.

\subsubsection{Generalized Quantum Fourier Transform}
It turns out that YTs, YDs, and WTs are deeply connected with representation theory, through concepts such as (generalized) Fourier transform (FT) and Schur-Weyl duality.

To introduce the FT, we first look at the group algebra of the symmetric group $S_n$.  Let $S_n=\{g_i\mid i=0,\dots,n!-1\}$ with $g_0=()$ the identity.  The group algebra $A_{S_n}$ is the complex vector space having the $g_i$ as basis vectors, and multiplication is defined by extending the group product linearly:
\begin{equation}
    \left(\sum_i \alpha_i g_i\right) \left(\sum_j \beta_j g_j\right) = \sum_{i,j} \alpha_i \beta_j g_i g_j.
\end{equation}
On this algebra, we define left and right multiplication operators \( L(g) \) and \( R(g) \) for each \( g \in S_n \). The left multiplication is given by the homomorphism \( L:S_n\to \mathcal{L}(A_{S_n}) \), where \( \mathcal{L}(A_{S_n}) \) denotes the space of linear operators on \( A_{S_n} \), while the right multiplication is denoted as \( R(\cdot) \). For $g\in S_n$, $v\in A_{S_n}$, the action of these operators are:
\begin{equation}
L(g)(v) = g v, \quad R(g)(v) = v g^{-1}.
\end{equation}

The domain of these operators extends linearly to the entire algebra, defining representations of \( A_{S_n} \), termed the left(right) natural representations. By associativity, the operators \( L(g) \) and \( R(h) \) commute naturally for all \( g, h \in A_{S_n} \). Schur's lemma allows us to block-diagonalize these operators via FT, which is given by the unitary
\begin{equation}
    F:A_{S_n}\to \bigoplus_{\yd{\varsigma}}W^{\yd{\varsigma}}_L\otimes W^{\yd{\varsigma}}_R, 
\end{equation}
which decomposes the space into symmetry sectors, each labeled by $\yd{\varsigma}$, where $\yd{\varsigma}$ runs over all possible YDs satisfying $\yd{\varsigma}\vdash n$. From the representation theory of the symmetric group, YDs correspond one-to-one with irreducible representations (irreps). The symmetric sector takes on a tensor product structure with two registers, also known as Specht modules, where the left and right natural representations act independently on the \(L\) and \(R\) registers, each forming a copy of the irrep of \( \yd{\varsigma} \) The dimension of each register is given by \( g^{\yd{\varsigma}} \), which also represents the degeneracy of the irrep of the left (right) natural representation, as the basis in \(L\) labels the \( g^{\yd{\varsigma}} \) degenerate irreps in \(R\), and vice versa.

To study this transformation more explicitly, let us first work in the passive transformation picture to find a basis of $A_{S_n}$ on which the operators $R(g)$ are block diagonal. It turns out that this basis is given by the normal idempotents and transition operators. 

To find the normal idempotents, we build the YTs progressively by appending one number at a time, starting from the empty tableau. This process is represented by the Bratteli tree, as shown in~\cref{fig:btree}, which is a hierarchical structure that reflects how numbers are sequentially filled in a YT. In this tree, if $(t)$ is a child node of $(s)$, i.e. $(s)$ is the parent YT of $(t)$, we write $(s)\rightarrow(t)$. 
\begin{figure}[H]
\begin{center}
\includegraphics[scale=0.17]{"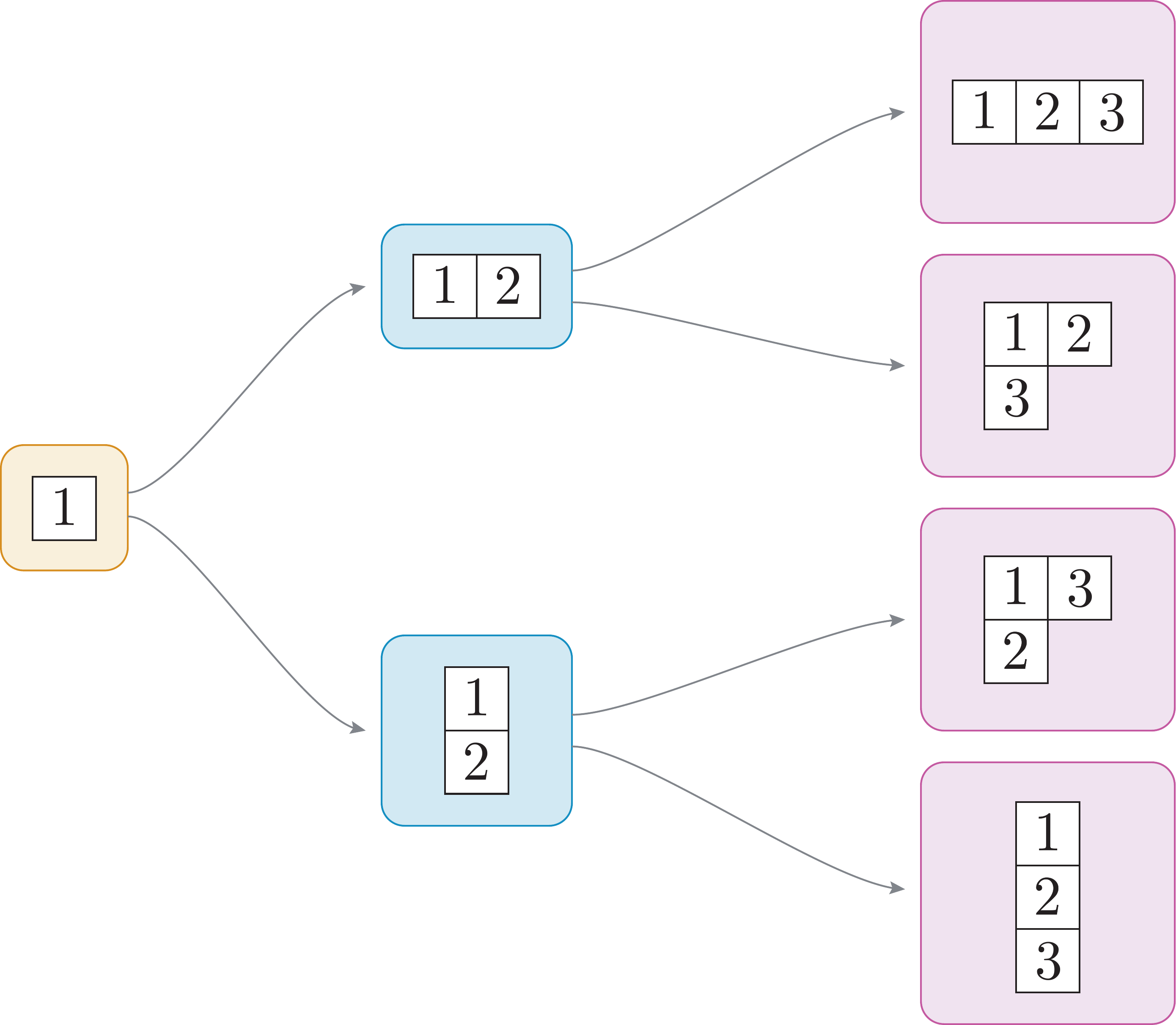"}
\end{center}
\caption{Bratteli tree for $S_3$: starting from the left, we begin with the YTs for $n = 1$, then construct the $n = 2$ YTs by adding one box, and finally reach the $n = 3$ YTs by repeating the process.}
\label{fig:btree}
\end{figure}
The normal idempotents $O_{\yt{t}}$ are defined recursively via Thall's algorithm by following paths in the Bratteli tree~\cite{Thr41}. Specifically, for $\yt{t_i} \to \yt{t_{i+1}}$, we have $O_{\yt{t_{i+1}}} \propto O_{\yt{t_i}} Y_{\yt{t_{i+1}}} O_{\yt{t_i}}$, where $Y_{\yt{t_{i+1}}}$ is the Young symmetrizer~\cite{KS13}. Alternatively, they can be defined using the Measure of Lexical Disorder (MOLD) algorithms~\cite{AW17}. We can extend the idempotents \( O_{\yt{t}} \) to a complete basis of $A_{S_n}$ by introducing the matrix units \( O^{\;\;\yd{\varsigma}}_{\yt{s}\yt{t}} \), where $\yt{s},\yt{t}\vdash\yd{\varsigma}$. When $\yt{s}=\yt{t}$, the matrix units \( O^{\;\;\yd{\varsigma}}_{\yt{t}\yt{t}} \) coincide with \( O_{\yt{t}} \). For each YD $\yd{\varsigma}$, there are $g^{\yd{\varsigma}}$ YTs, so the number of basis vectors $O^{\yd{\varsigma}}_{\yt{s}\yt{t}}$ with $\yt{s}, \yt{t} \vdash \yd{\varsigma}$ is $\left(g^{\yd{\varsigma}}\right)^2$. Summing over all diagrams, the total number of basis vectors is $\sum_{\yd{\varsigma}} \left(g^{\yd{\varsigma}}\right)^2$. For \(\yt{t} \neq \yt{s}\), the matrix unit \( O^{\;\;\yd{\varsigma}}_{\yt{s}\yt{t}} \) act as transition operators mapping \(\yt{t}\) to \(\yt{s}\).

\begin{enumerate}
    \item \textit{Transversality}:  
    \begin{equation} 
    O^{\;\;\yd{\lambda}}_{\yt{s}\yt{t}} O^{\;\;\yd{\mu}}_{\yt{u}\yt{v}} = O^{\;\;\yd{\lambda}}_{\yt{s}\yt{v}} \delta^{\yd{\lambda} \yd{\mu}} \delta_{\yt{t}\yt{u}}.
     \end{equation}
    This ensures that the transition operators function as pipelines, mapping irreps to irreps within the same symmetry sector labeled by $\yd{\lambda}$, while vanishing across non-equivalent irreps.

    \item \textit{Conjugation under the standard inner product}:  
    Given the standard inner product \( \langle \cdot, \cdot \rangle \) on \( A_G \) (where group elements are treated as an orthonormal basis), the transition operators satisfy the adjoint relation:
    \begin{equation}
    O^{\;\;\yd{\lambda} \dagger}_{\yt{s}\yt{t}} = O^{\;\;\yd{\lambda}}_{\yt{t}\yt{s}}.
    \end{equation}
    This means that the transition operators reverses under Hermitian conjugates.

    \item \textit{Orthonormality}:  
    The transition operators are orthonormal in the sense that:
    \begin{equation}
    \left\langle O^{\;\;\yd{\lambda}}_{\yt{s}\yt{t}}, O^{\;\;\yd{\mu}}_{\yt{u}\yt{v}} \right\rangle = \frac{n!}{g^{\yd{\lambda}}} \delta^{\yd{\lambda} \yd{\mu}}\delta_{\yt{s} \yt{u}} \delta_{\yt{t} \yt{v}}.
    \end{equation}
    \end{enumerate}
Two additional properties emerge when considering the normal idempotents of the \( S_{n-1} \) subgroup of \( S_n \):
    \begin{enumerate}
    \item \textit{Compatibility: \(O_{\yt{t}} O_{\yt{s}} = O_{\yt{t}}\) if \( \yt{s} \to \yt{t} \).}
    \item {\it Resolution of identity}: The parent projector can be written as a sum of child projectors:  
    \[
    O_{\yt{s}} = \sum_{\yt{t} \colon \yt{s} \to \yt{t}} O_{\yt{t}}.
    \]  
\end{enumerate}
The third property, along with the fact that there are $\sum_{\yd{\varsigma}} g^{\yd{\varsigma}\ 2} = n!$ such operators in total, allows us to interpret the transition operators as a basis in which the image of $R$ and $L$ are mutually block diagonalized. More explicitly, for any $g \in S_n$, we decompose it as 

\begin{equation}
g = \sum_{\yt{s}, \yt{t}} \sum_{\yd{\varsigma}} A^{\yt{s} \yt{t}}_{\;\;\yd{\varsigma}}(g) O^{\;\;\yd{\varsigma}}_{\yt{s} \yt{t}}.
\end{equation} 

The components $A^{\yt{s} \yt{t}}_{\;\;\yd{\varsigma}}(g)$ form a unitary representation of \( g \), following from transversality of the basis. More explicitly, we use \(\yt{s}\) and \(\yt{t}\) to index registers \(L\) and \(R\), respectively, and flatten them to define the first and second levels of the representation matrix. This way, each degenerate subspace \( W^{\yd{\varsigma}}_L \) is spanned by fixing \( \yt{t} \) and varying \( \yt{s} \) in \( O^{\;\;\yd{\lambda}}_{\yt{s} \yt{t}} \), forming the Young-Yamanouchi (YY) basis~\cite{Yamanouchi1937,de2012subgroup}. Typically, the YY basis refers to a single Specht module's basis, which is identified abstractly with YTs rather than as an element of \( A_{S_n} \), due to the shared block structure of all modules. The right multiplication operator \( R(g) \) then takes a block-diagonal form:
\begin{equation}
\begin{bmatrix}
\underbrace{
    \begin{bmatrix}
        A^{\yt{s} \yt{t}}_{\;\;\yd{\lambda}}(g) & & \\
        & \ddots & \\
        & & A^{\yt{s} \yt{t}}_{\;\;\yd{\lambda}}(g)
    \end{bmatrix}
}_{g^{\yd{\lambda}} \text{ repetitions}} & & \\
& \ddots & \\
& & \underbrace{
    \begin{bmatrix}
        A^{\yt{s} \yt{t}}_{\;\;\yd{\mu}}(g) & & \\
        & \ddots & \\
        & & A^{\yt{s} \yt{t}}_{\;\;\yd{\mu}}(g)
    \end{bmatrix}
}_{g^{\yd{\mu}} \text{ repetitions}}
\end{bmatrix},
\label{equ:block_diagonal}
\end{equation}
where each block $A^{\yt{s} \yt{t}}_{\;\;\yd{\varsigma}}(g)$ appears $g^{\yd{\varsigma}}$ times, corresponding to different irreducible representations of $\yd{\varsigma}$. Vice versa, the left multiplication operator \( L(g) \) acts only on the \(L\) register, leaving the \(R\) register unchanged, thereby leading to the decomposition summarized in Table~\ref{tab:fourier_correspondence}.

\begin{table}[H]
    \centering
    \caption{Roles of left and right natural representations in the Fourier transform.}
    \label{tab:fourier_correspondence}
    \renewcommand{\arraystretch}{1.8} %
    \begin{tabular}{|c|c|c|c|}
        \hline
        Representation  & Register & Basis & Action on $A_G$   \\
        \hline
        Left natural   & $W^{\yd{\varsigma}}_L$ & $O^{\;\;\yd{\varsigma}}_{\yt{s} \yt{\cdot}}$ & $L(g), g \in G$  \\
        \hline
        Right natural  & $W^{\yd{\varsigma}}_R$ & $O^{\;\;\yd{\varsigma}}_{\yt{\cdot} \yt{t}}$ & $R(g), g \in G$  \\
        \hline
    \end{tabular}
\end{table}

To implement this transformation on a standard qubit-based system, we introduce an additional register \(\Lambda\) to encode \(\yd{\varsigma}\), alongside the two registers \( L \) and \( R \), which are padded sufficiently to accommodate the dimensions of \( g^{\yd{\varsigma}} \). This allows us to encode the normalized basis states  $\sqrt{\frac{g^{\yd{\varsigma}}}{n!}} O^{\;\;\yd{\varsigma}}_{\yt{s} \yt{t}}
$
as  
\begin{equation}\ket{\substack{\yd{\varsigma} \\ \yt{s} \yt{t}}}  = \ket{\yd{\varsigma}}_{\Lambda} \ket{\yt{s}}_L \ket{\yt{t}}_R.
\end{equation}
Moreover, an active transformation, realized as a unitary operation, is necessary to physically implement the mapping. Suppose we have an encoding of \( g \in G \), representing the group elements as \( \ket{g} \) across the entire ctrl register, we define the generalized quantum Fourier Transform (GQFT) operator (where we slightly overload the notation of \( F \) to represent an operator acting on quantum states) to act as  
\begin{equation}
F:\ket{g} \mapsto \sum_{ \yd{\varsigma}}\sum_{\yt{s}, \yt{t}\vdash\yd{\varsigma}} F^{\yt{s} \yt{t}}_{\;\;\yd{\varsigma}}(g) \ket{\substack{\yd{\varsigma} \\ \yt{s} \yt{t}}}, 
\quad
F^{-1}: \ket{\substack{\yd{\varsigma} \\ \yt{s} \yt{t}}} \mapsto \sum_{g \in G} F^{\yt{s} \yt{t}}_{\;\;\yd{\varsigma}}(g) \ket{g},
\end{equation}
whose coefficients are defined as $F^{\yt{s} \yt{t}}_{\;\;\yd{\varsigma}}(g)=\sqrt{\frac{n!}{g^{\yd{\varsigma}}}}A^{\yt{s} \yt{t}}_{\;\;\yd{\varsigma}}(g)$. Under conjugation by GQFT, the operator $ FR(g)F^{-1}$ assumes the block diagonal form given in equation~\cref{equ:block_diagonal}. Finally, since ancillae are used to pad the Hilbert space to fit within a qubit-based system, the operator \( F \) can be freely defined on the unencoded states.

\subsubsection{Schur-Weyl Duality, Schur Transform, and Brattelli Tree}
\label{subsubsec:Schur}
Besides the duality between the natural representations of $S_n$, an analogous relation between permutation and global operations arises in the data register, known as Schur-Weyl duality. This duality links the actions of the symmetric group $S_n$ and the general linear group $GL(d)$ for $n$ qudits, and it has significant applications in quantum information theory.
Let $W=\mathbb{C}^d$ represent a $d$-dimensional state (qudit) space, and $W^{\otimes n}$ is the state space for a system of $n$ qudits. Consider the action of the symmetric group $S_n$ on $W^{\otimes n}$. Let $\pi \in S_n$ represent a permutation, and let $P_\pi$ denote the corresponding permutation operator that reorders the registers according to $\pi$. That is, for the state $\ket{\Psi} \in W^{\otimes n}$, its components in the computational basis transform as $P_\pi \Psi_{i_1, \dots, i_n} = \Psi_{i_{\pi(1)}, \dots, i_{\pi(n)}}$, with $i_k \in \{0, \dots, d-1\}$.
Consider a qudit operator $A$, or an element of $GL(d)$ in the fundamental representation. The permutation commutes with global transformations of the form $A^{\otimes n}$, which apply the same operator $A$ to each register, i.e. $[P_\pi, A^{\otimes n}] = 0$. Also note that due to linearity, this commutation relation can be extended to the group algebra $A_{S_n}$ of $S_n$, which consists of linear combinations of the group elements. Similarly, the global transformations can be extended to arbitrary exchange invariant operators.

Analogous to GQFT, Schur-Weyl duality states that the space $W^{\otimes n}$ can be decomposed using a unitary transformation known as the Schur transform. This transformation maps the tensor product space into a direct sum of tensor products of irreps spaces of $GL(d)$ and $S_n$ labeled by all possible YDs that satisfy $\yd{\varsigma}\vdash n$.
\begin{equation}
\label{equ:schur}
U_{\mathrm{Sch}}: W^{\otimes n} \rightarrow \bigoplus_{\yd{\varsigma}} W^{\yd{\varsigma}}_W \otimes V^{\yd{\varsigma}}_S,
\end{equation}
where $U_{\mathrm{Sch}}$ is the active Schur transform unitary, $W^{\yd{\varsigma}}$ denotes the Weyl module corresponding to the YD ${\yd{\varsigma}}$ (the space the $GL(d)$ representation acts on), and $V^{\yd{\varsigma}}$ denotes the Specht module (the space the $S_n$ representation acts on). 
We denote the dimension of $W^{\yd{\varsigma}}$ by $d^{\yd{\varsigma}}$ and the dimension of $V^{\yd{\varsigma}}$ by $g^{\yd{\varsigma}}$. This duality extends linearly beyond \( GL(d) \) to include all qudit operators. For a certain qudit operator \( A \), we denote its representation matrix by \( A^{\yd{\varsigma}} \). For instance, when dealing with density matrices, we will refer to $\rho^{\yd{\varsigma}}$ to indicate the density matrix associated with the irrep labeled by $\yd{\varsigma}$. 

If we instead interpret the Schur transform matrix as a passive transformation, it represents a change of basis from the computational basis to the Schur basis. 
The Schur basis is constructed following Ref.~\cite{BCH06}, though we do not explicitly derive it as a basis of \( W^{\otimes n} \) in this paper. However, just as the YY basis is defined on a single Specht module, the basis of a single Weyl module \( W^{\yd{\varsigma}} \) can be identified analogously with WTs. For every YD $\yd{\varsigma}$, such as the one in~\cref{fig:YD}, each valid WT $\wt{m}$ defines a basis vector $\ket{\wt{m}}$. This defines an orthonormal basis of $V^{\yd{\varsigma}}$, on which the representation of an operator $\rho$ takes the standard form $\rho^{\yd{\varsigma}}$. An equivalent way to represent WTs is through Gel'fand-Tsetlin (GT) patterns, which are triangular arrays of numbers that satisfy the betweenness property, where each entry lies between two adjacent entries in the row above, 
such as in~\cref{fig:GT}. Since $\wt{m}\vdash\yd{\varsigma}$, the first row of the GT pattern is a partition corresponding to $\yd{\varsigma}$. By using YTs and WTs together, we can label the Schur basis uniquely.

\begin{figure}

  \begin{tikzpicture}[scale=0.8, every node/.style={font=\normalsize}]

\draw[rounded corners=10pt] (-1.7,0) -- (5.7,0) -- (2,-5.8) -- cycle;

\node at (-0.8,-0.5) {$\varsigma_1$};
\node at (0.5,-0.5) {$\cdots$};
\node at (2.5,-0.5) {$\varsigma_{d-1}$};
\node at (4.8,-0.5) {$\varsigma_d$};

\node at (0.4,-1.5) {$m_{1,d-1}$};
\node at (1.7,-1.5) {$\cdots$};
\node at (3.4,-1.5) {$m_{d-1,d-1}$};

\node at (2,-2.7) {$\vdots$};
\node at (2,-4.5) {$m_{1,1}$};
\end{tikzpicture}
\caption{Layout of a possible GT pattern $\wt{m}$ of the YD $\yd{\varsigma}$.}
\label{fig:GT}
\end{figure}

As an example, consider the case with three qudits. By Schur-Weyl duality, we can break down $W^{\otimes 3}$ into a direct sum as
\begin{equation}
\label{equ:example_yd_decomp}
    W^{\otimes 3}\cong W^{\Yvcentermath1\tiny\yng(3)}\oplus W^{\Yvcentermath1\tiny\yng(2,1)}\otimes V^{\Yvcentermath1\tiny\yng(2,1)}\oplus W^{\Yvcentermath1\tiny\yng(1,1,1)}.
\end{equation} 
However, there are times when we want to study the irrep spaces of $GL(n)$ as subspaces of $W^{\otimes 3}$, including their degeneracies. In such cases, it is more convenient to use YTs rather than YDs. By expanding the Specht module in the YY basis, we obtain four invariant subspaces of exchange-invariant operators:
\begin{equation}
\label{equ:example_yt_decomp}
W^{\otimes 3}=W_{\Yvcentermath1\tiny\young(123)}+W_{\Yvcentermath1\tiny\young(12,3)}+ W_{\Yvcentermath1\tiny\young(13,2)}+ W_{\Yvcentermath1\tiny\young(1,2,3)} = \bigoplus_{\yt{s}:\yt{s}\vdash3}W_{\yt{s}}.
\end{equation}
Here, each term $W_{\yt{s}}$ represents a subspace of $W^{\otimes 3}$ associated with a specific YT $\yt{s}$ with three boxes. More explicitly, these subspaces are defined as the images of the projectors \( \Pi_{(s)} \) acting on the tensor product space \( W^{\otimes 3} \), which are representations of \( O_{\yt{s}} \) on $W^{\otimes n}$. These projectors possess another important property:  

\begin{enumerate}  
    \item {\it Tracing Property}: If \( \yt{t} \) consists of \( n \) registers, then tracing out the \( n \)-th register from the projector \( \Pi_{\yt{s}} \) yields the parent projector, scaled by the ratio of their dimensions:  
    \begin{equation}  
        \Tr_{n}(\Pi_{\yt{s}}) = \frac{d^{\yt{s}}}{d^{\yt{t}}} \Pi_{\yt{t}}.  
    \end{equation}  
\end{enumerate}  
Here, \( d^{\yt{\cdot}} \) denote the dimension of the irrep associated with the YD $\yt{\cdot}$. For simplicity of notation, we also denote the restrictions of tensor product operators on those invariant subspaces with a subscript. For instance, $\rho^{\otimes n}=\sum_{(s):(s)\Vdash n}\rho_{(s)}$ and $\rho_{\yt{s}}=\Pi_{\yt{s}}\rho^{\otimes n}\Pi_{\yt{s}}$. Moreover, since $\rho^{\otimes n}$ is already block-diagonal in the Schur basis, we can obtain the restriction to each block using only single-sided projections, $\Pi_{\yt{s}}\rho^{\otimes n}=\rho^{\otimes n}\Pi_{\yt{s}}$. The reader should be attentive to the difference between $\rho^{\yd{\varsigma}}$ and $\rho_{\yt{s}}$ here: the former is a $d^{\yd{\varsigma}}$-dimensional representation in its standard form, whereas the latter is an operator living in the $d^n$-dimensional tensor product space.
The Schur transform further clarifies the relation between these two notations~\footnote{We will omit the condition $\yt{s}\vdash n$ in the following text whenever $\yt{s}$ is summed over.}:
\begin{equation}
\rho^{\otimes n}=U_{\mathrm{Sch}}^\dag\bigoplus_{\yd{\varsigma}}\rho^{\yd{\varsigma}}\otimes\id_{g^{\yd{\varsigma}}} U_{\mathrm{Sch}}=\sum_{\yt{s}}\rho_{\yt{s}}.\end{equation}
Finally, we summarize the properties of YTs and WTs in the table below:

\begin{table}[H]
    \centering
    \caption{Summary of the properties of YTs and WTs and their roles in the duality.}
    \label{tab:correspondence}
    \begin{tabular}{|c|c|c|c|c|c|c|}
        \hline
        Tableaux & Group & correspondence  & Irrep dim. &  Basis name & Basis index & Action on \( W^{\otimes n} \) \\
        \hline
         YT  & \( S_n \)  & Specht Module \( W^{\yd{\varsigma}} \) & $g^{\yd{\varsigma}}$ & YY & \( \yt{t} \) & Permutation \( P_{\pi} \), \( \pi \in S_n \) \\
        \hline
         WT & \( GL(d) \) & Weyl Module \( V^{\yd{\varsigma}} \) & $d^{\yd{\varsigma}}$ & GT & \( \wt{m} \) & Global transformation \( A^{\otimes n} \), \( A \in GL(d) \) \\
        \hline
    \end{tabular}
\end{table}

\subsubsection{Example With Qutrits}
Now let us explicitly work out the example in~\cref{equ:example_yd_decomp,equ:example_yt_decomp} for qutrits, i.e., $d=3$. In high-energy literature, the decomposition in~\cref{equ:example_yd_decomp} is typically written as $\mathbf{3}^{\otimes 3}=\mathbf{10}\oplus\mathbf{8}^{\oplus 2}\oplus\mathbf{1}$, with the Weyl modules (blocks) labeled by their corresponding dimensions. The computational basis states can be written as $\ket{ijk}$, where $i, j, k \in \{0, 1, 2\}$. The Schur basis corresponding to the YT $\Yvcentermath1\scriptsize\young(12,3)$ is listed in~\cref{tab:basis}. The full Schur basis is displayed as the columns of the inverse Schur transform matrix $U^\dagger_{\mathrm{Sch}}$ in~\cref{subfig:Bplot}. A randomly generated qutrit mixed state $\rho$ is shown in the Schur basis in~\cref{subfig:matplot}, revealing four distinct blocks, each associated with a specific YT, such as $\Yvcentermath1\scriptsize\young(123)$. Notably, the two degenerate blocks, $\Yvcentermath1\scriptsize\young(12,3)$ and $\Yvcentermath1\scriptsize\young(13,2)$, both corresponding to the YD $\Yvcentermath1\scriptsize\yng(2,1)$, share the same form $\rho^{\,\Yvcentermath1\tiny\yng(2,1)}$.

\begin{table}[ht]
\caption{Table showing the correspondence between WTs and the Schur basis states, expressed in terms of computational basis states.}
\label{tab:basis}
\begin{center}
\begin{tabular}{|c|m{10cm}|}
\hline
\textbf{SSYT} & \centering\textbf{Basis State} \\ \hline

$ {\Yvcentermath1\scriptsize\young(11,2)} $ & 
\raisebox{-.75em}{\rule{0pt}{1.5em}} $ \sqrt{\frac{2}{3}} \ket{001} - \frac{1}{\sqrt{6}} \ket{010} - \frac{1}{\sqrt{6}} \ket{100} $ \\ \hline

$ {\Yvcentermath1\scriptsize\young(12,2)} $ & 
\raisebox{-.75em}{\rule{0pt}{1.5em}} $ \frac{1}{\sqrt{6}} \ket{011} + \frac{1}{\sqrt{6}} \ket{101}-\sqrt{\frac{2}{3}} \ket{110} $ \\ \hline

$ {\Yvcentermath1\scriptsize\young(13,2)} $ & 
\raisebox{-.75em}{\rule{0pt}{1.5em}} $ \sqrt{\frac{1}{3}} \ket{012} - \frac{1}{2\sqrt{3}} \ket{021} + \frac{1}{\sqrt{3}} \ket{102}  - \frac{1}{2\sqrt{3}} \ket{120}  - \frac{1}{2\sqrt{3}} \ket{201} - \frac{1}{2\sqrt{3}} \ket{210}  $ \\ \hline

$ {\Yvcentermath1\scriptsize\young(11,3)} $ & 
\raisebox{-.75em}{\rule{0pt}{1.5em}} $ \sqrt{\frac{2}{3}} \ket{002} - \frac{1}{\sqrt{6}} \ket{020}- \frac{1}{\sqrt{6}} \ket{200} $ \\ \hline

$ {\Yvcentermath1\scriptsize\young(12,3)} $ & 
\raisebox{-.75em}{\rule{0pt}{1.5em}} $ \frac{1}{2} \ket{021} - \frac{1}{2} \ket{120}+\frac{1}{2} \ket{201} - \frac{1}{2} \ket{210} $ \\ \hline

$ {\Yvcentermath1\scriptsize\young(22,3)} $ & 
\raisebox{-.75em}{\rule{0pt}{1.5em}} $ \sqrt{\frac{2}{3}} \ket{112} - \frac{1}{\sqrt{6}} \ket{121} - \frac{1}{\sqrt{6}} \ket{211} $ \\ \hline

$ {\Yvcentermath1\scriptsize\young(13,3)} $ & 
\raisebox{-.75em}{\rule{0pt}{1.5em}} $ \frac{1}{\sqrt{6}} \ket{022} + \frac{1}{\sqrt{6}} \ket{202} - \sqrt{\frac{2}{3}} \ket{220} $ \\ \hline

$ {\Yvcentermath1\scriptsize\young(23,3)} $ & 
\raisebox{-.75em}{\rule{0pt}{1.5em}} $  \frac{1}{\sqrt{6}} \ket{122} + \frac{1}{\sqrt{6}} \ket{212}  -\sqrt{\frac{2}{3}}\ket{221} $ \\ \hline
\end{tabular}
\end{center}
\end{table}

\begin{figure}[h]
\begin{subfigure}[t]{0.47\columnwidth}
\begin{center}
\includegraphics[scale=0.6]{"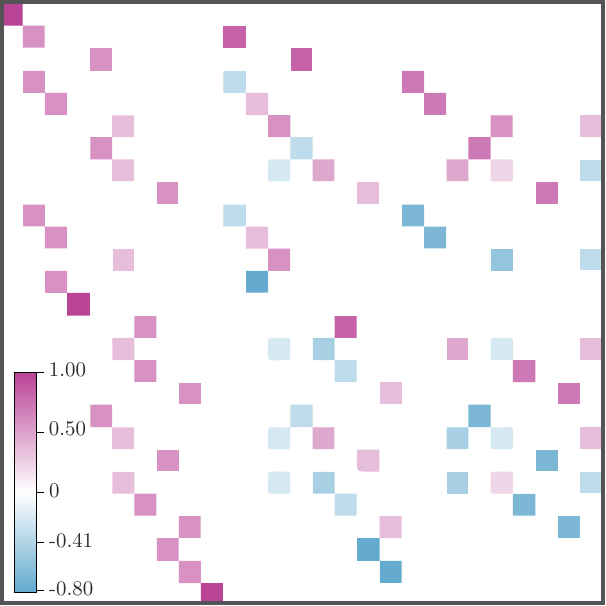"}
\end{center}
\caption{Matrix representation for $ U_{\mathrm{Sch}}^\dagger$. Columns correspond to Schur basis vectors expressed in the computational basis.} 
\label{subfig:Bplot}
\end{subfigure}%
\hfill
\begin{subfigure}[t]{0.47\columnwidth}
\begin{center}
\includegraphics[scale=0.6]{"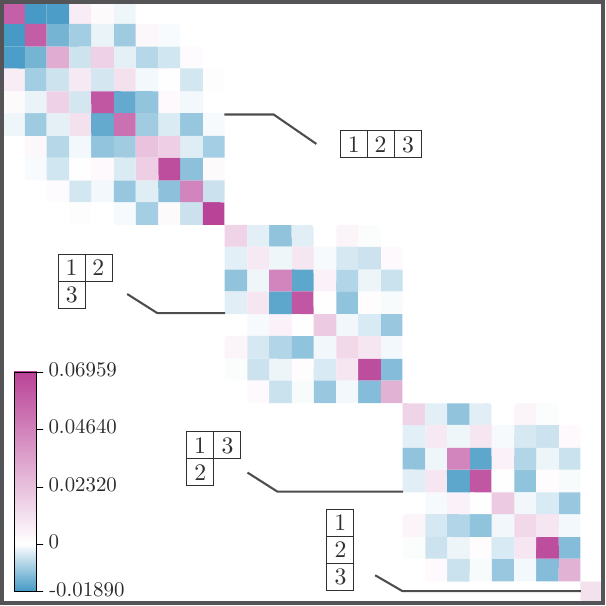"}
\end{center}
\caption{Visualization of $\rho^{\otimes 3}$ under the Schur basis for a randomly generated qutrit mixed state $\rho$. In this case there are four blocks, each corresponding to a different YT.}
\label{subfig:matplot}%
\end{subfigure}
\end{figure}

\subsubsection{Mixed Schur Transform}
Note that all the above concept can be generalized to the case when the representation of $GL(d)$ is a tensor product of both fundamental and anti-fundamental representations, namely $\rho^{\otimes n}\otimes \left(\rho^{\top -1}\right)^{\otimes m}$. In this case, we can define generalized (staircase) YDs, YTs, and WTs, which are characterized by the two parameters $n$ and $m$. Similarly, we write $\yd{\varsigma}\vdash(n,m)$ to indicate that $\yd{\varsigma}$ is a valid YT for this configuration. 

Here, we outline only the key information needed to complete our proof; for a more detailed discussion, see Ref.~\cite{Ngu23}.  Recall that in the context of the Bratteli tree, tensoring with a fundamental representation corresponds to adding a box to the original YD, ensuring that the result remains a valid YD. Similarly, for anti-fundamental representations, 
this process corresponds to removing a box. The reason is as follows: The anti-fundamental representation $\overline{\Box}$ differs from the totally antisymmetric representation with $d - 1$ boxes in a single column by a determinant factor $\det \rho$. This one-dimensional determinant representation, in turn, corresponds to the totally antisymmetric representation with $d$ boxes in a column. Removing this factor corresponds to deleting a single column of length $d$ from the leftmost side of the YD, resulting in the desired irrep. Together, these steps amount to removing a single box from a specific row while preserving the validity of the YD, as illustrated in \cref{fig:removal}.

\begin{figure}[H]
\begin{center}
\vspace{0.5cm}
\includegraphics[scale=0.23]{"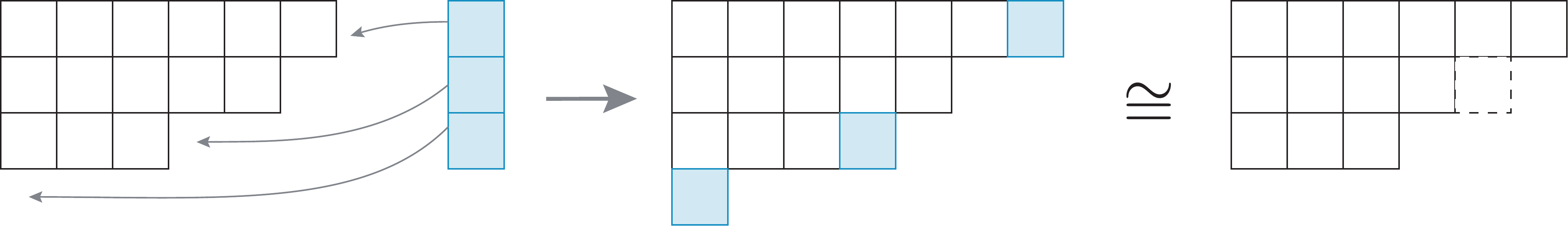"}
\vspace{0.5cm}
\caption{Tensoring with the anti-fundamental representation is equivalent, up to a one-dimensional determinant representation, to first tensoring with the totally antisymmetric representation of \( d-1 \) boxes (blue). Among the resulting diagrams, we select the one where new boxes are added to the first, third, and fourth rows. This diagram is equivalent to the irrep obtained by removing a single box (dashed line) from the second row of the original YD.}
\label{fig:removal}
\end{center}
\end{figure}

We will illustrate this process through a concrete example. Suppose $(s)\Vdash (2,1)$. Consider the qudit representation with $d \geq 3$, i.e., the YT has a minimum of three rows. Suppose we start with the representation $\Yvcentermath1\scriptsize\young(12)$. When we tensor an anti-fundamental representation, to ensure the outcome is a valid YT, one possibility is to remove the box $\Yvcentermath1\scriptsize\young(2)$. In this case, we denote the step by crossing off the box and marking the index, $3$, of the register, as shown in the first pink box from the top in~\cref{fig:mixed_btree}. Alternatively, we can remove one box from the last row. The $d-2$ rows that we have skipped over are represented as a zig-zag line, as shown in the second pink box from the top in~\cref{fig:mixed_btree}.

There is also a generalization of Schur-Weyl duality for systems with mixed symmetry.  In this case, the walled-Brauer algebra plays a role analogous to that of the symmetric group algebra $A_{S_n}$ in the standard Schur-Weyl duality, exhibiting a similar duality with the representation of $GL(d)$.

\begin{figure}[h]
\begin{center}
\includegraphics[scale=0.17]{"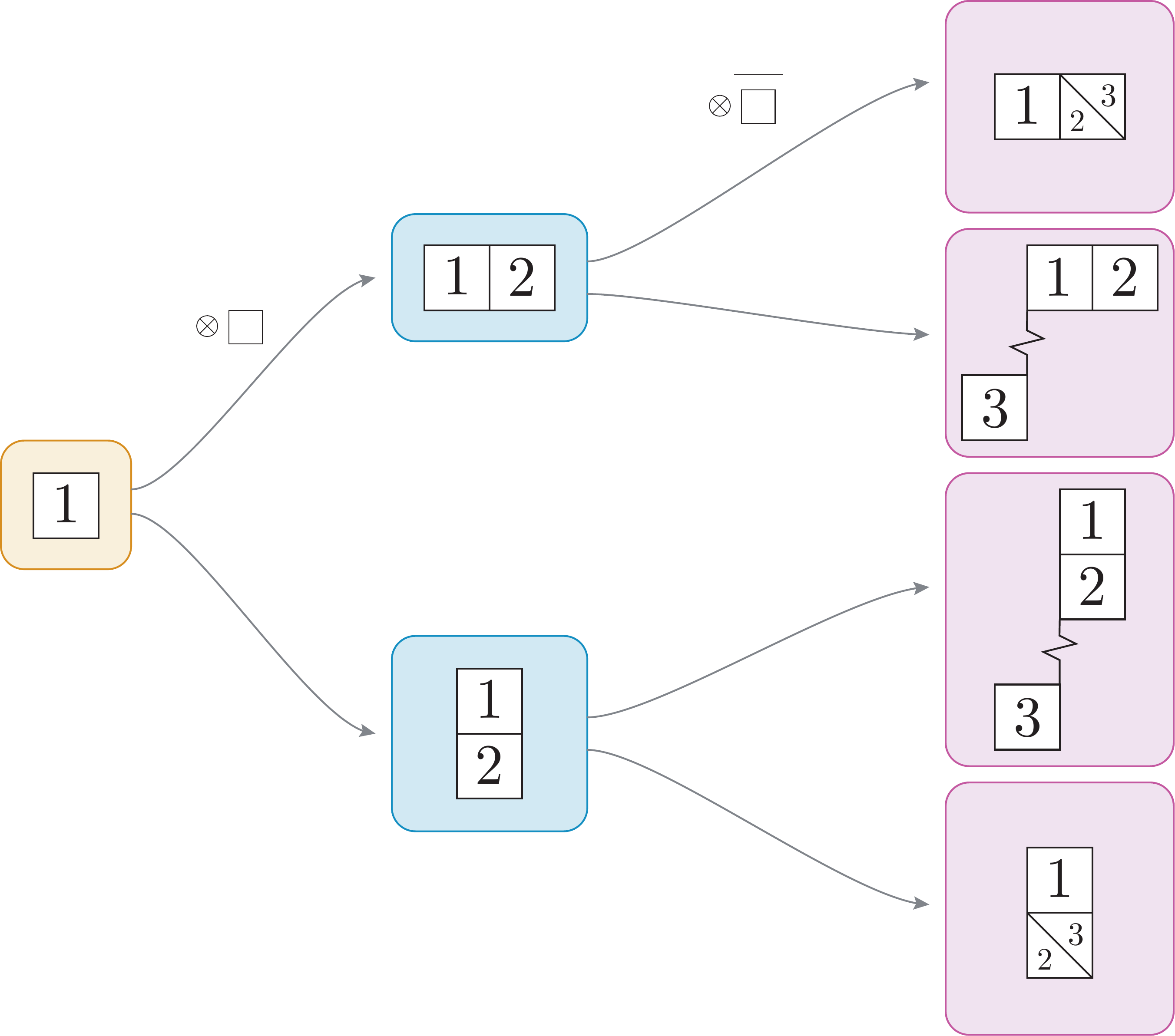"}
\end{center}

\caption{\label{fig:mixed_btree} Bratteli tree for mixed tensor product spaces in the $(2,1)$ configuration. Starting from the left, we first take the tensor product of two copies of the fundamental representation and decompose it into irrep subspaces. For each subspace, we then tensor in another anti-fundamental representation and further decompose into irreps labeled by YTs.}
\end{figure}

\section{Construction of the optimal QPA protocol}
\label{sec:construction}

This section is in correspondence with~\cref{subsec:symmetry,subsec:optimality,subsec:operational} of the main text. We first formalize the symmetry argument and reduction to the LP problem in~\cref{subsec:symmetry_supp} followed by the completion of the optimality proof, which characterizes the optimal QPA protocol~\footnote{We adopt this terminology to distinguish between similar concepts, such as purifying density states to a larger Hilbert space, virtual purification~\cite{HMOL20}, preparing a known target state~\cite{NUY04}, or performing state projection under continuous measurements~\cite{RCMW12}.Moreover, QPA is analogous to amplitude amplification protocols, as both require multiple iterations of a black-box procedure to enhance a specific performance metric—fidelity in the case of QPA and success probability in amplitude amplification} through its Choi representations in~\cref{subsec:optimality_supp}, leading to the main result,~\cref{thm:choi}. We will then show that this definition is equivalent to the three-stage operational protocol in~\cref{subsec:operational_supp}. Finally, we analyze the fidelity in~\cref{subsec:fidelity_supp}. 

\subsection{Symmetry Analysis and Reduction to LP}
\label{subsec:symmetry_supp}
As introduced in~\cref{subsec:symmetry}, we need to find $T^{\yd{\varsigma}}$ by solving the following SDP for each sampling outcome $\yd{\varsigma}$:
\begin{equation}
  \label{equ:subSDP}
    \begin{aligned}
\text{find a matrix} \quad & T^{\yd{\varsigma}}, \\ 
\text{maximize} \quad & \mathrm{tr}((C^{\yd{\varsigma}})^\top T^{\yd{\varsigma}}),\\
\text{subject to} \quad & \mathrm{tr_{out}}T^{\yd{\varsigma}}=\mathbb{I}^{\yd{\varsigma}},\\
& T^{\yd{\varsigma}} \succeq 0.
\end{aligned}
\end{equation}
Here, the cost matrix is $C^{\yd{\varsigma}}=\int_{\mu_\mathrm{Haar}}\ketbra{\psi}{\psi}^\top\otimes\rho^{\yd{\varsigma}}$, and the constraints ensures that $T^{\yd{\varsigma}}$ is the Choi state of a valid quantum channel.
By incorporating covariance symmetry, we can further reduce this SDP into an LP by using the following properties of invariant operators.
\begin{proposition}[Invariant Operator]
    An invariant operator $M^{\yd{\varsigma}}$, i.e. an operator that satisfies the condition $\left[U^\ast_{\mathrm{out}}\otimes U^{\yd{\varsigma}}_{\mathrm{in}},M^{\yd{\varsigma}} \right]=0$ for all $U$ of $SU(d)$, can be decomposed as
    \begin{align}
       U_{\mathrm{CG}}^{\yd{\varsigma}} M^{\yd{\varsigma}} U_{\mathrm{CG}}^{\yd{\varsigma}\dagger} = \bigoplus_{\text{feasible}\,i} m_i \id^{\yd{\mu_i}}.
    \end{align}
    \label{prop:cov-choi}
\end{proposition}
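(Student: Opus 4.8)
The plan is to read this proposition off from Schur's lemma applied to the Clebsch–Gordan (CG) decomposition of the $SU(d)$ representation $\overline{\square}\otimes\yd{\varsigma}$; it is really the abstract statement underlying the computation already invoked for the cost matrix $C^{\yd{\varsigma}}$ in~\cref{equ:DCG}.

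First I would recall the branching rule for tensoring an irrep with the anti-fundamental representation. Since $\overline{\square}$ is, up to the one-dimensional determinant representation, the totally antisymmetric power $[1^{d-1}]$, the (dual) Pieri rule gives a multiplicity-free decomposition $\overline{\square}\otimes\yd{\varsigma}\cong\bigoplus_{\text{feasible}\,i}\yd{\mu_i}$, where $\yd{\mu_i}$ is the diagram obtained by deleting the last box of row $i$ of $\yd{\varsigma}$ whenever that yields a valid YD, and the $\yd{\mu_i}$ are pairwise inequivalent. This is precisely the step appearing in the mixed Bratteli tree of~\cref{fig:mixed_btree}: tensoring by $\overline{\square}$ removes one box. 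By construction, the dual CG transform $U^{\yd{\varsigma}}_{\mathrm{CG}}$ is the unitary that realizes this isomorphism, so $U^{\yd{\varsigma}}_{\mathrm{CG}}\bigl(U^\ast_{\mathrm{out}}\otimes U^{\yd{\varsigma}}_{\mathrm{in}}\bigr)U^{\yd{\varsigma}\dagger}_{\mathrm{CG}}=\bigoplus_{\text{feasible}\,i}U^{\yd{\mu_i}}$ for every $U\in SU(d)$.

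Next I would conjugate $M^{\yd{\varsigma}}$ by $U^{\yd{\varsigma}}_{\mathrm{CG}}$ and set $\widetilde M=U^{\yd{\varsigma}}_{\mathrm{CG}}M^{\yd{\varsigma}}U^{\yd{\varsigma}\dagger}_{\mathrm{CG}}$. Since conjugation by a unitary preserves commutators, the invariance hypothesis $[U^\ast_{\mathrm{out}}\otimes U^{\yd{\varsigma}}_{\mathrm{in}},M^{\yd{\varsigma}}]=0$ becomes $[\bigoplus_i U^{\yd{\mu_i}},\widetilde M]=0$ for all $U\in SU(d)$. Writing $\widetilde M$ in block form $\widetilde M_{ij}$ with respect to the CG summands, this commutation says each $\widetilde M_{ij}$ intertwines $U^{\yd{\mu_j}}$ with $U^{\yd{\mu_i}}$; Schur's lemma then forces $\widetilde M_{ij}=0$ for $i\neq j$ because $\yd{\mu_i}\not\cong\yd{\mu_j}$, and $\widetilde M_{ii}=m_i\,\id^{\yd{\mu_i}}$ for scalars $m_i$ because each $\yd{\mu_i}$ is irreducible. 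Assembling the blocks yields $U^{\yd{\varsigma}}_{\mathrm{CG}}M^{\yd{\varsigma}}U^{\yd{\varsigma}\dagger}_{\mathrm{CG}}=\bigoplus_{\text{feasible}\,i}m_i\,\id^{\yd{\mu_i}}$, as claimed.

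The Schur's-lemma step is routine; the one point that demands care is the representation-theoretic input in the second paragraph — that $\overline{\square}\otimes\yd{\varsigma}$ is multiplicity-free with summands labeled exactly by the feasible row-deletions $\yd{\mu_i}$. I would justify this either via the Littlewood–Richardson/Pieri rule applied to $\yd{\varsigma}\otimes[1^{d-1}]$ followed by stripping full columns of height $d$, or by citing the mixed Schur–Weyl framework of Ref.~\cite{Ngu23}. Multiplicity-freeness is exactly what guarantees that the off-diagonal CG blocks vanish, so that the result is a genuine direct sum of scalar blocks rather than merely a block-diagonal form with nontrivial diagonal blocks.
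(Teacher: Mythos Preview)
Your proposal is correct and follows essentially the same route as the paper: write $M^{\yd{\varsigma}}$ in block form with respect to the CG decomposition of $\overline{\square}\otimes\yd{\varsigma}$ and invoke Schur's lemma on the blocks. The paper's proof is terser---it simply asserts that the $\yd{\mu_i}$ are all inequivalent and applies Schur's lemma directly---whereas you spell out the branching rule and emphasize multiplicity-freeness explicitly, but the underlying argument is identical.
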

\begin{proof}
In the Schur basis, we divide the matrix of $M^{\yd{\varsigma}}$ into blocks corresponding to intertwining maps between irreps:
\begin{equation}
\begin{bmatrix}
M^{\yd{\mu_{i_1}}} & M^{\yd{\mu_{i_1}}, \yd{\mu_{i_2}}} & \dots & M^{\yd{\mu_{i_1}}, \yd{\mu_{i_l}}} \\
M^{\yd{\mu_{i_2}}, \yd{\mu_{i_1}}} & M^{\yd{\mu_{i_2}}} & \dots & M^{\yd{\mu_{i_2}}, \yd{\mu_{i_l}}} \\
\vdots & \vdots & \ddots & \vdots \\
M^{\yd{\mu_{i_l}}, \yd{\mu_{i_1}}} & M^{\yd{\mu_{i_l}}, \yd{\mu_{i_2}}} & \dots & M^{\yd{\mu_{i_l}}}
\end{bmatrix},
\end{equation}
where the index runs over all feasible $i$'s, i.e. $i_1,\cdots,i_l$, for some $l\leq d$~\footnote{Mathematically, a row $i$ is considered feasible if either: $i = d$ and $\varsigma_d > 0$, or $1 \leq i \leq d-1$ and $\varsigma_i > \varsigma_{i+1}$. In such cases, we define $\yd{\mu_i}=\yd{\varsigma}-\yd{\delta_i}$, with a $1$ in the $i$-th position and $0$ elsewhere, i.e., $\yd{\delta_i}=\yd{0,\cdots,1 \tiny{\text{(at position i)}},\cdots,0}$}. By Schur's Lemma, non-trivial intertwining maps between two irreps exist only when the irreps are equivalent, and in such cases, the map is proportional to the identity. Since the irreps labeled by $\yd{\mu_i}$ are all inequivalent, we conclude that $M^{\yd{\mu_i}} = m_i \id^{\yd{\mu_i}}$ for each feasible $i$, while the off-diagonal blocks are $\mathbf{0}$. Therefore, $M^{\yd{\varsigma}}$ has a block diagonal form.

\end{proof}
According to this proposition, $ C^{\yd{\varsigma}}$ decomposes into blocks. Because the Choi matrix $T^{\yd{\varsigma}}$ has the same symmetry as $C^{\yd{\varsigma}}$, due to the average argument in~\cite{BLM+22}, this decomposition holds as well:
\begin{equation}
\begin{aligned}
    U^{\yd{\varsigma}}_{\text{CG}} C^{\yd{\varsigma}} U^{\yd{\varsigma}\dagger}_{\text{CG}} = \bigoplus_{\text{feasible } i} c_i \id^{\yd{\mu_i}},\\
     U^{\yd{\varsigma}}_{\text{CG}} T^{\yd{\varsigma}} U^{\yd{\varsigma}\dagger}_{\text{CG}} = \bigoplus_{\text{feasible } i} t_i \id^{\yd{\mu_i}}.
\end{aligned}
\end{equation}
With $T^{\yd{\varsigma}}$ now parameterized by the coefficients $t_i$, our goal is to translate the constraints on $T^{\yd{\varsigma}}$ into conditions on these coefficients. The constraints from~\cref{equ:subSDP}, which ensure that $T^{\yd{\varsigma}}$ is a valid Choi state, imply that the coefficients must satisfy $t_i \geq 0$ for all feasible $i$. Additionally, the normalization condition $\sum_{\text{feasible}\,i} t_i d^{\yd{\mu_i}} = d^{\yd{\varsigma}}$ must hold. Finally, we will derive the new objective functions, which are given by:
\begin{equation}
    \Tr(C^{\yd{\varsigma}}T^{\yd{\varsigma}}) 
    = \sum_{\text{feasible } i} c_it_i \Tr(\id^{\yd{\mu_i}})=\sum_{\text{feasible } i}d^{\yd{\mu_i}} c_it_i. 
\end{equation}
For clarity, we use the coefficients $f^{\yd{\mu_i}}=d^{\yd{\varsigma}}c_i$ and the normalized variables $\overline{t}_i = t_i d^{\yd{\mu_i}}/d^{\yd{\varsigma}}$, the LP formulation is given by:
\begin{equation}
\label{equ:lp}
\begin{aligned}
    \text{find a vector}\quad&\overline{t}_i \quad( i\, \text{ranges through all feasible indices}),\\
    \text{that maximizes} \quad &  \sum_{\text{ feasible }i}f^{\yd{\mu_i}}\overline{t}_i,\\
    \text{subjected to} \quad & \sum_{\text{ feasible }i}\overline{t}_i=1,\quad\overline{t}_i\geq 0.
\end{aligned} 
\end{equation}

\subsection{Proof of the Optimal Choi Matrix}
\label{subsec:optimality_supp}
The main result of the optimality proof,~\cref{thm:choi}, is restated below:
\begin{theorem}
    The Choi matrix of the optimal QPA protocol is given by $T=U_{\mathrm{mSch}}^\dag\bigoplus_{\yd{\varsigma}}\frac{d^{\yd{\varsigma}}}{d^{\yd{\mu_{i^\ast}}}}\Pi^{\yd{\mu_{i^\ast}}}U_{\mathrm{mSch}}$, where $i^\ast$ is the smallest feasible row index, and $
\Pi^{\yd{\mu_{i}}} = \bigoplus_{j < i} \mathbf{0}^{\yd{\mu_j}} \oplus \id^{\yd{\mu_{i}}} \oplus \bigoplus_{j > i} \mathbf{0}^{\yd{\mu_j}}
$ is the projector onto the block corresponding to the irrep $\yd{\mu_{i}}$.
\end{theorem}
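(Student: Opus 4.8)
The plan is to solve, for each Schur-sampling outcome $\yd{\varsigma}$, the linear program \cref{equ:lp} derived in \cref{subsec:symmetry_supp}, and then reassemble the Choi matrix. The objective $\sum_{\text{feasible }i} f^{\yd{\mu_i}}\overline{t}_i$ is a linear functional on the probability simplex $\{\overline{t}:\overline{t}_i\ge 0,\ \sum_i\overline{t}_i=1\}$, so it is maximized at an extreme point $\overline{t}=e_{i^\ast}$ with $i^\ast\in\arg\max_i f^{\yd{\mu_i}}$ (ties broken arbitrarily). Undoing the substitutions $f^{\yd{\mu_i}}=d^{\yd{\varsigma}}c_i$ and $\overline{t}_i=t_i d^{\yd{\mu_i}}/d^{\yd{\varsigma}}$ gives $t_{i^\ast}=d^{\yd{\varsigma}}/d^{\yd{\mu_{i^\ast}}}$ and $t_i=0$ otherwise, so by \cref{prop:cov-choi} the branch Choi matrix is $T^{\yd{\varsigma}}=\frac{d^{\yd{\varsigma}}}{d^{\yd{\mu_{i^\ast}}}}U_{\mathrm{CG}}^{\yd{\varsigma}\dagger}\Pi^{\yd{\mu_{i^\ast}}}U_{\mathrm{CG}}^{\yd{\varsigma}}$, with the $g^{\yd{\varsigma}}$-fold multiplicity space a spectator (one checks $\Tr_{\mathrm{out}}$ of this equals $\id^{\yd{\varsigma}}$, matching the normalization constraint $\sum_i t_i d^{\yd{\mu_i}}=d^{\yd{\varsigma}}$). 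Summing over $\yd{\varsigma}$ and folding $\bigoplus_{\yd{\varsigma}}U_{\mathrm{CG}}^{\yd{\varsigma}}$ together with $U_{\mathrm{Sch}}$ into $U_{\mathrm{mSch}}$ via \cref{equ:Msch} yields the stated formula. Everything therefore reduces to identifying $\arg\max_i c_i$.

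To get at the $c_i$, I would simplify the branch cost matrix $C^{\yd{\varsigma}}=\int_{\mu_{\mathrm{Haar}}}\ketbra{\psi}{\psi}^\top_{\mathrm{out}}\otimes\rho^{\yd{\varsigma}}_{\mathrm{in}}$ with $\rho=\mathcal{D}_\lambda(\ketbra{\psi}{\psi})$. Since the ``out'' leg carries the antifundamental $\overline{\square}$, conjugating by $U_{\mathrm{CG}}^{\yd{\varsigma}}$ block-diagonalizes $C^{\yd{\varsigma}}$ into the pieces $c_i\id^{\yd{\mu_i}}$ of \cref{prop:cov-choi}. The heart of the argument is the ``projection argument'': relate $C^{\yd{\varsigma}}$, up to an affine transformation that leaves the argmax unchanged, to the Haar average $C^{\prime\yd{\varsigma}}$ of the mixed-tensor $GL(d)$-representation of $\rho$ with Young pattern $\yd{\varsigma}$ and the out leg in $\overline{\square}$. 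Because the relevant $GL(d)$ characters are Schur polynomials in the spectrum of $\rho$, the block coefficients then evaluate, after normalization, to $c'_i\propto -S^{\yd{\mu_i}}(x)$ with $x=(\lambda/d,\dots,\lambda/d,\,1-\lambda+\lambda/d)$ the eigenvalues of $\rho$ and a positive proportionality constant; hence $\arg\max_i c_i=\arg\min_i S^{\yd{\mu_i}}(x)$.

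It remains to order the candidates. The one-box deletions $\yd{\mu_i}=\yd{\varsigma}-\yd{\delta_i}$, all partitions of $n-1$, are totally ordered by majorization: writing $P_k(\yd{\alpha})=\sum_{a\le k}\alpha_a$, for feasible $i<j$ one has $P_k(\yd{\mu_j})-P_k(\yd{\mu_i})=\mathbf{1}[i\le k<j]\ge 0$, so $\yd{\mu_j}$ majorizes $\yd{\mu_i}$; thus $\yd{\mu_{i^\ast}}$ with $i^\ast$ the smallest feasible row is the majorization-minimal deletion — equivalently, removing the bottom box of the rightmost column of $\yd{\varsigma}$. Combining this with the monotonicity of normalized Schur polynomials under majorization of the indexing partition (at a fixed nonnegative argument) from Ref.~\cite{S16} — $\yd{\mu}$ majorizes $\yd{\nu}$ implies $S^{\yd{\mu}}(x)\ge S^{\yd{\nu}}(x)$ — the minimum of $S^{\yd{\mu_i}}(x)$ over feasible $i$ is attained at $i=i^\ast$. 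Substituting back into the first paragraph gives the theorem. (I would also record that $\rho$ has a simple largest eigenvalue for every $\lambda\in(0,1)$, and verify that the inequality of Ref.~\cite{S16} still holds when $d-1$ of its arguments coincide, as they do here.)

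The simplex argument and the reassembly of $T$ are routine; the main obstacle is the second paragraph — carrying out the projection identity and the character evaluation so that the $c_i$ emerge exactly as signed, normalized Schur polynomials, with the right normalization and, crucially, the right sign so that maximizing fidelity becomes a majorization-minimization — and then matching it against the precise form of the Schur-polynomial inequality of Ref.~\cite{S16}. This is where earlier analyses~\cite{GO23} stalled and where the combinatorics does the real work.
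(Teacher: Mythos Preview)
Your proposal is correct and follows essentially the same route as the paper: reduce to the LP on the simplex, replace $C^{\yd{\varsigma}}$ by the more symmetric $C^{\prime\yd{\varsigma}}$ built from the mixed-tensor representation of $\rho$ via an affine relation that is invisible to the optimization (the ``projection argument''), identify the coefficients $c'_i$ with $-S^{\yd{\mu_i}}$ evaluated on the spectrum of $\rho$, and then invoke the majorization monotonicity of normalized Schur polynomials from Ref.~\cite{S16} to single out the smallest feasible $i^\ast$. The step you flag as the main obstacle is carried out in the paper exactly as you anticipate: the affine link is $\ketbra{\psi}{\psi}=A(-\rho^{-1})+B\,\id_d/d$ with $A>0$, and the projector $P=\mathrm{Id}-\frac{\id}{d}\otimes\Tr_{\mathrm{out}}$ kills the $B$-term, so $P(C^{\yd{\varsigma}})=A\,P(C^{\prime\yd{\varsigma}})$ and the two SDPs share optima.
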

Here, the unitary $U_{\mathrm{mSch}}$ performs the mixed Schur transform as defined in~\cref{equ:Msch}~\cite{AW18,SHM13,FTH23,GBO23}. In other words, after getting $\rho^{\yd{\varsigma}}$ with $\yd{\varsigma}\vdash n$ from Schur sampling, the optimal channel acting on it has a Choi represenation
$T^{\yd{\varsigma}}=U_{\text{CG}}^\dag \frac{d^{\yd{\varsigma}}}{d^{\yd{\mu_{i^\ast}}}}\Pi^{\yd{\mu_{i^\ast}}}U_{\mathrm{CG}}$.
The rest of the section is devoted to proving this theorem.

To satisfy the constraints and maximize the objective function, the solution to the LP is achieved by setting $\overline{t}_{i^\ast} = 1$ for the largest coefficient $c_{i^\ast}$, with all other entries of $\overline{t}_i$ set to zero. However, directly evaluating $c_i$ is challenging. Instead, we reformulate the problem using a different cost matrix 
\begin{align}
    C^{\prime}=&-\int_{\mu_\mathrm{Haar}}\left({\rho^\top}^{-1}\otimes\rho^{\otimes n}\right),
\end{align} which exhibits Walled-Brauer symmetry and is easier to evaluate.

The reason we can make such replacement lies in the relationship $\ketbra{\psi}{\psi}=A(-{\rho}^{-1})+B\frac{\mathbb{I}_d}{d}$, where $A=\frac{\lambda(d(1-\lambda)+\lambda)}{d^2(1-\lambda)}$ and $B=d+\frac{\lambda}{1-\lambda}$. This shows that $C^{\prime\yd{\varsigma}}$ is related to $C^{\yd{\varsigma}}$ by an affine transformation:
\begin{equation}
    \label{equ:C_and_Cprime}
    \begin{aligned}
     C^{\yd{\varsigma}} &= \int_{\mu_{\mathrm{Haar}}} \left(A(-{\rho^\top}^{-1})+B\frac{\mathbb{I}_d}{d}\right) \otimes \rho^{\yd{\varsigma}} \\
      &= A C^{\prime\yd{\varsigma}} + B \frac{\mathbb{I}_d}{d} \otimes \int_{\mu_{\mathrm{Haar}}} \rho^{\yd{\varsigma}} \\
      &=  A C^{\prime\yd{\varsigma}} + B \Tr(\rho^{\yd{\varsigma}}) \frac{\mathbb{I}_d}{d} \otimes \frac{\mathbb{I}^{\yd{\varsigma}}}{d^{\yd{\varsigma}}}.
\end{aligned}
\end{equation}

It turns out this affine transformation does not affect the optimization, as shown by a projection argument. For an SDP, the linear constraints defines an affine subspace $\mathbb{U}=\{x|Lx=b\}\subset\mathbb{V}$ for some surjective linear operator $L:\mathbb{V}\to\mathbb{W}$ and $b\in\mathbb{W}$, illustrated as the upper blue plane in~\cref{fig:projection}. This hyperplane intersects with the positive semi-definite (PSD) cone, shown as the pink cone, defining the set of feasible solutions. If the cost matrix $C^{\yd{\varsigma}}$ (dark green arrow) has a component perpendicular to the hyperplane (dashed lines), it is not going to affect the optimization. Motivated by this, we project onto the parallel, homogeneous hyperplane $\mathbb{U}^\prime=\{x|Lx=0\}$ using the projector 
$P=1-L^+L$, where $L^+=L^\star(LL^\star)^{-1}$ is the pseudoinverse of $L$, and $L^\star$ denotes the adjoint map $L^\star:\mathbb{W}\to\mathbb{V}$ under the usual inner product $\Tr(\cdot^\dag\cdot)$.

In our case, $\mathbb{V}$ represents the space of linear operators acting on the out and in registers, and $\mathbb{U}$ is the hyperplane of operators that satisfy (up to a constant) the Choi matrix trace-preserving condition: $\mathrm{tr}_\mathrm{out}(\cdot)=\frac{\mathbb{I}^{\yd{\varsigma}}}{d^{\yd{\varsigma}}}$. Therefore, the feasible set defines all valid Choi states. Since the trace map on the out register satisfies $\Tr_\mathrm{out}^+(\cdot)=\Tr_\mathrm{out}^\star(\cdot)=\frac{\mathbb{I}}{d}\otimes\cdot$, the projector takes the following form:
\begin{lemma}
    $P=\mathrm{Id}-\frac{\mathbb{I}}{d}\otimes\mathrm{tr_{out}}$ is a projector acting on $\mathbb{V}_{\mathrm{out}}\otimes\mathbb{V}^{\yd{\varsigma}}_{\mathrm{in}}$. 
\end{lemma}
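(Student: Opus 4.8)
The plan is to verify directly that $P$ is idempotent, and then identify its range with the homogeneous constraint subspace $\mathbb{U}' = \{x \mid \mathrm{tr}_{\mathrm{out}}(x) = 0\}$, thereby confirming that $P$ is exactly the orthogonal projector delivered by the general formula $P = \mathrm{Id} - L^+ L$ with $L = \mathrm{tr}_{\mathrm{out}}$.

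First I would record the adjoint of the partial trace with respect to the Hilbert--Schmidt inner product $\langle A, B\rangle = \Tr(A^\dagger B)$. For $X \in \mathbb{V}_{\mathrm{out}} \otimes \mathbb{V}^{\yd{\varsigma}}_{\mathrm{in}}$ and $Y \in \mathbb{V}^{\yd{\varsigma}}_{\mathrm{in}}$, the chain $\Tr\!\bigl(\mathrm{tr}_{\mathrm{out}}(X)^\dagger Y\bigr) = \Tr\!\bigl(\mathrm{tr}_{\mathrm{out}}(X^\dagger)\,Y\bigr) = \Tr\!\bigl(X^\dagger(\mathbb{I}_{\mathrm{out}} \otimes Y)\bigr)$ shows $\mathrm{tr}_{\mathrm{out}}^\star(Y) = \mathbb{I}_{\mathrm{out}} \otimes Y$. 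Composing gives $\mathrm{tr}_{\mathrm{out}}\,\mathrm{tr}_{\mathrm{out}}^\star(Y) = d\,Y$, where $d = \dim \mathbb{V}_{\mathrm{out}}$, so $(LL^\star)^{-1} = \tfrac{1}{d}\mathrm{Id}$ and $L^+ L = L^\star (LL^\star)^{-1} L = \tfrac{\mathbb{I}}{d}\otimes\mathrm{tr}_{\mathrm{out}}$, which matches the claimed $P$.

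Next I would check $P^2 = P$ by hand, which reduces to the elementary identity $\mathrm{tr}_{\mathrm{out}}\!\bigl(\tfrac{\mathbb{I}_{\mathrm{out}}}{d}\otimes M\bigr) = M$ for every $M \in \mathbb{V}^{\yd{\varsigma}}_{\mathrm{in}}$: writing $Q = \tfrac{\mathbb{I}}{d}\otimes\mathrm{tr}_{\mathrm{out}}$, this identity gives $Q^2 = Q$, hence $P^2 = \mathrm{Id} - 2Q + Q^2 = \mathrm{Id} - Q = P$. Self-adjointness of $P$ follows since $Q = L^+ L$ with $LL^\star$ positive and self-adjoint, so $P$ is an orthogonal projector. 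To pin down the range I would observe that $\mathrm{tr}_{\mathrm{out}}(P(X)) = \mathrm{tr}_{\mathrm{out}}(X) - \mathrm{tr}_{\mathrm{out}}(X) = 0$ for all $X$, while $P(X) = X$ whenever $\mathrm{tr}_{\mathrm{out}}(X) = 0$; together these show $\mathrm{Im}(P) = \ker(\mathrm{tr}_{\mathrm{out}}) = \mathbb{U}'$, the hyperplane onto which we wish to project.

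There is no real obstacle here: the whole argument rests on the single identity $\mathrm{tr}_{\mathrm{out}}(\mathbb{I}_{\mathrm{out}}\otimes M) = dM$ together with routine adjoint bookkeeping. The one place meriting attention is keeping the constant $d = \dim\mathbb{V}_{\mathrm{out}}$ consistent — it enters both through $\Tr(\mathbb{I}_{\mathrm{out}}) = d$ and through the explicit $\tfrac{1}{d}$ in $Q$ — so that the cross term in the expansion of $P^2$ cancels exactly; an off-by-$d$ slip there would break idempotency.
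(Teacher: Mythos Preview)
Your proposal is correct and follows essentially the same approach as the paper: both derive $P$ from the general formula $P = \mathrm{Id} - L^+L$ with $L = \mathrm{tr}_{\mathrm{out}}$, computing the pseudoinverse via the adjoint of the partial trace. You supply more detail than the paper does (which simply asserts $\Tr_{\mathrm{out}}^+ = \frac{\mathbb{I}}{d}\otimes\cdot$ and moves on), including a direct verification of idempotency and an identification of the range with $\ker(\mathrm{tr}_{\mathrm{out}})$; in fact your adjoint computation $\mathrm{tr}_{\mathrm{out}}^\star(Y) = \mathbb{I}\otimes Y$ (without the $1/d$) is the correct one, with the factor of $d^{-1}$ entering only through $(LL^\star)^{-1}$.
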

\begin{figure}[H]
\begin{center}
\includegraphics[scale=0.5]{"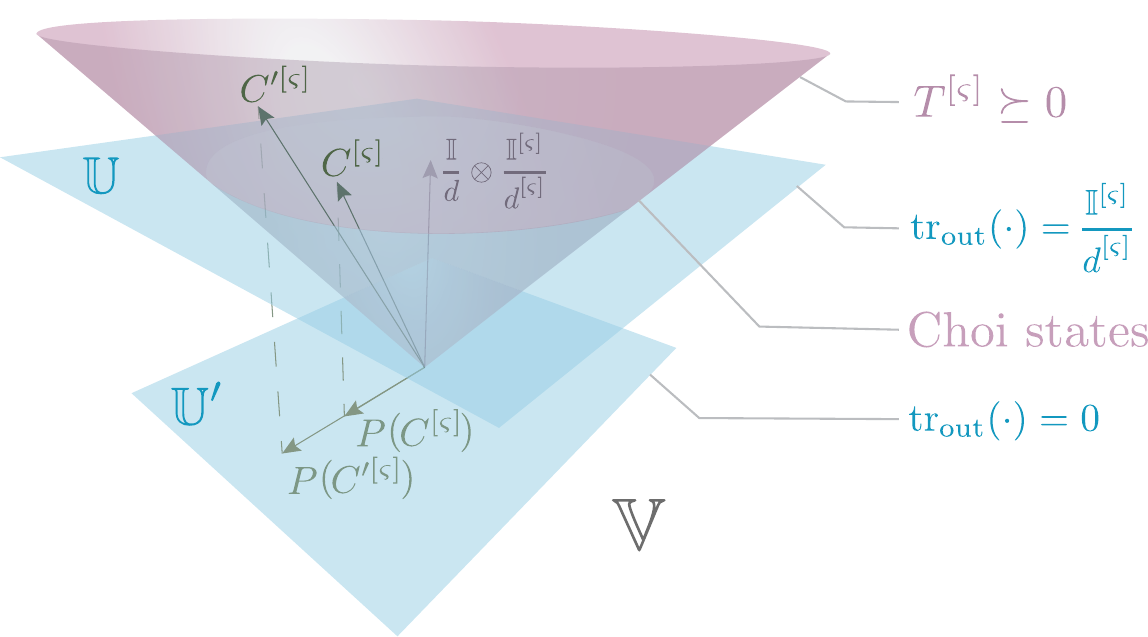"}
\caption{The ambient space $\mathbb{V}$ is the linear subspace of all operators acting on the out and in registers. The pink cone represents the PSD cone, which intersects with the hyperplane $\mathbb{U}$ corresponding to the affine subspace where the partial trace is proportional to the identity. This intersection defines a convex set containing all valid Choi states. While $C^{\yd{\varsigma}}$ is a valid Choi state, $C^{\prime\yd{\varsigma}}$ may not be, as it does not necessarily satisfy the trace condition. However, their projections are collinear and define the same optimization problem.}
\label{fig:projection}
\end{center}
\end{figure}

With the definition of the projection in place, we now formalize the equivalence of cost matrices in~\cref{lm:eq_C}:
\begin{lemma}[Equivalence of cost matrixs]
\label{lm:eq_C}
When solving for the optimal Choi matrices with SDP, if two cost matrices $C_1$ and $C_2$ satisfy $P(C_1) = A P(C_2)$, where $A$ is a positive scalar, then the SDP problems with either cost matrices $C_1$ or $C_2$ yield the same optimal solution $T^\ast$.
\end{lemma}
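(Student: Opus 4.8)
The plan is to show that the SDP objective changes only by an affine reparametrization when we replace $C_1$ by $C_2$, so the argmax is preserved. The key observation is that over the feasible set $\mathbb{U} \cap \{T \succeq 0\}$, the linear functional $T \mapsto \operatorname{tr}(C^\top T)$ depends on $C$ only through its projection $P(C)$ onto the homogeneous hyperplane $\mathbb{U}' = \{x : Lx = 0\}$, plus a constant that is the same for every feasible $T$. Concretely, every feasible $T$ can be written as $T = T_0 + \Delta$ with $T_0$ a fixed particular solution of $LT = b$ and $\Delta \in \mathbb{U}'$; since $\mathbb{U}' = \operatorname{range}(P)$ and $\mathbb{U}'^{\perp} = \ker(P) = \operatorname{range}(L^{\star})$, we get $\operatorname{tr}(C^\top T) = \operatorname{tr}(C^\top T_0) + \operatorname{tr}((PC)^\top \Delta)$, because $\operatorname{tr}((C - PC)^\top \Delta) = 0$ when $\Delta \in \operatorname{range}(P)$ and $C - PC \in \operatorname{range}(L^\star)$. (Here I use that $P$ is the orthogonal projector onto $\mathbb{U}'$ with respect to the inner product $\operatorname{tr}(\cdot^\dagger \cdot)$, which is immediate from $P = \mathrm{Id} - L^+ L$ and the fact that $L^+ L$ is an orthogonal projector; in our setting this is exactly the $P$ of the preceding lemma.)

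First I would record this decomposition precisely: for $i = 1,2$ and any feasible $T$,
\[
\operatorname{tr}(C_i^\top T) = \operatorname{tr}(C_i^\top T_0) + \operatorname{tr}\big((P(C_i))^\top (T - T_0)\big).
\]
Next I would invoke the hypothesis $P(C_1) = A\, P(C_2)$ with $A > 0$ to conclude
\[
\operatorname{tr}(C_1^\top T) - \operatorname{tr}(C_1^\top T_0) = A\left(\operatorname{tr}(C_2^\top T) - \operatorname{tr}(C_2^\top T_0)\right)
\]
for every feasible $T$, i.e. the two objective functions are related on the feasible set by a strictly increasing affine map $y \mapsto A y + \text{const}$. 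Maximizing one therefore maximizes the other over the same feasible set, so the sets of optimizers coincide; in particular any optimal $T^\ast$ for $C_1$ is optimal for $C_2$ and vice versa. One small point to check along the way is that the feasible set is identical for both problems — this is automatic, since the constraints ($LT = b$, $T \succeq 0$) do not involve the cost matrix at all.

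The only genuine subtlety — the part I would be most careful about — is the orthogonality claim $\operatorname{tr}((C - PC)^\top \Delta) = 0$ for $\Delta \in \mathbb{U}'$, since this is where self-adjointness of $P$ and the identification of $\ker P$ with $\operatorname{range}(L^\star)$ matter. This follows because $C - PC = L^+ L\, C \in \operatorname{range}(L^+) = \operatorname{range}(L^\star)$, and for $\Delta \in \mathbb{U}' = \ker L$ we have $\operatorname{tr}((L^\star w)^\dagger \Delta) = \operatorname{tr}(w^\dagger (L\Delta)) = 0$; alternatively, it is just the statement that $P$ is an orthogonal (Hermitian, idempotent) projector, which one verifies directly from $P = \mathrm{Id} - L^+L$ using $L L^+ L = L$. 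Everything else is bookkeeping. Applying the lemma with $C_1 = C^{\yd{\varsigma}}$ and $C_2 = C^{\prime\yd{\varsigma}}$, and using~\cref{equ:C_and_Cprime} together with $P\big(\tfrac{\mathbb{I}_d}{d}\otimes\tfrac{\mathbb{I}^{\yd{\varsigma}}}{d^{\yd{\varsigma}}}\big) = 0$ (the identity term lies in $\operatorname{range}(L^\star)$, hence is killed by $P$), gives $P(C^{\yd{\varsigma}}) = A\, P(C^{\prime\yd{\varsigma}})$ with $A > 0$, so the optimal Choi matrix is unchanged by the replacement, as desired.
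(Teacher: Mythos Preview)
Your proposal is correct and follows essentially the same idea as the paper: over the feasible set, the objective $\Tr(C^\top T)$ depends on $C$ only through $P(C)$ up to an additive constant, so a positive rescaling of $P(C)$ preserves the argmax. The only cosmetic difference is that the paper decomposes $C = P(C) + \tfrac{\id}{d}\otimes\Tr_{\mathrm{out}}(C)$ and uses the constraint $\Tr_{\mathrm{out}}T=\id$ directly to see that the second piece contributes the constant $\Tr(C)/d$, whereas you decompose $T = T_0 + \Delta$ and invoke orthogonality of $P$; these are dual presentations of the same computation.
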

\begin{proof}
    For any valid Choi state $T$,
    \begin{align}
        \nonumber\Tr( C_1 T) &= \Tr[ P(C_1) T] + \Tr\left[ \frac{\id_{\mathrm{out}}}{d_\mathrm{out}} \otimes \Tr_{\mathrm{out}}(C_1)T\right] \\\nonumber
        &=\Tr[ P(C_1) T] + \frac{\Tr\left[\Tr_{\mathrm{out}}(C_1) \Tr_{\mathrm{out}}(T) \right]}{d_\mathrm{out}} \\
        &=\Tr[ P(C_1) T] + \frac{\Tr(C_1)}{d_\mathrm{out}}.
    \end{align} 
    Similarly, $\Tr( C_2 T) = \Tr[ P(C_2) T] + \Tr(C_2)/d_{\mathrm{out}}$.
    The lemma follows from the condition that for all $T$, $\Tr[ P(C_1) T] = A\Tr[ P(C_2) T]$, and that $\Tr(C_1)$ and $\Tr(C_2)$ are constants independent of $T$.
\end{proof}
Returning to the case of the cost matrices $C^{\yd{\varsigma}}$ and $C^{\prime\yd{\varsigma}}$, by~\cref{equ:C_and_Cprime}, we have:
\begin{equation}
  P(C^{\yd{\varsigma}}) = A P(C^{\prime\yd{\varsigma}}) + B \Tr(\rho^{\yd{\varsigma}}) P\left(\frac{\mathbb{I}_d}{d} \otimes \frac{\mathbb{I}^{\yd{\varsigma}}}{d^{\yd{\varsigma}}}\right) = A P(C^{\prime\yd{\varsigma}}).
\end{equation}
By~\cref{lm:eq_C}, we can replace $C^{\yd{\varsigma}}$ by $C^{\prime\yd{\varsigma}}$ and reformulate our LP in~\cref{equ:lp} with $c^\prime_i$ instead of $c_i$. 

This time, $c^\prime_i$ can be easily evaluated by recognizing that ${\rho^\top}^{-1}\otimes\rho^{\yd{\varsigma}}$ is a rational representation of $\rho$. As a result, we have the decomposition ${\rho^\top}^{-1}\otimes\rho^{\yd{\varsigma}}=\bigoplus_{\text{feasible  } i} \rho^{\yd{\mu_i}}$. 
Using the Weyl character formula to express the trace of $\rho^{\yd{\mu_i}}$ as a Schur polynomial $s^{\yd{\mu_i}}$, and defining the normalized Schur polynomial  $S^{\yd{\mu_i}}$ by dividing it by the dimension of the irrep, we obtain:
\begin{equation}
c^\prime_i=-\frac{\Tr(\rho^{\yd{\mu_i}})}{d^{\yd{\mu_i}}}=-\frac{s^{\yd{\mu_i}}(a,\cdots,a,b)}{d^{\yd{\mu_i}}}=-S^{\yd{\mu_i}}(a,\cdots,a,b).
\end{equation}
Here, the arguments $a$ and $b$ in the Schur polynomials represent the two eigenvalues of the depolarized state, given by $\frac{\lambda}{d}$ and $1 - \frac{d-1}{d} \lambda$, respectively.

Now we are ready to provide an explicit solution to this LP problem by considering the ordering of normalized Schur polynomials. Based on Ref.~\cite{S16}, we know that for the following partial order defined on YDs:
\begin{equation}
\yd{\mu}\preceq\yd{\lambda}\Leftrightarrow \forall k\in\{1,\cdots,n\}:\sum_{a=1}^k\mu_a\leq\sum_{a=1}^k\lambda_a,
\end{equation}
the size of the normalized Schur polynomials for positive arguments satisfies 
\begin{equation}
S^{\yd{\mu}}\leq S^{\yd{\lambda}}\Leftrightarrow \yd{\mu}\preceq\yd{\lambda}.
\end{equation}
Here, the Schur polynomials $s^{\yd{\mu}}$, which correspond to the traces of density matrices, are always positive, and the same holds for $S^{\yd{\mu}}$.
Moreover, observe that $\yd{\mu_i}\preceq\yd{\mu_j}$ whenever $i\leq j$.
For instance, consider the sampling outcome $\yd{\varsigma}=\Yvcentermath1\scriptsize\yng(4,3,1)$, the possible $\yd{\mu_i}$ are:
$\yd{\mu_1}=\Yvcentermath1\scriptsize\yng(3,3,1)$,
$\yd{\mu_2}=\Yvcentermath1\scriptsize\yng(4,2,1)$,
$\yd{\mu_3}=\Yvcentermath1\scriptsize\yng(4,3)$. Clearly, $\yd{\mu_1}\preceq\yd{\mu_2}\preceq\yd{\mu_3}$.
Therefore, to maximize the objective function, we select the smallest feasible index \(i^\ast\) since \(\yd{\mu_i} \preceq \yd{\mu_j}\) whenever \(i < j\).
Hence, for each $\yd{\varsigma}$, the Choi representation of the optimal channel $T^{\yd{\varsigma}}$ is $U_{\text{CG}}^\dag\frac{d^{\yd{\varsigma}}}{d^{\yd{\mu_{i^\ast}}}}\Pi^{\yd{\mu_{i^\ast}}}U_{\mathrm{CG}}$.

Finally, by combining the branches $T^{\yd{\varsigma}}$ for each sector $\yd{\varsigma}$ and transforming back to the computational basis using $U_{\text{Sch}}$, we obtain the explicit construction of the optimal QPA Choi matrix $T$.

\subsection{Operational Definition of optimal QPA protocol}
\label{subsec:operational_supp} 

For this proof, we work with Young tableaux, as they facilitate the study of physical operations by embedding abstract operations into transformations within the physical space. As explained in the main text, each outcome $\yt{s}$ of the Schur sampling is processed by applying a correction followed by a partial trace over the additional registers. We now provide the proof based on the Choi matrix $T^{\yd{\varsigma}}$ for each outcome. Recall that for each YT $\yt{s}$, we can apply a correction to map it into the column-ordered YT $\yt{\varsigma^\diamond}$, where the box labeled $n$ is the one we are going to remove. For instance, both $\Yvcentermath1\scriptsize\young(1357,246)$ for $n=7$ and $\Yvcentermath1\scriptsize\young(14,25,3)$ for $n=5$ are column-ordered.

 We assume that \( \yt{s} \vdash \yd{\varsigma} \) and \( \yt{m} \vdash \yd{\mu_{i^\ast}} \). Moreover, after applying the correction, we can assume that \( \yt{s} \) is column-ordered. From this point, we need to explicitly construct the projector $\Pi_{\yt{m}}$. It turns out that in tensor network notation, the projector is obtained by bending the last leg of the projector $\Pi_{\yt{s}}$. Here, in the equations, the registers are arranged from top to bottom as $\mathrm{in}_1, \cdots, \mathrm{in}_n, \mathrm{out}$, and the thick gray lines represent multiple registers. Transcriptions of the tensor network diagrams are provided beneath each one for those who prefer non-graphic notation.

\begin{center}
\hfill
\begin{minipage}[c]{0.1\textwidth}
\begin{figure}[H]
\begin{center}
\includegraphics[scale=0.3]{"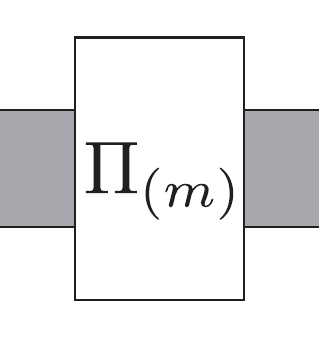"}
\end{center}
\end{figure}
\end{minipage}%
\begin{minipage}[c]{0.1\textwidth}
\begin{center}
\raisebox{0.1cm}{
\(\xlongequal{\quad}\quad \dfrac{d^{\yt{m}}}{d^{\yt{s}}}\)}
\end{center}
\end{minipage}%
\begin{minipage}[c]{0.3\textwidth}
\begin{figure}[H]
\begin{center}
\raisebox{0.15cm}{\includegraphics[scale=0.3]{"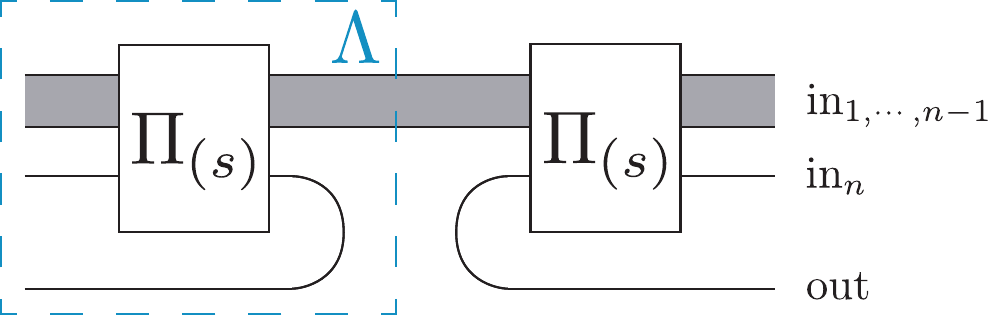"}}
\end{center}
\end{figure}
\end{minipage}%
\hfill
\raisebox{0.38cm}{
\begin{minipage}[c]{0.05\textwidth}
\begin{equation}
    \quad
    \label{equ:proj_def}
\end{equation}
\end{minipage}%
}
\end{center}
Transcription:
$$\Pi_{\yt{m}}=\frac{d^{\yt{m}}}{d^{\yt{s}}}\Pi_{\yt{s}\,\mathrm{in}}d\ketbra{\mathrm{EPR}}{\mathrm{EPR}}_{\mathrm{in}_n,\mathrm{out}}\Pi_{\yt{s}\,\mathrm{in}}.$$ 

To verify that the operator on the RHS is indeed proportional to the projector on the irrep subspace of $\yt{m}$,
note that the operator can be split as $\frac{d^{\yt{m}}}{d^{\yt{s}}}\Lambda^\dag\Lambda$, where the isometry $\Lambda$ is given by $\Lambda=\sqrt{d}\bra{\mathrm{EPR}}_{\mathrm{in}_n,\mathrm{out}}\Pi_{\yt{s}\,\mathrm{in}}$ as shown in the dashed blue box in~\cref{equ:proj_def}. If we define the parent YT of $\yt{s}$ as $\yt{s^\prime}$, i.e., $\yt{s^\prime} \to \yt{s}$,
this establishes the following relationship within the Bratteli tree:
\begin{equation}
    \yt{s^\prime} \to \yt{s} \to \yt{m}.
\end{equation}
Since $\yt{s}$ is column-ordered, the $n$-th box is always located in the lower right corner of $\yt{s}$ and will be removed when deriving $\yt{s^\prime}$; therefore, the irrep $\yt{m}$ is equivalent to $\yt{s^\prime}$. 
Upon closer inspection of $\Lambda$, we see that it exhibits the following covariance:
\begin{equation}
\Lambda \left( U_{\yt{s^\prime}\,\mathrm{in}_{1,\cdots,n-1}} \otimes U_{\mathrm{in}_n} \otimes U^\ast_{\mathrm{out}} \right) = U_{\yt{s^\prime}\,\mathrm{in}_{1,\cdots,n-1}}\Lambda.
\end{equation}
Since $\yt{m}$ is the only irrep within $\yt{s}\otimes\overline{\Box}$ that is equivalent to $\yt{s^\prime}$, by Schur's lemma, $\Lambda$ is the unique (up to scalar multiplication) intertwining map from $V_{\yt{m}}$ to $V_{\yt{s^\prime}}$. Moreover, \(\Lambda\) maps all other non-equivalent irreps to 0. After the application of \( \Lambda \), \( \Lambda^\dag \) acts as an intertwining map that isometrically lifts it back to the original space. Therefore, $\Lambda^\dag \Lambda$ is proportional to the projector onto $V_{\yt{m}}$.

To fix the normalization, we check its idempotency: 
\begin{center}
\begin{minipage}[c]{0.175\textwidth}
\begin{center}
\raisebox{0.1cm}{
$\Pi_{\yt{m}}^2= \left(\dfrac{d^{\yt{m}}}{d^{\yt{s}}}\right)^2$}
\end{center}
\end{minipage}%
\begin{minipage}[c]{0.44\textwidth}
\begin{figure}[H]
\begin{center}
\raisebox{0.15cm}{\includegraphics[scale=0.3]{"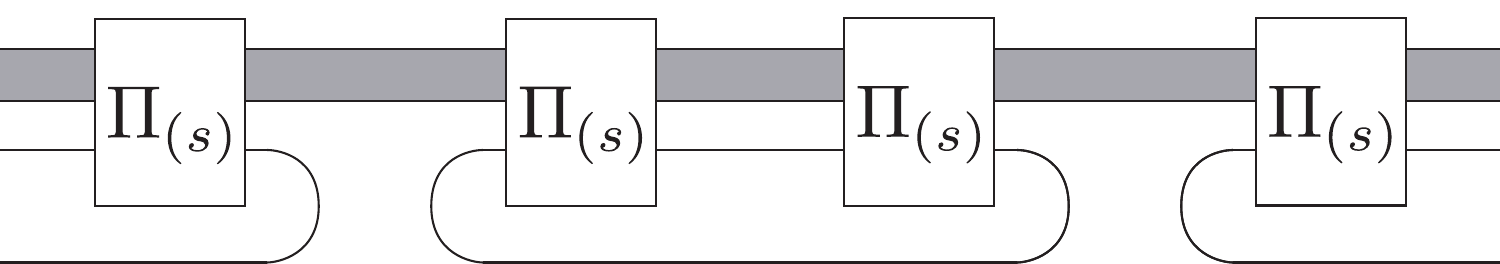"}}
\end{center}
\end{figure}
\end{minipage}%
\begin{minipage}[c]{0.05\textwidth}
\begin{center}
\raisebox{-0.1cm}{
\(\xlongequal{\quad}\)}
\end{center}
\end{minipage}%
\begin{minipage}[c]{0.34\textwidth}
\begin{figure}[H]
\begin{center}
\raisebox{0.15cm}{\includegraphics[scale=0.3]{"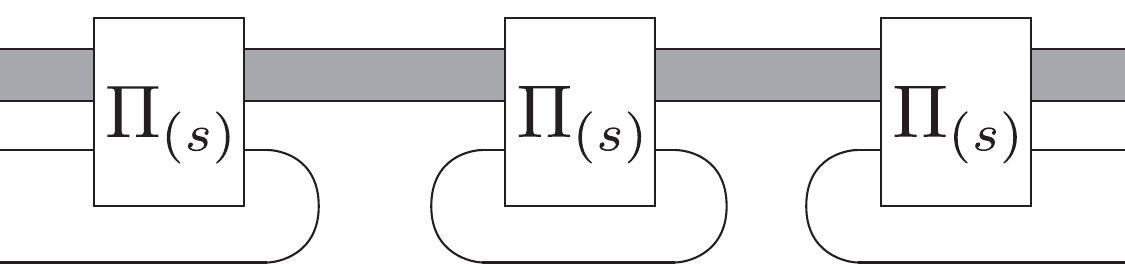"}}
\end{center}
\end{figure}
\end{minipage}%
\end{center}
\begin{center}
\hfill
\begin{minipage}[c]{0.1\textwidth}
\begin{center}
\raisebox{0.1cm}{
\(\xlongequal{\quad}\)}
\end{center}
\end{minipage}%
\begin{minipage}[c]{0.1\textwidth}
\begin{center}
\raisebox{0.1cm}{
\(\dfrac{d^{\yt{m}}}{d^{\yt{s}}}\)}
\end{center}
\end{minipage}%
\begin{minipage}[c]{0.22\textwidth}
\begin{figure}[H]
\begin{center}
\raisebox{0.15cm}{\includegraphics[scale=0.3]{"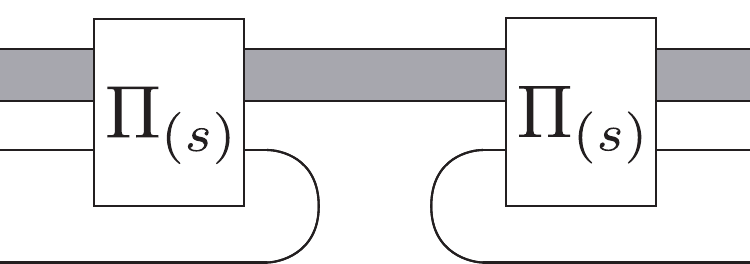"}}
\end{center}
\end{figure}
\end{minipage}%
\begin{minipage}[c]{0.2\textwidth}
\begin{center}
\raisebox{0.1cm}{
\(\xlongequal{\quad}\qquad\Pi_{\yt{m}}\)}
\end{center}
\end{minipage}%
\hfill
\raisebox{0.38cm}{
\begin{minipage}[c]{0.05\textwidth}
\begin{equation}
    \quad
\end{equation}
\end{minipage}%
}
\end{center}
Transcription:
\begin{align*}
\Pi_{\yt{m}}^2=&\left(\frac{d^{\yt{m}}}{d^{\yt{s}}}\Pi_{\yt{s}\,\mathrm{in}}d\ketbra{\mathrm{EPR}}{\mathrm{EPR}}_{\mathrm{in}_n,\mathrm{out}}\Pi_{\yt{s}\,\mathrm{in}}\right)^2\\ &=\left(\frac{d^{\yt{m}}}{d^{\yt{s}}}\right)^2\Pi_{\yt{s}\,\mathrm{in}}d\ketbra{\mathrm{EPR}}{\mathrm{EPR}}_{\mathrm{in}_n,\mathrm{out}}\Pi_{\yt{s}\,\mathrm{in}}d\ketbra{\mathrm{EPR}}{\mathrm{EPR}}_{\mathrm{in}_n,\mathrm{out}}\Pi_{\yt{s}\,\mathrm{in}}\\
    &=\left(\frac{d^{\yt{m}}}{d^{\yt{s}}}\right)^2\Pi_{\yt{s}\,\mathrm{in}}\sqrt{d}\ket{\mathrm{EPR}}_{\mathrm{in}_n,\mathrm{out}}\Tr_{\mathrm{in}_n}\left(\Pi_{\yt{s}\,\mathrm{in}}\right)\sqrt{d}\bra{\mathrm{EPR}}_{\mathrm{in}_n,\mathrm{out}}\Pi_{\yt{s}\,\mathrm{in}}\\
    &=\frac{d^{\yt{m}}}{d^{\yt{s}}}\Pi_{\yt{s}\,\mathrm{in}}d\ketbra{\mathrm{EPR}}{\mathrm{EPR}}_{\mathrm{in}_n,\mathrm{out}}\Pi_{\yt{s}\,\mathrm{in}}\\
    &=\Pi_{\yt{m}}.
\end{align*}
In the third equality (fifth in the transcription), we applied the compatibility of Hermitian projectors and note that $d^{\yt{s^\prime}}=d^{\yt{\mu^\ast}}$:\\
\begin{center}
\hfill\begin{minipage}[c]{0.1\textwidth}
\begin{figure}[H]
\begin{center}
\raisebox{0.15cm}{\includegraphics[scale=0.3]{"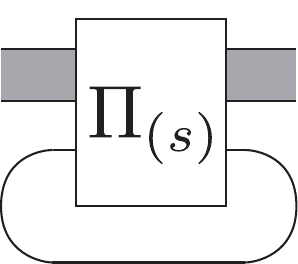"}}
\end{center}
\end{figure}
\end{minipage}%
\begin{minipage}[c]{0.1\textwidth}
\begin{center}
\raisebox{0.1cm}{
\(\xlongequal{\quad}\frac{d^{\yt{s}}}{d^{\yt{s^\prime}}}\)}
\end{center}
\end{minipage}%
\begin{minipage}[c]{0.1\textwidth}
\begin{figure}[H]
\begin{center}
\raisebox{0.15cm}{\includegraphics[scale=0.3]{"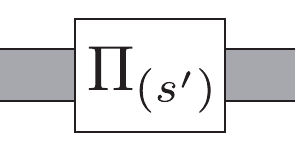"}}
\end{center}
\end{figure}
\end{minipage}%
\hfill
\raisebox{0.38cm}{
\begin{minipage}[c]{0.05\textwidth}
\begin{equation}
    \quad
\end{equation}
\end{minipage}%
}
\end{center}
Transcription:
$$
    \Tr_{\mathrm{in}_n}(\Pi_{\yt{s}\, \mathrm{in}})=\frac{d^{\yt{s}}}{d^{\yt{s^\prime}}}\Pi_{\yt{s^\prime}\, \mathrm{in}_{1,\cdots,n}}
.$$ For more insight into these projectors, see Ref.~\cite{KS18}. Finally, we apply the projector and use the necklace rule to reduce operation into a partial trace, as shown below:
\begin{center}
\hfill
\begin{minipage}[c]{0.15\textwidth}
\begin{figure}[H]
\begin{center}
\includegraphics[scale=0.3]{"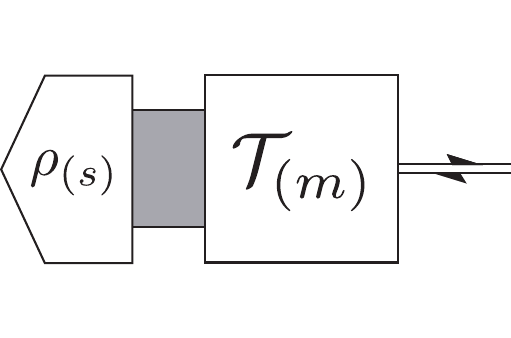"}
\end{center}
\end{figure}
\end{minipage}%
\begin{minipage}[c]{0.05\textwidth}
\begin{center}
\raisebox{0.1cm}{
\(\xlongequal{\quad}\)}
\end{center}
\end{minipage}%
\begin{minipage}[c]{0.15\textwidth}
\begin{figure}[H]
\begin{center}
\includegraphics[scale=0.3]{"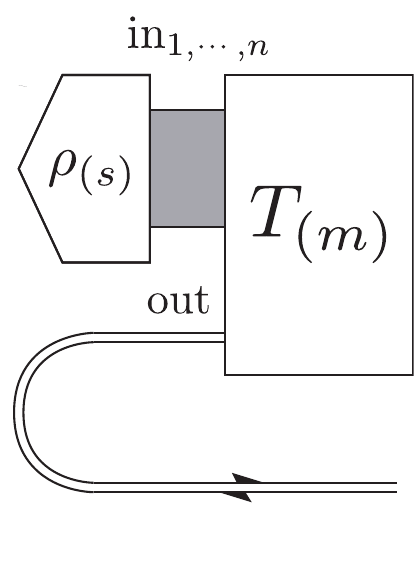"}
\end{center}
\end{figure}
\end{minipage}%
\begin{minipage}[c]{0.05\textwidth}
\begin{center}
\raisebox{0.1cm}{
\(\xlongequal{\quad}\)}
\end{center}
\end{minipage}%
\begin{minipage}[c]{0.2\textwidth}
\begin{figure}[H]
\begin{center}
\includegraphics[scale=0.3]{"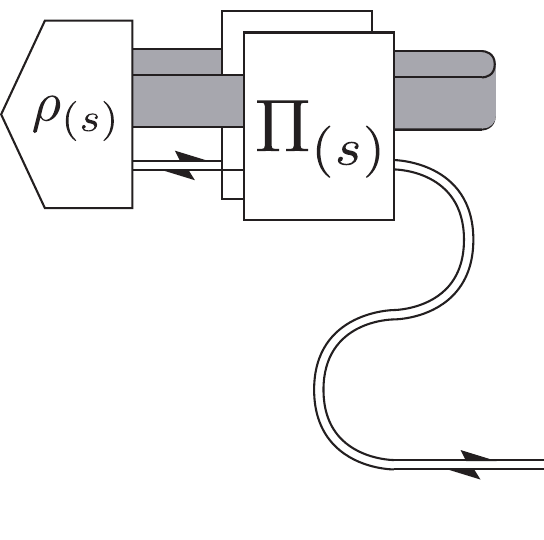"}
\end{center}
\end{figure}
\end{minipage}%
\begin{minipage}[c]{0.05\textwidth}
\begin{center}
\raisebox{0.1cm}{
\(\xlongequal{\quad}\)}
\end{center}
\end{minipage}%
\begin{minipage}[c]{0.18\textwidth}
\begin{figure}[H]
\begin{center}
\includegraphics[scale=0.3]{"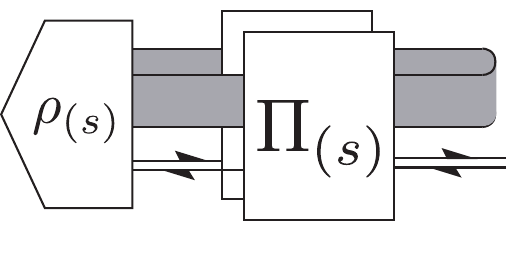"}
\end{center}
\end{figure}
\end{minipage}%
\hfill
\raisebox{0.38cm}{
\begin{minipage}[c]{0.05\textwidth}
\begin{equation}
    \quad
\end{equation}
\end{minipage}%
}
\end{center}
Transcription:
\begin{align*}
\mathcal{T}_{\yt{m}}\left(\rho_{\yt{s}}\right)=&d\Tr_{\mathrm{in, out}_1}\left(\rho_{\yt{s}\,\mathrm in} \ketbra{\mathrm{EPR}}{\mathrm{EPR}}_{\mathrm{out}_{1,2}}T_{\yt{m}\,\mathrm{in},\mathrm{out}}\right)\\
    =&d\Tr_{\mathrm{in, out}_1}\left(\rho_{\yt{s}\,\mathrm in} \ketbra{\mathrm{EPR}}{\mathrm{EPR}}_{\mathrm{out}_{1,2}}\left(\Pi_{\yt{s}\,\mathrm{in}}d\ketbra{\mathrm{EPR}}{\mathrm{EPR}}_{\mathrm{in}_n,\mathrm{out}_1}\Pi_{\yt{s}\,\mathrm{in}} \right)\right)\\
    =&d\Tr_{\mathrm{in}_{1,\cdots,n-1}}\left(\bra{\mathrm{EPR}}_{\mathrm{in}_n,\mathrm{out}_1} \left(\Pi_{\yt{s}}\rho_{\yt{s}\,\mathrm{in}}\Pi_{\yt{s}} d\ketbra{\mathrm{EPR}}{\mathrm{EPR}}_{\mathrm{out}_{1,2}} \right)\ket{\mathrm{EPR}}_{\mathrm{in}_n,\mathrm{out}_1}\right)\\
    =&\Tr_{\mathrm{in}_{1,\cdots,n-1}}{\rho_{\yt{s}\,\mathrm in}}.
\end{align*}

\subsection{Optimal Sample Complexity}
\label{subsec:fidelity_supp}
After constructing the optimal QPA protocol, we now turn our attention to quantifying the sample complexity it can achieve. To achieve this, we first obtain the overall fidelity by summing all contributions from \( \yd{\varsigma} \), while accounting for the degeneracy \( g^{\yd{\varsigma}} \):
\begin{equation}
    \mathcal{F} = \sum_{\yd{\varsigma}}g^{\yd{\varsigma}} s^{\yd{\varsigma}} \cdot \frac{f^{\yd{\varsigma}}}{s^{\yd{\varsigma}}}.
 \end{equation}
By the same argument as in Ref.~\cite{OW17}, we rescale functions of the YDs as functions of their normalized form, namely, \(\yd{\overline{\varsigma}} = [\varsigma_1/n, \dots, \varsigma_d/n]\), allowing for asymptotic analysis. For input states \(\rho\) with spectra \(  p_{1} \leq \dots \leq p_{d-1}  < p_d \), as \( n \to \infty \), the weights \( g^{\yd{\overline{\varsigma}}} s^{\yd{\overline{\varsigma}}} \) concentrate around a point measure \( \delta\left(\yd{\overline{\varsigma}} - \yd{\overline{\varsigma^\circ}}\right) \), where \( \yd{\overline{\varsigma^\circ}} = \yd{p_d, \dots, p_1} \), while the tail contributions decay exponentially. Since \( \frac{f^{\yd{\overline{\varsigma}}}}{s^{\yd{\overline{\varsigma}}}} \), the normalized fidelity, admits a continuum asymptotic limit~\cite[Chapter 6.2]{H05}, the overall fidelity \( \mathcal{F} \) is dominated by the typical fidelity \( \frac{f^{\yd{\varsigma^\circ}}}{s^{\yd{\varsigma^\circ}}} \), which corresponds to the configuration \( \yd{\varsigma^\circ}=\yd{np_d, \dots, np_1} \). Note that \( np_1 \) and other arguments in \( \yd{\varsigma^\circ} \) should be understood as their integer floor values, ensuring they remain valid YDs. This rounding does not alter the qualitative behavior of the asymptotics.

For the case with depolarized states, the spectra of the inputs are given by \(a, \dots, a,b\). By~\cref{equ:C_and_Cprime}, we calculate the optimal fidelity for each sector \( \yd{\varsigma} \):
\begin{equation}
\label{equ:fidelity}
\begin{aligned}
    f^{\yd{\mu_i}} =&\Tr(C^{\yd{\varsigma}}T^{\yd{\varsigma}})\\ =& A \Tr(C^{\prime\yd{\varsigma}}T^{\yd{\varsigma}}) + B \Tr(\rho^{\yd{\varsigma}})/d\\
    =&- A S^{\yd{\mu_i}}(a,\cdots,a,b)d^{\yd{\varsigma}}+ \frac{B}{d} s^{\yd{\varsigma}}(a,\ldots,a,b).
\end{aligned}
\end{equation}
The typical fidelity is given by
 \begin{equation}
 \begin{aligned}
    \frac{f^{\yd{\varsigma^\circ}}}{s^{\yd{\varsigma^\circ}}} =1 + \frac{\lambda}{d(1-\lambda)} \left(1-b\frac{ S^{\yd{\mu^\circ_1}} (a,\cdots,a,b)}{ S^{\yd{\varsigma^\circ}} (a,\cdots,a,b)}
  \right).
 \end{aligned}
  \end{equation}
Here, $\yd{\mu^\circ_1}$ is the new YT obtained by removing one box from the first row of $\yd{\varsigma^\circ}$. The object of interest consists a ratio of two normalized Schur polynomials:
\begin{equation}
\begin{aligned}
    \frac{S^{\yd{\mu^\circ_1}} }{ S^{\yd{\varsigma^\circ}} }=\frac{s^{\yd{\mu^\circ_1}} }{s^{\yd{\varsigma^\circ}} }\frac{d^{\yd{\varsigma^\circ}}}{d^{\yd{\mu^\circ_1}}}
    =\frac{s^{\yd{cn-1,0,\cdots,0}} }{s^{\yd{cn,0,\cdots,0}}}\frac{cn+d-1}{cn},
\end{aligned}
\end{equation}
where we have defined $c=b-a=1-\lambda$ for convenience. Moreover, by explicitly studying the recursive construction of Schur polynomials,
\begin{equation}
\begin{aligned}
s^{\yd{cn,0,\cdots,0}}=&\sum_{k=1}^{cn}\binom{cn-k+d-2}{d-2}a^{cn-k}b^k+\binom{cn+d-2}{d-2}a^{cn}\\=& b s^{\yd{cn-1,0,\cdots,0}}+\binom{cn+d-2}{d-2} a^{cn}.
\end{aligned}
\end{equation}
Taking the ratio,
\begin{equation}
\begin{aligned}
\frac{s^{\yd{cn-1,0,\cdots,0}}}{s^{\yd{cn,0,\cdots,0}}}=\frac{1}{b}\left(1-\binom{cn+d-2}{d-2} \frac{a^{cn}}{s^{\yd{cn,0,\cdots,0}}}\right).
\end{aligned}
\end{equation}
Since $s^{\yd{cn,0,\cdots,0}}$ is lower bounded by $b^{cn}$, $\frac{a^{cn}}{s^{\yd{cn,0,\cdots,0}}}\in O\left(\mathrm{exp}(-n)\right)$,
\begin{equation}
\begin{aligned}
\frac{s^{\yd{cn-1,0,\cdots,0}}}{s^{\yd{cn,0,\cdots,0}}}=\frac{1}{b}+O\left(\mathrm{exp}(-n)\right).
\end{aligned}
\end{equation}
Note that here $O\left(\mathrm{exp}(-n)\right)$ is used to denote an arbitrary exponentially or super-exponentially decaying function, formally defined as $\bigcup_{0 < c < 1} O(c^n)$. Plugging into the original expression for overall fidelity~\footnote{There appears to be a discrepancy in the calculations between Eqs. (5.56) and (5.57) in Ref.~\cite{F16}, where the average fidelity is given as \(1 - \frac{\lambda}{3(1-\lambda)^2(N+1)}\). Our calculations suggest that the correct expression should be \(1 - \frac{2\lambda}{3(1-\lambda)^2(N+1)}\).},
\begin{equation}
\begin{aligned}
    \mathcal{F}=&1+\frac{\lambda}{d(1-\lambda)}\left(1-b\frac{s^{\yd{cn-1,0,\cdots,0}}}{s^{\yd{cn,0,\cdots,0}}}\frac{cn+d-1}{cn}\right)\\
    =&1-\frac{\lambda}{(1-\lambda)^2}\frac{d-1}{d}\frac{1}{n}+O\left(\mathrm{exp}(-n)\right).
\end{aligned}
\end{equation}
Finally, setting $\mathcal{F} = 1 - \delta$ yields the optimal sample complexity.

\subsection{Generic Input States}
\label{subsec:gen_inputs}
By studying $\rho^{\yd{\varsigma}}$ directly in the GT basis (see \cref{subsubsec:Schur} for a reminder), we provide an alternative expression of the optimal fidelity, which appeared as Eq.~(4.67) in Ref.~\cite{F16}:  
\begin{equation} 
\begin{aligned}
    f^{\yd{\mu_i}}=&\frac{d^{\yd{\varsigma}}}{d^{\yd{\mu_i}}}\mathrm{tr}\left(U_{\mathrm{CG}}\ketbra{d}{d}\otimes\rho^{\yd{\varsigma}}U^\dagger_{\mathrm{CG}}\Pi^{\yd{\mu_i}}\right)\\
=&\frac{d^{\yd{\varsigma}}}{d^{\yd{\mu_i}}}\sum_{\wt{m}}\dfrac{\prod_{j=1}^{d-1}(m_{j,d-1}-j-\varsigma_i+i)}{\prod_{j=1,\,  j \neq i}^d (\varsigma_j-j-\varsigma_i+i)}p_d^{\#_d}\cdots p_1^{\#_1}.
\end{aligned}
\end{equation}
For the first equality, we choose the basis states \( \ket{i} \) for \( i = 1, \dots, d \) as the eigenbasis of \( \rho \), corresponding to eigenvalues \( p_1, \dots, p_d \), without loss of generality. In this basis, \( \rho^{\yd{\varsigma}} \) is diagonal, with its values determined by the weights \( \#_1, \dots, \#_d \) (i.e., the number of times each number appears in the WT). For the second equality, we have explicitly expressed the CG coefficients in terms of the entries of the GT pattern, $\wt{m_{i,j}}$. 

By the concentration of measure described in the last section, we can focus on the typical configuration $\yd{\varsigma^\circ}= [np_d, \dots, np_1]$. Moreover, we assume the spectra are non-degenerate. Although the fidelity is difficult to evaluate directly, we can analyze its asymptotic behavior order by order. To do this effectively, we first introduce an alternative parametrization of $\wt{m}$ with parameters $\{t_{i,j}\}$ for $1\leq i<j \leq d$ via the recursive definition:
\begin{equation}\label{eq:recursive_definition}
m_{i-1,j-1} - m_{i,j} = t_{j - i + 1,j}.
\end{equation}
To understand the parametrization graphically, we interpret GT pattern as a sequence of nested YDs, each obtained by adding boxes to the previous one. In the GT pattern, the \((d-j+1)\)-th row corresponds to the rows of the YD filled with \( j \), while the \((d-j+2)\)-th row corresponds to the rows of the YD filled with \( j-1 \). We define the \((i-1)\)-th row of the smaller YD to be longer than the \( i \)-th row of the larger YD by exactly \( t_{j-i+1, j} \) boxes. 
\begin{figure}[H]
\begin{center}
\vspace{0.5cm}
\includegraphics[scale=0.25]{"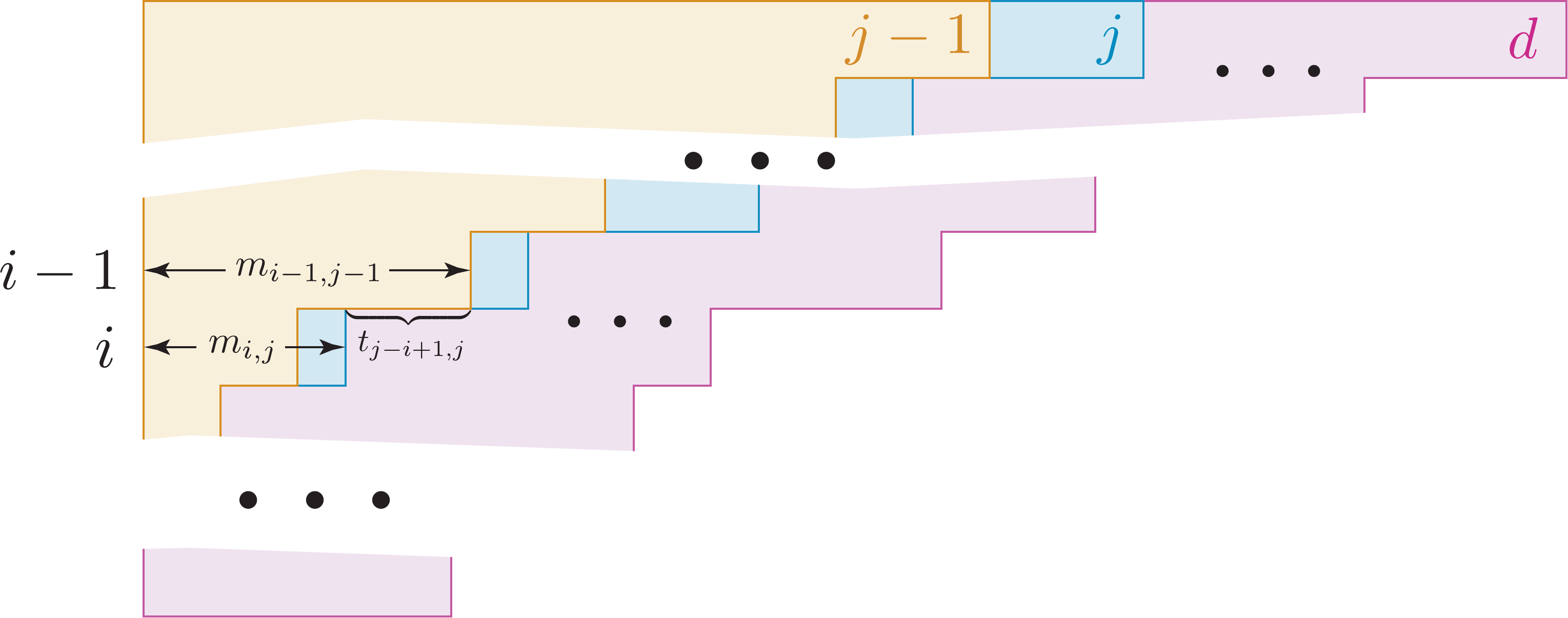"}
\vspace{0.5cm}
\caption{A WT $\wt{m}$ of the typical YD $\yt{\varsigma^\circ}$ for inputs with non-degenerate spectra. The WT is represented as a nested structure of YDs, each filled with different numbers: \( d \) in pink, \( j \) in blue, and \( j-1 \) in yellow, using ellipses to mark the numbers we skipped over. Rows \( i-1 \) and \( i \) are marked, illustrating the relationships between \( m_{i,j} \) and \( t_{i,j} \).}
\label{fig:config}
\end{center}
\end{figure}
From this recursion, it immediately follows that:
\begin{equation}\label{eq:m_ij_expression}
m_{i,j} = np_{j - i + 1} + t_{j - i + 1,d} + t_{j - i + 1,d-1} + \dots + t_{j - i + 1,j+1}.
\end{equation}  
For the configuration parametrized by $t_{ij}$, the number of boxes filled with $k$ is given by:
\begin{equation}
\#_k = \sum_{i=1}^k m_{i,k}-\sum_{i=1}^{k-1} m_{i,k-1}=n p_k + t_{k,d} + t_{k,d-1} + \cdots + t_{k,k+1} - t_{k-1,k} - t_{k-2,k} - \cdots - t_{1,k}.
\end{equation}

With this parametrization, we can quantify the optimality of our protocol asymptotically by explicitly considering the typical fidelity for each \( \yd{\mu_i^\circ} \) irrep. To turn the Schur polynomial into a well-behaved form for study, we normalize it using the monomial contribution from the WT filled with the maximal number of \( p_d \), then \( p_{d-1} \), and so forth, which we call the \emph{maximal WT}. The contribution from this maximal WT configuration is given by $p_d^{n p_d}\cdots p_1^{n p_1}$. Taking the ratio between the Schur polynomial $s^{\yd{\mu_i^\circ}}$ and this configuration, we obtain:
\begin{equation}
\frac{s^{\yd{\mu_i^\circ}}}{p_d^{n p_d} \cdots p_1^{n p_1}}
= \sum_{\{t_{ij}\}} 
\prod_{i < j} \left(\frac{p_i}{p_j}\right)^{t_{i,j}}.
\end{equation} Here, the sum over each $t_{i,j}$ is understood to be taken over all valid configurations. This series is uniformly bounded by the multinomial geometric series obtained by extending each $t_{i,j}$ to range over $\{0,1,2,\dots\}$. Since each ratio $\frac{p_i}{p_j}$ satisfies $\left| \frac{p_i}{p_j} \right| < 1$ for $i < j$, absolute convergence in each variable guarantees convergence of the entire series to $
\prod_{i < j} \frac{1}{1 - \frac{p_i}{p_j}}.
$
Therefore, by the dominated convergence theorem, we can move the limit as $n \to \infty$ inside the summation~\cite[Lemma 6.2]{KW01}. In this scenario, since the upper limit of each summation over $t_{i,j}$ grows as $O(n)$, the constraint becomes asymptotically inactive as $n \to \infty$. As a result, the configuration space of each $t_{ij}$ effectively covers all non-negative integers, recovering the multinomial geometric series.

Next, we apply a similar normalization to the fidelity using the maximal WT. Taking the ratio, we obtain:
\begin{equation}
\frac{d^{\yd{\mu_l^\circ}}}{d^{\yd{\varsigma^\circ}}}\frac{f^{\yd{\mu_l^\circ}}}{p_d^{np_d}\cdots p_1^{np_1}}
= 
\sum_{\{t_{i,j}\}}
\frac{\prod_{i=1}^{d-1}(np_{d - i} + t_{d - i, d} - np_{d-l+1}+l-i)}{\prod_{i \neq l}(np_{d-i+1} - np_{d-l+1}+l-i)}
\, \prod_{i<j}\left(\frac{p_i}{p_j}\right)^{t_{i,j}}.
\end{equation}
For the case \( l = 1 \), the expression simplifies to  
\begin{equation}
\begin{aligned}
& \sum_{\{t_{i,j}\}} \prod_{i=1}^{d-1} \left(1 - \frac{t_{d-i,d}+1}{n p_d - n p_{d-i}+i}\right)\cdot 
\prod_{i<j} \left(\frac{p_i}{p_j}\right)^{t_{i,j}},
\end{aligned}
\end{equation}
which is dominated by the same multinomial geometric series as previously discussed, with each factor proportional to $\left(\frac{p_i}{p_j}\right)^{t_{i,j}}$. 
Moreover, the series converges to 1 factorwise, implying that the asymptotic limit is also $
    \prod_{i < j} \frac{1}{1-\frac{p_i}{p_j}}$. For the general case where \( l > 1 \),
{\small\begin{equation}
\begin{aligned}
& \frac{\prod_{i=1}^{l-2}\left( n p_{d-i} + t_{d-i,d}- n p_{d-l+1}+l-i\right)}{n p_{d} - n p_{d-l+1} + l-1} 
\frac{t_{d-l+1,d}+1}{\prod_{i=2}^{l-1}\left( n p_{d-i+1}-n p_{d-l+1}+l-i \right)} 
\frac{\prod_{i=l}^{d-1} \left( n p_{d-i} + t_{d-i,d}- n p_{d-l+1}+l-i\right)}{\prod_{i=l+1}^{d}\left( n p_{d-i+1}-n p_{d-l+1} +l-i\right)}\prod_{i<j} \left(\frac{p_i}{p_j}\right)^{t_{i,j}} \\
=& \prod_{i=1}^{l-2} \left(1 - \frac{t_{d-i,d}+1}{n p_{d-l+1} - n p_{d-i}-l+i+1}\right) 
\frac{t_{d-l+1,d}+1}{n p_{d} - n p_{d-l+1}-l+1} 
\prod_{i=l}^{d-1} \left(1 - \frac{t_{d-i,d}+1}{n p_{d-l+1} - n p_{d-i}-l+i-1}\right) \prod_{i<j} \left(\frac{p_i}{p_j}\right)^{t_{i,j}}\\ =& \prod_{i=1,\,i \neq l-1}^{d-1} \left(1 - \frac{t_{d-i,d}+1}{n p_{d-l+1} - n p_{d-i}-l+i-1}\right) 
\frac{t_{d-l+1,d}+1}{n p_{d} - n p_{d-l+1}-l+1}\prod_{i<j} \left(\frac{p_i}{p_j}\right)^{t_{i,j}},
\end{aligned}
\end{equation}}
where a similar dominant series can be found to provide a uniform upper bound for the factors. The factor \( \frac{t_{d-l,d}+1}{n p_{d} - n p_{d-l+1}-l+1} \) converges pointwise to 0, while the other factors approach 1. Consequently, the overall expression evaluates to 0. Since \( l \) ranges over all feasible indices, which label all unitary invariant protocols, this establishes that our protocol is also asymptotically optimal. Now, we analyze the overall fidelity by computing the dimension ratio:
\begin{equation}
\begin{aligned}
\frac{d^{\yd{\mu^\circ_1}}}{d^{\yd{\varsigma^\circ}}}=&\frac{\prod_{1\leq j< k\leq d}\left(\mu^\circ_{1\,j}-j-\mu^\circ_{1\,k}-k\right)}{\prod_{1\leq j< k\leq d}\left(\varsigma^\circ_j-j-\varsigma^\circ_k-k\right)}\\
=&\prod_{k=1,k\neq l}^d \left(1+\frac{1}{n\left(p_{d-k+1}-p_{d-l+1}\right)-k+l}\right)\\
=&1-\frac{1}{n}\sum_{k=1}^{d-1}\frac{1}{p_d-p_k}.
\end{aligned}
\end{equation}
Moreover, we apply a similar method to extract the first-order behavior of the fidelity. Consider \( l=1 \) again. Subtracting the zeroth-order term and multiplying by \( n \), we expand to order \( O(1/n) \), yielding:  
\begin{equation}
\begin{aligned}
& \sum_{\{t_{i,j}\}} \sum_{i=1}^{d-1} \left(- \frac{t_{d-i,d}+1}{p_d -  p_{d-i}+i}\right) \cdot
\prod_{i<j} \left(\frac{p_i}{p_j}\right)^{t_{i,j}}\\
=&\sum_{\{t_{i,j}\}} \sum_{i=1}^{d-1}\left( \left(- \frac{t_{d-i,d}+1}{p_d - p_{d-i}+i}\right) \left(\frac{p_{d-i}}{p_d}\right)^{t_{d-i,d}}
\prod_{j<k, j\neq d-i, k\neq d } \left(\frac{p_j}{p_k}\right)^{t_{j,k}}\right)\\
=-&\sum_{i=1}^{d-1}\frac{p_d}{(p_d-p_{d-i})^2}\cdot
\prod_{i<j}\frac{1}{1-\frac{p_i}{p_j}}, \end{aligned}
\end{equation}
In the last line, we evaluate the summation using the identity $\sum_{t=0}^\infty t x^t = \frac{x}{(1 - x)^2}$, which converges for $|x| < 1$ and is derived by differentiating the geometric series. Finally, assembling the parts, we obtain
\begin{equation}
\begin{aligned}
&\mathcal{F}=\frac{f^{\yd{\mu_1^\circ}}}{s^{\yd{\varsigma^\circ}}}=\frac{d^{\yd{\varsigma^\circ}}}{d^{\yd{\mu^\circ_1}}}\left(\frac{d^{\yd{\mu^\circ_1}}}{d^{\yd{\varsigma^\circ}}}\frac{f^{\yd{\mu_1^\circ}}}{p_d^{n p_d} \cdots p_1^{n p_1}}\right)
\frac{p_d^{n p_d} \cdots p_1^{n p_1}}{s^{\yd{\varsigma^\circ}}}+O\left(\frac{1}{n^2}\right)\\
=& 1-\frac{1}{n}\left(\sum_{k=1}^{d-1}\frac{p_d}{(p_d-p_k)^2}-\sum_{k=1}^{d-1}\frac{1}{p_d-p_k}\right)+O\left(\frac{1}{n^2}\right)\\
=& 1-\frac{1}{n}\sum_{k=1}^{d-1}\frac{p_k}{(p_d-p_k)^2}+O\left(\frac{1}{n^2}\right).
\end{aligned}
\end{equation}
It turns out that this result also holds for cases with degenerate non-principle eigenvalues, one can perform an additional change of variables on \(\{t_{i,j}\}\) to ensure that all sums range from 0 to a value on the order of \(O(n)\). However, this procedure does not offer further insight and is omitted here. As long as the principal eigenstate is nondegenerate, the degeneracy does not affect the maximal expression. Note that by setting \(p_d = b\) and \(p_{\neq d} = a\), one recovers the familiar form \(\mathcal{F} = 1 - \frac{1}{n}\frac{d-1}{d}\frac{\lambda}{(1-\lambda)^2} + O\bigl(\frac{1}{n^2}\bigr)\) in the depolarizing case. Furthermore, a higher-order series expansion can be carried out to obtain additional correction terms, which we omit for brevity.
 
\section{Correctness of the Efficient Implementation}
\label{sec:GQPE-QPA} 
\subsection{Schur sampling}
To understand the algorithm, we first introduce GQPE as a primitive. It begins by preparing the state in the control registers as the trivial irrep state, i.e., \(\ket{[n,0,\cdots,0]}_{\Lambda} \ket{[1\,2\cdots n]}_L \ket{[1\,2\cdots n]}_R\), followed by applying the inverse Fourier transform $F^{-1}$, a controlled permutation $C\!P$, and then the Fourier transform $F$. This primitive corresponds to the following operator:

\begin{equation}
\begin{aligned}
 &F C\!P_{\text{ctrl},\text{data}}  F^{-1}\ket{[n,0,\cdots,0]}_{\Lambda} \ket{[1\, 2\cdots n]}_L \ket{[1\,2 \cdots n]}_R \\
           =& F C\!P \frac{1}{\sqrt{n!}} \sum_g \ket{g}_{\text{ctrl}}  \\
           =& \frac{1}{\sqrt{g^{\yd{\varsigma}}}} F \sum_g  \ket{g} P_{g, \text{data}}\\ 
           =& \sum_{\yd{\varsigma}}\sum_{\yt{s},\yt{t}}\frac{1}{\sqrt{g^{\yd{\varsigma}}}} \ket{\substack{\yd{\varsigma} \\ \yt{s} \yt{t}}} \Pi^{\;\;\yd{\varsigma}}_{\yt{s}\yt{t}\,\text{data}}.
\end{aligned}
\end{equation}

To analyze whether our protocol correctly operates on the data register, we work in the Schur-transformed picture of the data register. The data register decomposes into three registers: \(\Lambda^\prime\), which encodes the irrep, \(S\) for the Schur module, and \(W\) for the Weyl module. 
\vspace{0.2cm}
\begin{center}
\begin{minipage}[c]{0.37\textwidth}
\begin{center}
\includegraphics[scale=0.29]{"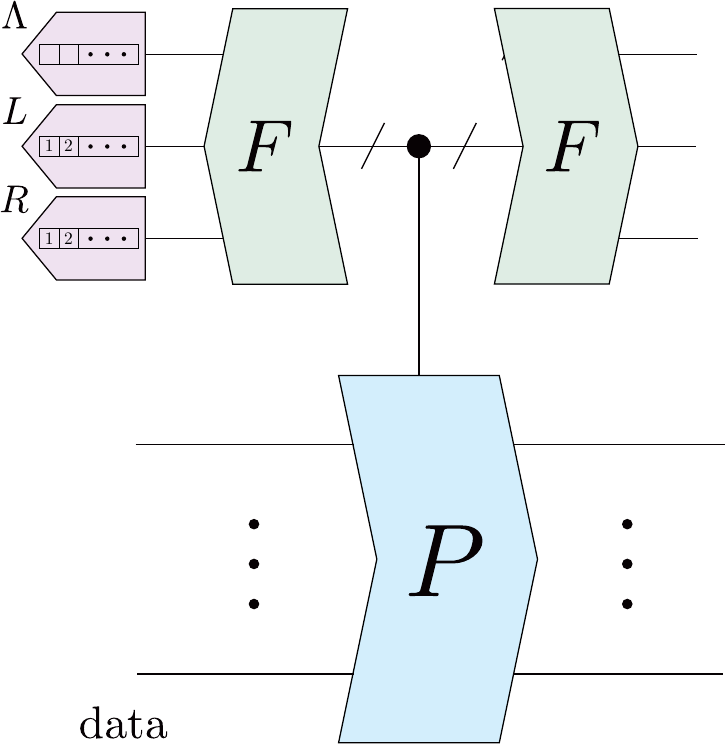"}
\end{center}
\end{minipage}%
\begin{minipage}[c]{0.05\textwidth}
\begin{center}
\raisebox{0.1cm}{
\(\xlongequal{\quad}\)}
\end{center}
\end{minipage}%
\begin{minipage}[c]{0.58\textwidth}
\begin{equation}
\sum_{\yd{\varsigma}}\sum_{\yt{s},\yt{t}}\frac{1}{\sqrt{g^{\yd{\varsigma}}}}
\ket{\substack{\yd{\varsigma} \\ \yt{s} \yt{t}}}_{\textrm{ctrl}} 
\ket{\yd{\varsigma}}\bra{\yd{\varsigma}}_{\Lambda^\prime}\ket{s}\bra{t}_S(\mathbb{I}_{d^{\yd{\varsigma}}})_W.
\label{eq:projection_2}
\end{equation}
\end{minipage}%
\hfill
\end{center}
\vspace{0.2cm}

Here we use boxes with angled edges to indicate Hermitian conjugation. A box representing an operator $A$ acting on a ket $|v\rangle$ is drawn with right-pointing edges $\rangle$, with the input on the left and the output on the right. Its adjoint $A^\dagger$ is drawn with left-pointing edges $\langle$, reflecting the box to reverse the direction of the edges.

Now, we are ready to analyze the first step of Schur sampling. We begin by performing QPE on the data register $\rho^{\otimes n}$. In the Schur-transformed picture, this corresponds to the state $\sum_{\yd{\varsigma}}\ket{\yd{\varsigma}}\bra{\yd{\varsigma}}_{\Lambda^\prime}\mathbb{I}_{g^{\yd{\varsigma}}\,S}\, \rho^{\yd{\varsigma}}_W.$ Applying QPE, which corresponds to the tensor contraction depicted below:
\vspace{0.2cm}
\begin{center}
\begin{minipage}[c]{0.25\textwidth}
\begin{center}
\includegraphics[scale=0.3]{"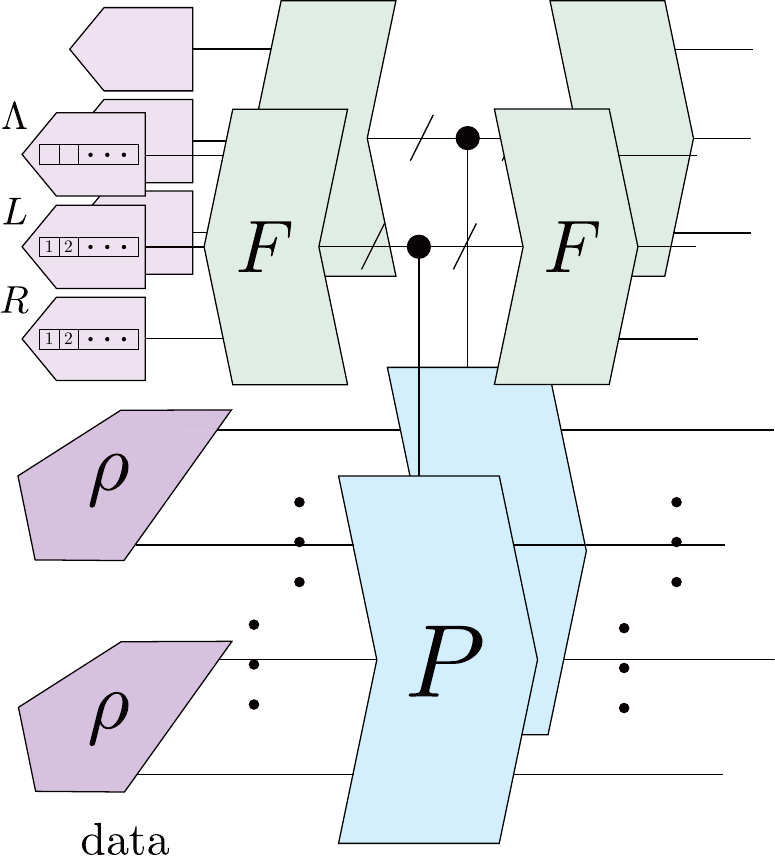"}
\end{center}
\end{minipage}%
\begin{minipage}[c]{0.05\textwidth}
\begin{center}
\raisebox{1cm}{
\(\xlongequal{\quad}\)}
\end{center}
\end{minipage}%
\begin{minipage}[c]{0.7\textwidth}
\begin{equation}\begin{aligned}
&\sum_{\yd{\varsigma}}\sum_{(r),(s),(t),(u)} \frac{1}{g^{\yd{\varsigma}}} \ket{\substack{\yd{\varsigma} \\ \yt{r} \yt{s}}}\bra{\substack{\yd{\varsigma} \\ \yt{t} \yt{u}}}_{\textrm{ctrl}}\ket{\yd{\varsigma}}\bra{\yd{\varsigma}}_{\Lambda^\prime} 
\ket{(r)}\langle(s)|(u)\rangle\bra{(t)}_S\,\rho_W^{\yd{\varsigma}}\\
=&\sum_{\yd{\varsigma}}\sum_{(r),(t)} \frac{1}{g^{\yd{\varsigma}}} \sum_{(s)} \ket{\substack{\yd{\varsigma} \\ \yt{r} \yt{s}}}\bra{\substack{\yd{\varsigma} \\ \yt{t} \yt{s}}}_{\textrm{ctrl}}\ket{\yd{\varsigma}}\bra{\yd{\varsigma}}_{\Lambda^\prime} 
\ket{(r)}\bra{(t)}_S\,\rho_W^{\yd{\varsigma}}.
\label{eq:projection}
\end{aligned}
\end{equation}
\end{minipage}%
\hfill
\end{center}
\vspace{0.2cm}

Now, we simply measure the $\Lambda$ register in the computational basis and stores the result in the classical register $A$. This results in $\yd{\varsigma}$ appearing in both the $\Lambda$ and $\Lambda^\prime$ registers. The probability of measuring a specific $\yd{\varsigma}$ corresponds to the probability of the Schur sampling outcome, $g^{\yd{\varsigma}}s^{\yd{\varsigma}}$. In the remaining registers, we obtain the state:

\begin{align}
\frac{1}{g^{\yd{\varsigma}}}\sum_{(r),(s)} \ket{(r)}\bra{(s)}_L\frac{\mathbb{I}_{g^{\yd{\varsigma}}}}{g^{\yd{\varsigma}}}_R\ket{(r)}\bra{(s)}_S\,\frac{\rho^{\yd{\varsigma}}}{s^{\yd{\varsigma}}}_W.
\end{align}
The circuit transfers the state of \( S \) into \( R \), at the cost of entangling \( S \) with \( L \). This allows us to manipulate the state originally held in the \( S \) register by performing operations on \( R \). We then trace out the $R$ register, as the specific irrep labeled by the YT is not relevant since we will later impose the column-ordered configuration regardless. This results in $\frac{1}{g^{\yd{\varsigma}}}\sum_{(r),(s)} \ket{(r)}\bra{(s)}_L\ket{(r)}\bra{(s)}_S\,\frac{\rho^{\yd{\varsigma}}}{s^{\yd{\varsigma}}}_W$ in the remaining system, successfully performing Schur sampling to extract the $\yd{\varsigma}$ information while preserving the data originally in $W$.

\subsection{Correction}
To implement the corrections, we need a mechanism to reintroduce the optimal column-ordered configuration \( (\lambda^\diamond) \) back into the system. We achieve this by first preparing the state $(\lambda^\diamond)$ based on the measured value of \( \yd{\varsigma} \) stored in $A$, which resets our state to $\frac{1}{g^{\yd{\varsigma}}}\sum_{(r),(s)} \ket{(r)}\bra{(s)}_L\ket{(\varsigma^\diamond)}\bra{(\varsigma^\diamond)}_R\ket{(r)}\bra{(s)}_S\,\frac{\rho^{\yd{\varsigma}}}{s^{\yd{\varsigma}}}_W$.
Next, we apply the inverse GQPE by sequentially applying \( F \), followed by \( C\!P^{-1} \), and then \( F^{-1} \). This process yields  

\begin{center}
\begin{minipage}[c]{0.26\textwidth}
\begin{center}
\includegraphics[scale=0.3]{"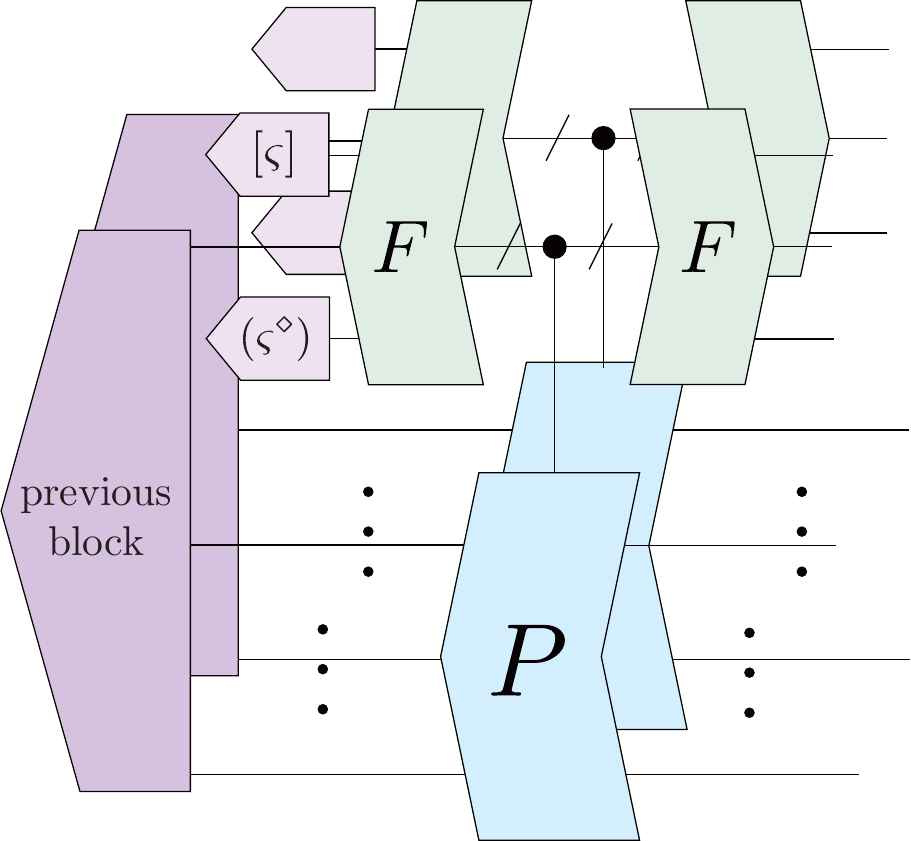"}
\end{center}
\end{minipage}%
\begin{minipage}[c]{0.025\textwidth}
\begin{center}
\raisebox{0.1cm}{
\(\xlongequal{\quad}\)}
\end{center}
\end{minipage}%
\begin{minipage}[c]{0.7\textwidth}
\begin{equation}
\begin{aligned}
&\sum_{\substack{(r),(s),(t)\\ (u),(v),(w)}}\!\!\!
{\left.\dfrac{\Tr\ket{\substack{\yd{\varsigma} \\ \yt{r} \yt{\varsigma^\diamond}}}\!\left\langle\substack{\yd{\varsigma} \\ \yt{s} \yt{\varsigma^\diamond}}\right|\left.
\substack{\yd{\varsigma} \\ \yt{t} \yt{u}}\right\rangle\!\bra{\substack{\yd{\varsigma} \\ \yt{v} \yt{w}}}}{g^{\yd{\varsigma}\,2}}\right.}_{\textrm{ctrl}}
\langle(v)|(r)\rangle\langle(s)|(t)\rangle\ket{(w)}\!\bra{(u)}_S\,\frac{\rho^{\yd{\varsigma}}}{s^{\yd{\varsigma}}}_W  \\
=&\ket{(\varsigma^\diamond)}\bra{(\varsigma^\diamond)}_S\,\frac{\rho^{\yd{\varsigma}}}{s^{\yd{\varsigma}}}_W.
\label{eq:projection_3}
\end{aligned}
\end{equation}
\end{minipage}%
\hfill
\end{center}

Since the state is normalized at this stage of the operation, measuring the control register always yields the trivial irrep state. This confirms that the process defines a valid quantum channel, as it remains trace-preserving. Finally, note that we are working within the Schur-transformed picture. Transforming back under the Schur transform, we obtain the state \( \rho_{\yt{\varsigma^\diamond}} \).

\section{SWAPNET}
\label{subsec:circuit}

Here, we demonstrate that the SWAPNET algorithm implements the optimal QPA protocol for three effective qudits. For a SWAP test, the measurement outcome of the ancilla determines the applied projection. The corresponding projections are:
\begin{equation} \left\{ \begin{aligned} \Pi_{\tiny\Yvcentermath1\young(12)} &= \frac{1}{2}\big(\id + P_{(12)}\big), \quad \text{if measuring 0}, \\ \Pi_{\tiny\Yvcentermath1\young(1,2)} &= \frac{1}{2}\big(\id - P_{(12)}\big), \quad \text{if measuring 1}. \end{aligned} \right. \end{equation}
where $\Pi_{\tiny\Yvcentermath1\young(12)}$ projects onto the symmetric subspace, and $\Pi_{\tiny\Yvcentermath1\young(1,2)}$ projects onto the antisymmetric subspace.

During the first SWAP test, if $q_1$ and $q_2$ are antisymmetrized, then the first two registers are in the the $\scriptsize\Yvcentermath1\young(1,2)$ configuration. No matter what the outcome involving $q_3$ is, whether it is $\scriptsize\Yvcentermath1\young(13,2)$ or $\scriptsize\Yvcentermath1\young(1,2,3)$, $q_1$ and $q_2$ are going to be discarded eventually. Therefore we only need to return $q_3$ as our amplified qudit.

For the followiwng SWAP tests, as in~\cref{alg:SWAPNET}, let $l+1=N$ represents the total number of SWAP tests applied. Suppose the antisymmetric configuration is measured during the $l+1$-th SWAP test. This effectively applies a channel defined by the Krauss operator
 $K_l=\Pi_{\tiny\Yvcentermath1\young(1,2)}(P_{(23)}\Pi_{\tiny\Yvcentermath1\young(12)})^l.$
It turns out that $K_l$ is proportional to the transition operator $\Pi_{\tiny\Yvcentermath1\young(13,2)\,\tiny\Yvcentermath1\young(12,3) }$. 
Specifically,
\begin{equation}
    K_l=(-1)^{(l+1)}\frac{\sqrt{3}}{2^l}\Pi_{\tiny\Yvcentermath1\young(13,2)\,\tiny\Yvcentermath1\young(12,3)}.
\end{equation}
We prove this equation by induction. 
    Base case: 
    \begin{align}
    \Pi_{\tiny\Yvcentermath1\young(1,2)}\big(P_{(23)}\Pi_{\tiny\Yvcentermath1\young(12)}\big)= \frac{1}{4}\big(-(13)+(23)-(123)+(132)\big)=\frac{\sqrt{3}}{2}\Pi_{\tiny\Yvcentermath1\young(13,2)\,\tiny\Yvcentermath1\young(12,3) }
    \end{align} by definition.
    Inductive hypothesis:
    $\Pi_{\tiny\Yvcentermath1\young(1,2)}\big(P_{(23)}\Pi_{\tiny\Yvcentermath1\young(12)}\big)^{(l-1)}=(-1)^l\frac{1}{2^l}(-(13)+(23)-(123)+(132))$.\\Suppose this is true, then
    \begin{align}
\nonumber\Pi_{\tiny\Yvcentermath1\young(1,2)}\big(P_{(23)}\Pi_{\tiny\Yvcentermath1\young(12)}\big)^{l}&= (-1)^l\frac{1}{2^l}\big(-(13)+(23)-(123)+(132)\big)(23)\big(()+(12)\big)\\
         &=(-1)^{(l+1)}\frac{1}{2^{(l+1)}}\big(-(13)+(23)-(123)+(132)\big).
    \end{align} 
When $\Pi_{\tiny\Yvcentermath1\young(13,2)\,\tiny\Yvcentermath1\young(12,3) }$ acts on $\rho^{\otimes 3}$, it turns the state $\rho_{\;\tiny\young(12,3)}$ unitarily into $\rho_{\;\tiny\young(13,2)}$, which can be dealt with by tracing out the first two registers.

Summing up all the branches defined by the $K_l$'s for all $l$ from $0$ to $\infty$. By the definition of transition operators, we obtain $\sum_l K_l\rho^{\otimes 3}K^{\dag}_l = \rho_{\;\tiny\young(13,2)}$, which asymptotically gives the correct projection on the $\tiny\Yvcentermath1\young(13,2)$ irrep. This shows that none of the information encoded in the inputs is lost, as the circuit remains effectively reversible.

Moreover, this also implies the following: Suppose no anti-symmetric configuration has ever been observed before we terminate the protocol at some iteration. In this case, an effective channel consisting of only symmetrizers, $P_{(23)}\Pi_{\tiny\Yvcentermath1\young(12)}(P_{(23)}\Pi_{\tiny\Yvcentermath1\young(12)})^l$, is applied. Asymptotically, this effective channel tends towards the totally symmetric projector $\Pi_{\tiny\young(123)}$. Therefore, at the end of the algorithm we also trace out the first two qudits and return $q_3$.

\section{Conjectured Construction of optimal Multi-output QPA protocol}
\label{sec:conjecture}
Following Ref.~\cite{KW01}, we are interested in studying the higher-dimensional analogue for the optimal QPA protocol with multiple outputs. In this case, the cost matrix can be chosen as 
\begin{equation}
    C^{\yd{\varsigma}}=\int(\sigma^{\otimes m})^\top\otimes\rho^{\yd{\varsigma}}.
\end{equation} We conjecture the optimal construction is given by the projector onto the smallest irrep under the YD partial order. The smallest YD in this order is obtained from the following removal routine:
\begin{enumerate}
    \item First, start from the first row, remove at most $\varsigma_1-\varsigma_2$ boxes.
    \item Then, start from the second row, remove at most $\varsigma_2-\varsigma_3$ boxes.
    \item Proceed by repeating the same procedure for all the rows until the last row.
    \item For the last row, remove however many boxes.
\end{enumerate}
This procedure is summarized in the following diagram:
\begin{figure}[H]
\begin{center}
\includegraphics[scale=0.17]{"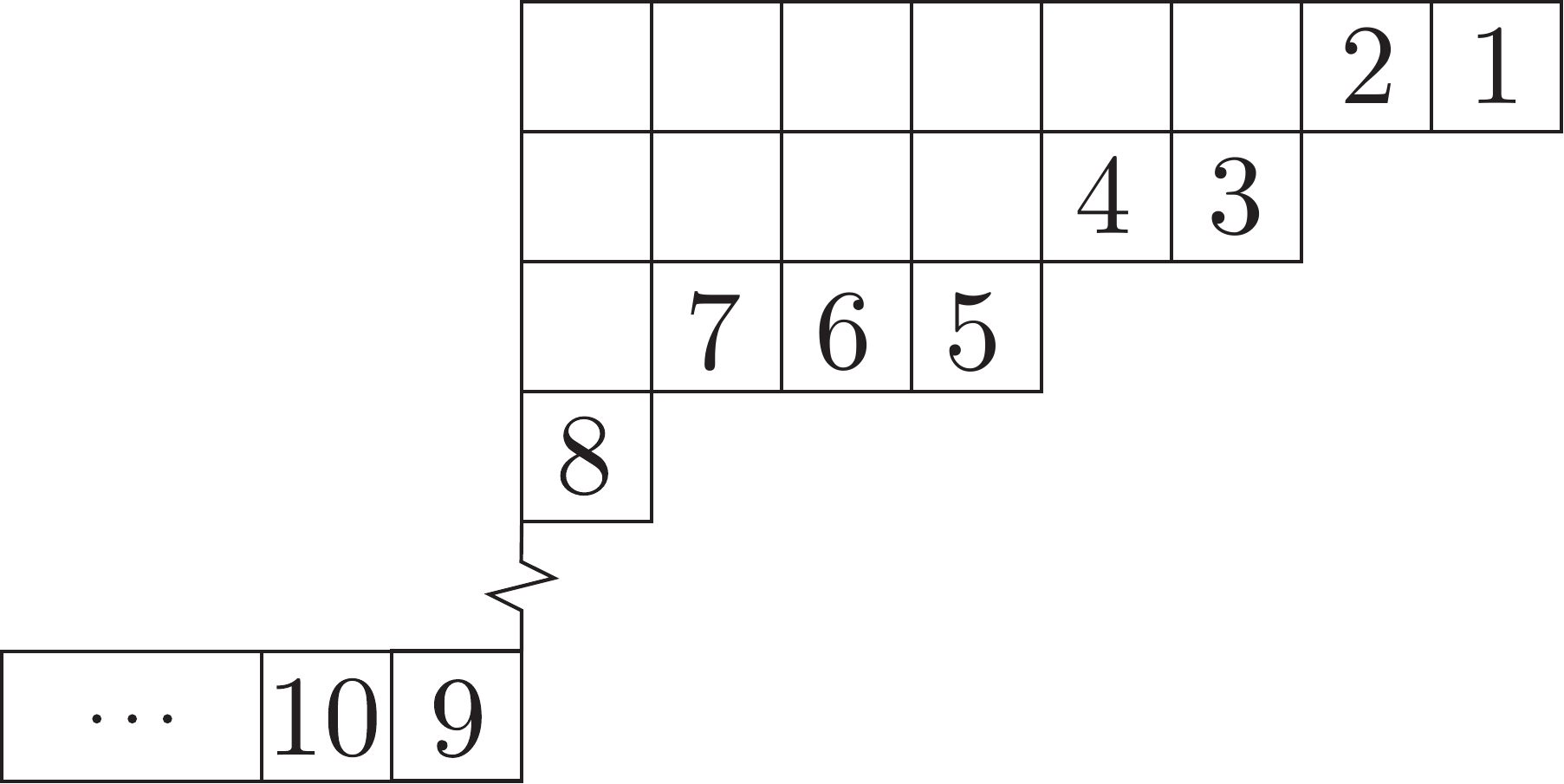"}
\end{center}
\caption{Diagram summarizing the conjectured optimal procedure for multi-output QPA, where the numbers indicate the order of removal.}
\end{figure}

This is suggested by the possible existence (though not yet proven) of an ordering relation for generalized Schur polynomials. For instance, take $d=3$ with $n=3$ and $m=2$. Consider the branch $\yd{\varsigma}=\overline{\mathbf{15}}=\Yvcentermath1\scriptsize\yng(3,2)$, and the target representation $\mathbf{10}=\Yvcentermath1\scriptsize\yng(3)$. $T^{\yd{\varsigma}}$ can be decomposed into a block diagonal form corresponding to the representations:
\ytableausetup
 {centertableaux,mathmode, boxframe=normal, boxsize=1em}
\begin{equation}
\ydiagram{3+3,3+2,3}\oplus\ydiagram{2+3,2+1,2}\oplus\ydiagram{1+3,1+0,1}\oplus\ydiagram{2+2,2+2,2}\oplus\ydiagram{1+2,1+1,1}\oplus\ydiagram{2,0,0}.
\end{equation}Note that in the above formula, the ordering obeys the majorization relation $\succeq$ except for the third and fourth tableau which are incomparable. However, there is a unique smallest tableaux that is given by $\yd{2,0,0}=\scriptsize\Yvcentermath1\ydiagram{2}$. 
\vspace{0.5mm}
\section{Implementation Details and Numerical Simulations}
\label{sec:numerical}
\subsection{Simulating QPA for Trotterized Hamiltonian evolution}

We take \( H \) to be a transverse-field Ising Hamiltonian of the form \( -J\sum_{\langle i,j\rangle} Z_iZ_j - h\sum_{i=1}^k X_i \), with \( J= h =1\) representing the interaction and field strengths, respectively, \( \langle i,j \rangle \) indicating nearest-neighbor pairs, and $k=4$.
Circuit-level noise is simulated with Qiskit Aer using the EstimatorV2 primitive with $102400$ shots in each case. 

We include additional simulation results with various values of $k$ to demonstrate the robustness of QPA with respect to scaling.
\begin{figure}[H]
    \centering
    \begin{subfigure}[t]{0.48\textwidth}
        \centering
        \includegraphics[width=\textwidth]{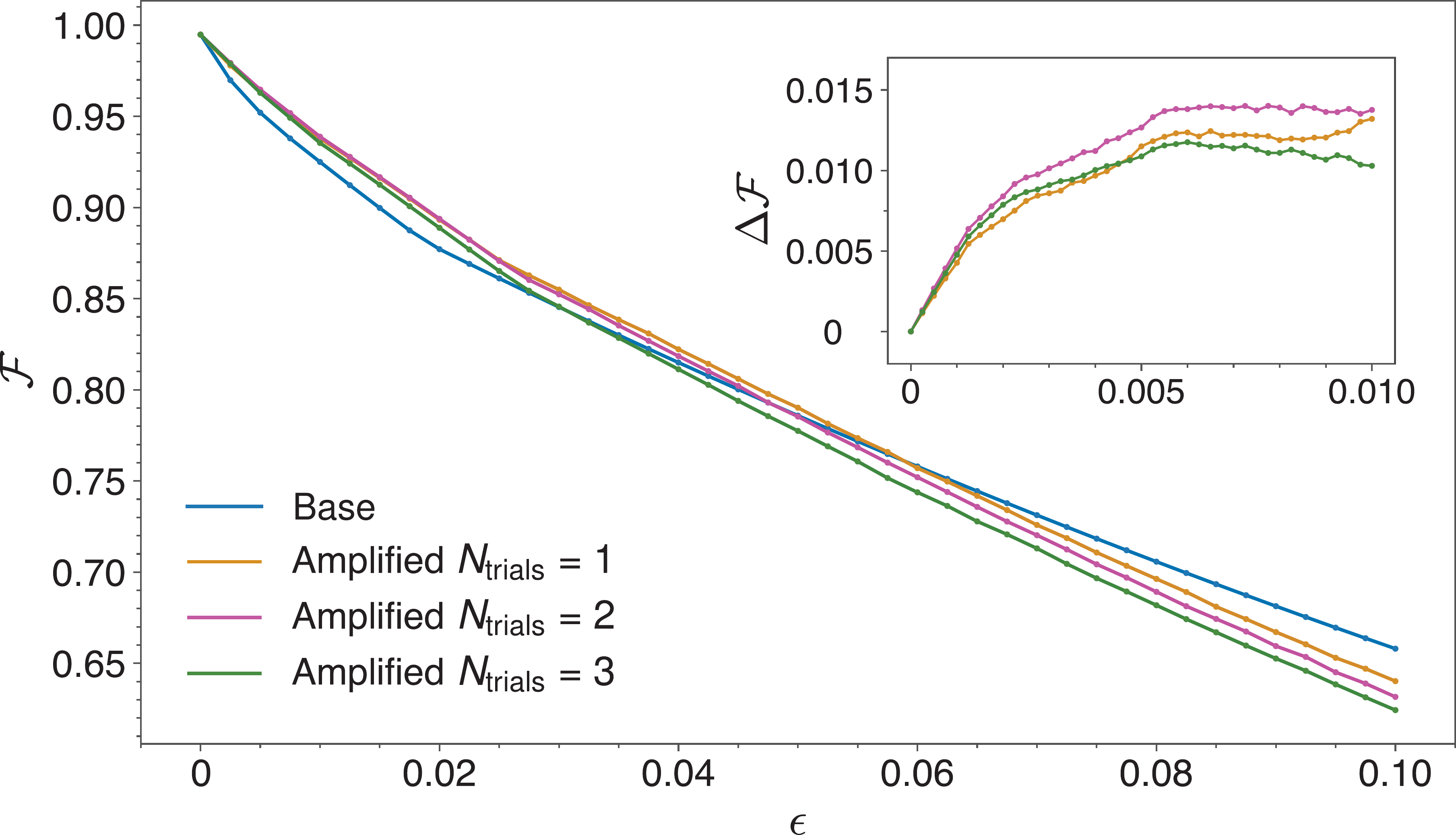}
        \caption{Base and amplified fidelities for $N_\mathrm{trials} = 1, 2, 3$ as functions of the gate error rate $\epsilon$ from  Hamiltonian simulation with $k = 2$. Inset shows the fidelity gain.}
        \label{fig:curves_k2}
    \end{subfigure}
    \hfill
    \begin{subfigure}[t]{0.48\textwidth}
        \centering
        \includegraphics[width=\textwidth]{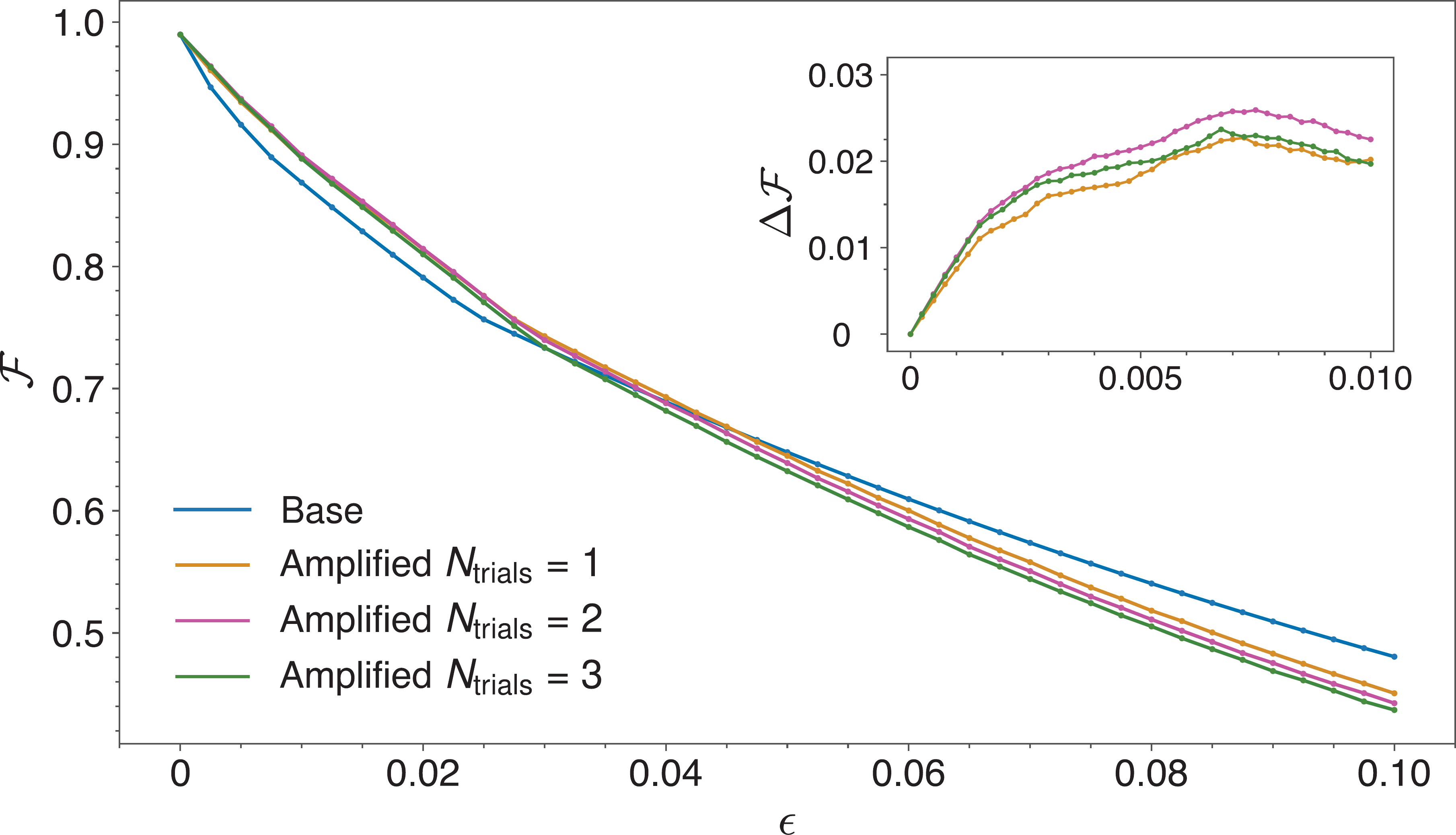}
        \caption{Base and amplified fidelities for $N_\mathrm{trials} = 1, 2, 3$ as functions of the gate error rate $\epsilon$ from Hamiltonian simulation with $k = 3$. Inset shows the fidelity gain.}
        \label{fig:curves_k3}
    \end{subfigure}
\end{figure}

\subsection{Simulating QPA for Adiabatic State Preparation}
The native Hamiltonian of Rydberg simulators is given by
$
H = \frac{\Omega}{2} \sum_i \left( |g\rangle\langle r|_i + |r\rangle\langle g|_i \right) - \sum_i \Delta\, n_i + \sum_{\langle i,j\rangle} V_0\, n_i n_j,
$ where $n_i = |r\rangle\langle r|_i$ is the Rydberg number operator at site $i$, and $V_0$ is the nearest-neighbor interaction strength between atoms in the Rydberg state. We assume periodic boundary conditions throughout. In addition, we include noise modeled by local jump operators $L_i = \sqrt{2\gamma}\,n_i$ with coherence time $T_2$. Starting from the initial state $|g\rangle^{\otimes 4}$, the Rabi frequency $\Omega$ is ramped up and down, while the detuning $\Delta$ is gradually tuned from negative to positive to prepare the one-dimensional $Z_2$ state $\frac{1}{\sqrt{2}} \left( \ket{grgr} + \ket{rgrg} \right)$.

The ramp profile consists of a linear rise of $\Omega$ from 0 to $\Omega_{\text{max}}$, followed by a flat hold period during which $\Delta$ is tuned linearly from $-\Delta_\mathrm{start}$ to $\Delta_\mathrm{start}$, and then a linear fall of $\Omega$ back to 0. The three segments have relative durations in the ratio $0.33:0.34:0.33$, with a total evolution time $T_\mathrm{tot}$.
This specific choice has minimal impact on the phase transition, as long as the evolution remains adiabatic~\cite{bernien2017probing}. To optimize the parameters, we first performed a coarse grid scan over $\Omega_{\max}/2\pi \in (0, 3]$ MHz, $\Delta_{\text{start}}/2\pi \in (0, 30]$ MHz, and total evolution duration $T_{\text{tot}} \in (1, 10]$ $\mu s$. We further assume $T_2 = 6\,\mu\text{s}$ and $V_0 = 40\,\text{MHz}$. We then refined the search near the optimal region using a finer grid: $\Omega_{\max}/2\pi \in (2.80, 3.00]\,\text{MHz}$ with a resolution of $0.02$ MHz, $\Delta_{\text{start}}/2\pi \in (2.0, 6.0]$ MHz with a resolution of $0.2$ MHz, and $T_{\text{tot}} \in [1.0, 2.0]\,\mu\text{s}$ with a resolution of $0.1\,\mu\text{s}$. The resulting fidelity with the target state is shown as a heatmap over $\Omega_{\max}/2\pi$ and $\Delta_{\text{start}}/2\pi$, maximized over $T_{\mathrm{tot}}$, in Supplementary~\cref{fig:fidelity_heatmap}. The best-performing instance, presented in Supplementary~\cref{fig:drive_profile}, occurs at $T_{\mathrm{tot}} = 1.0\,\mu\text{s}$, with $\Omega_{\max} = 3.00 \times 2\pi\,\text{MHz}$ and $\Delta_{\mathrm{start}} = 4.0 \times 2\pi\,\text{MHz}$, achieving a fidelity of 0.828. The resulting density matrix is far from depolarized and exhibits a sharply peaked spectrum, with eigenvalues ordered as $p_{16} \gg p_{15} \gg p_{14},\, p_{13},\, p_{12} \gg p_{11}$, as shown in Supplementary~\cref{fig:spectrum_rho}. However, the bias, defined as the fidelity between the principal eigenstate and the ideal state, exceeds $0.997$, making QPA possible. The theoretical optimum is obtained by analytically evaluating the QPA channel on the inputs.

We truncate the resulting density matrix to its top five eigenstates, which together capture over $0.9949$ of the total probability. These eigenstates are used as inputs to the QPA protocol, yielding 125 possible combinations. Each combination is simulated using the EstimatorV2 primitive with $102400$ shots. For each input state, the fidelity is computed individually, and the overall fidelity is obtained by a weighted average, where the weights are given by the product of eigenvalues corresponding to each combination.

We present numerical results obtained by simulating the analog adiabatic evolution, and we summarize the findings in~\cref{fig:rydb_figures}.

\begin{figure}[H]
    \centering
\includegraphics[width=0.9\textwidth]{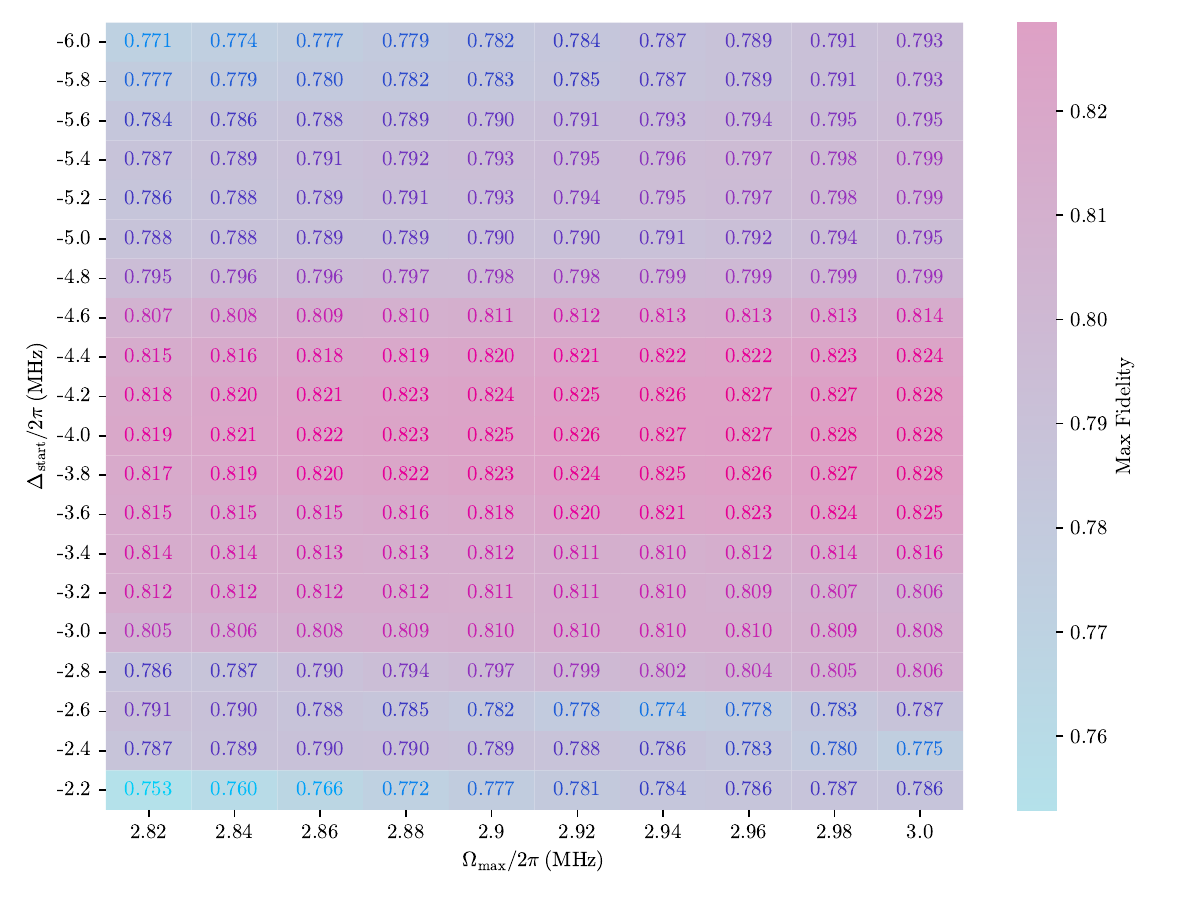}
    \caption{Fidelity heatmap from the refined parameter scan over maximum Rabi frequency and starting detuning. Each grid point shows the highest fidelity achieved over all total evolution times $T_\text{tot}$.}
    \label{fig:fidelity_heatmap}
\end{figure}

\begin{figure}[H]
    \begin{subfigure}{0.48\textwidth}
        \centering
        \includegraphics[width=\textwidth]{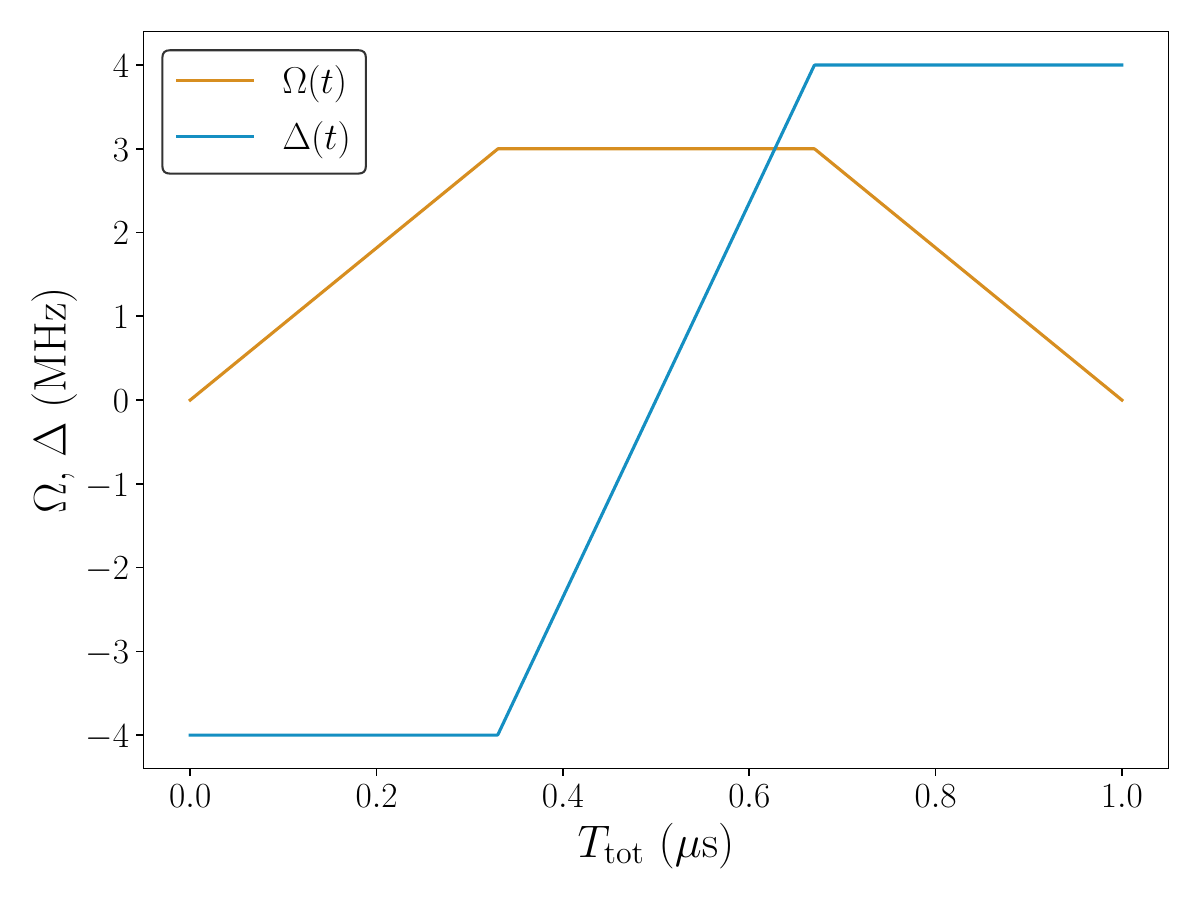}
        \caption{Ramp profile of the Rabi frequency $\Omega$ and detuning $\Delta$ used in the adiabatic protocol.}
        \label{fig:drive_profile}
    \end{subfigure}
    \hfill
    \begin{subfigure}{0.48\textwidth}
        \centering
        \includegraphics[width=\textwidth]{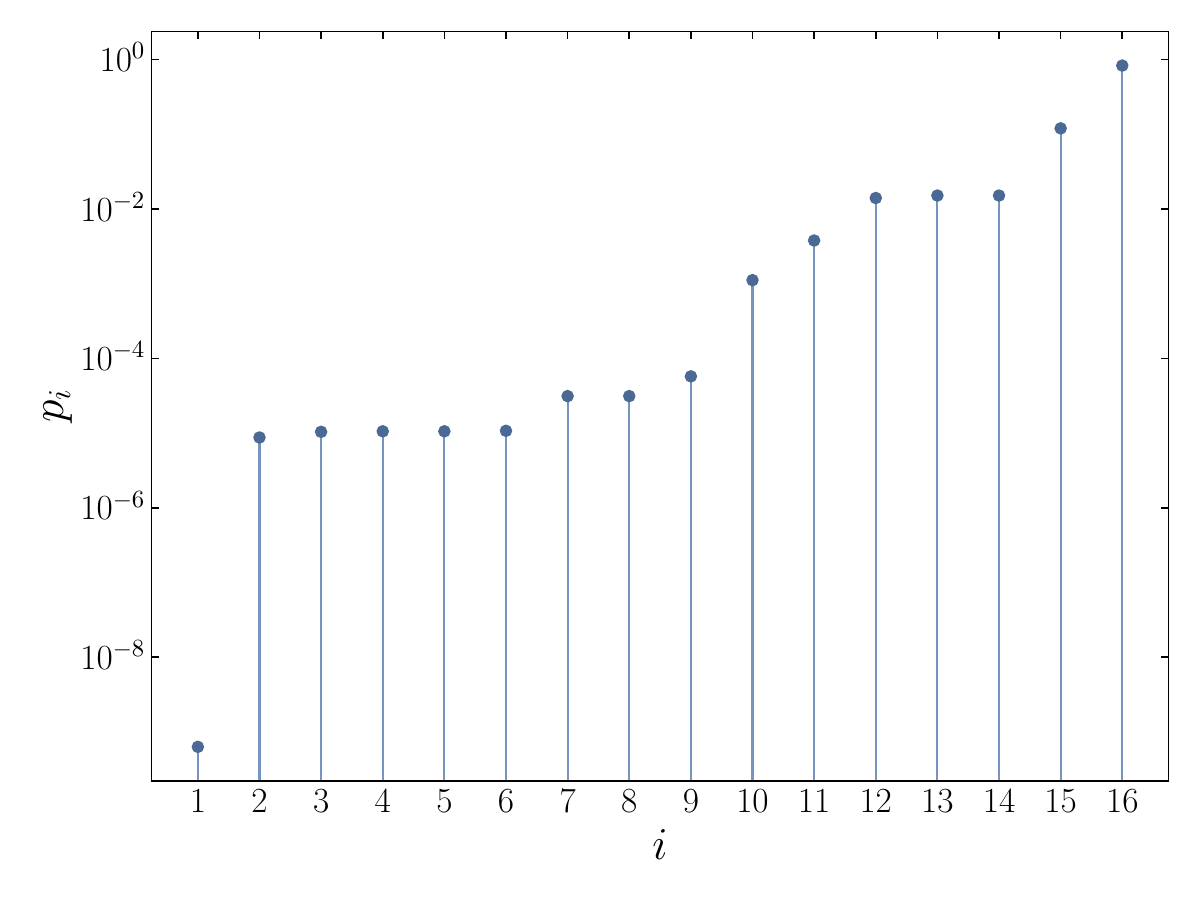}
        \caption{Eigenvalue spectrum of the final state $\rho$, displayed in ascending order.}
        \label{fig:spectrum_rho}
    \end{subfigure}
    \phantomcaption
    \label{fig:rydb_figures}
\end{figure}

\subsection{Noisy state preparation using QPA on superconducting hardware.}  
We model tunable global depolarizing noise by applying a random Pauli operator \(P \in \mathcal{P}_k = \{I, X, Y, Z\}^{\otimes k}\) to each input  with probability \(\lambda / 4^k\), where \(k = 2\). The base fidelity is given by \(\mathcal{F} = 1 - \lambda \left(1 - \frac{1}{2^k}\right)\) and is used across all three experiments, as circuit noise for one layer of gates negligible. 
We analytically evaluate the noiseless QPA channel on the depolarized inputs to obtain the fidelities shown in \cref{fig:curves}c. The theoretical optimal fidelity corresponding to $N_{\mathrm{trials}} = \infty$, as derived from Supplementary~\ref{equ:fidelity}, is given by $\mathcal{F} = \frac{1}{8} (-2 + \lambda)(1 + \lambda)(-4 + 3\lambda)$.

Following the analytical benchmarks, we perform experiments on IBM’s Marrakesh superconducting heavy-hex processor using the SamplerV2 primitive. 
For simplicity, we take the ideal state to be $\ket{0}^{\otimes 2}$, and estimate the amplified fidelity as the fraction of measurements yielding this outcome. To avoid mid-circuit classical control, we run separate circuits for each $n$-th iteration of SWAPNET and study measurement outcomes that satisfy $z_n = 1$ and $z_{<n} = 0$ (and also on $z_n = 0$ and $z_{<n} = 0$ for the final iteration). 
We first run on the simulated FakeMarrakesh processor, which uses a static noise profile. For each setting, we generate 100000 randomized circuit instances with two shots each (\cref{fig:curves}d). We then run on the experimental processor, performing 22000 randomized instances with two shots each (\cref{fig:curves}e).

We include additional simulation results for $k = 4$ and $k = 5$, corresponding to qudit dimensions $d = 2^4$ and $d = 2^5$, respectively, in the setting with global noise only.
\begin{figure}[H]
    \centering
    \begin{subfigure}[t]{0.48\textwidth}
        \centering
        \includegraphics[width=\textwidth]{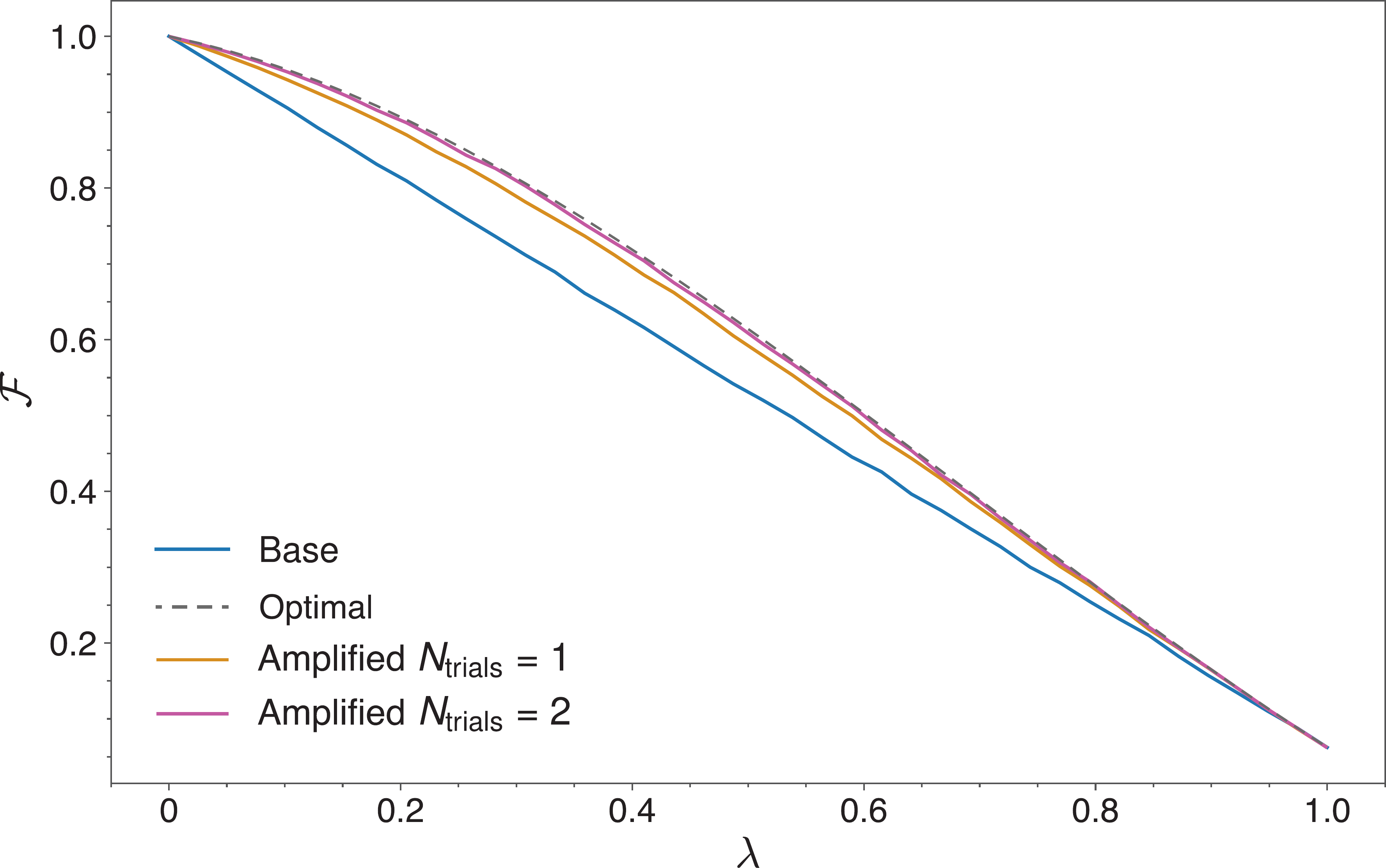}
        \caption{Base, amplified, and optimal fidelities for $N_\mathrm{trials} = 1, 2$ as functions of the depolarization strength $\lambda$ from Hamiltonian simulation with $k = 4$.}
        \label{fig:curves_k4}
    \end{subfigure}
    \hfill
    \begin{subfigure}[t]{0.48\textwidth}
        \centering
        \includegraphics[width=\textwidth]{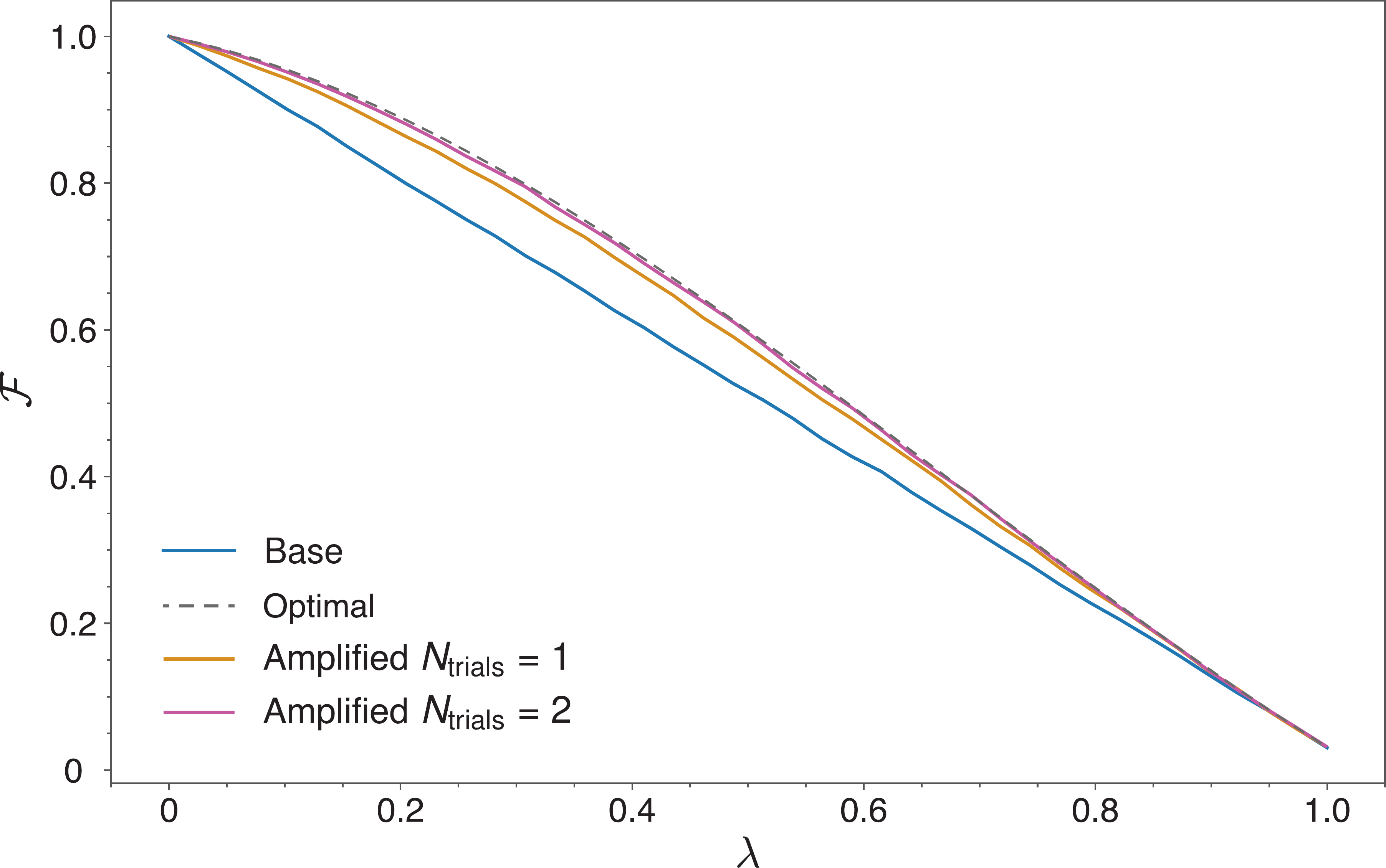}
        \caption{Base, amplified, and optimal fidelities for $N_\mathrm{trials} = 1, 2$ as functions of the depolarization strength $\lambda$ from Hamiltonian simulation with $k = 5$.}
        \label{fig:curves_k5}
    \end{subfigure}
\end{figure}
\bibliography{main.bib}

@article{Gisin2002,
  title = {Quantum cryptography},
  author = {Gisin, Nicolas and Ribordy, Gr\'egoire and Tittel, Wolfgang and Zbinden, Hugo},
  journal = {Rev. Mod. Phys.},
  volume = {74},
  issue = {1},
  pages = {145--195},
  numpages = {0},
  year = {2002},
  month = {Mar},
  publisher = {American Physical Society},
  doi = {10.1103/RevModPhys.74.145},
  url = {https://link.aps.org/doi/10.1103/RevModPhys.74.145}
}

@article{Koczor21,
  title = {Exponential Error Suppression for Near-Term Quantum Devices},
  author = {Koczor, B\'alint},
  journal = {Phys. Rev. X},
  volume = {11},
  issue = {3},
  pages = {031057},
  numpages = {30},
  year = {2021},
  month = {Sep},
  publisher = {American Physical Society},
  doi = {10.1103/PhysRevX.11.031057},
  url = {https://link.aps.org/doi/10.1103/PhysRevX.11.031057}
}

@article{GAN14,
  title = {Quantum simulation},
  author = {Georgescu, I. M. and Ashhab, S. and Nori, Franco},
  journal = {Rev. Mod. Phys.},
  volume = {86},
  issue = {1},
  pages = {153--185},
  numpages = {33},
  year = {2014},
  month = {Mar},
  publisher = {American Physical Society},
  doi = {10.1103/RevModPhys.86.153},
  url = {https://link.aps.org/doi/10.1103/RevModPhys.86.153}
}

@article{DRC17,
  title = {Quantum sensing},
  author = {Degen, C. L. and Reinhard, F. and Cappellaro, P.},
  journal = {Rev. Mod. Phys.},
  volume = {89},
  issue = {3},
  pages = {035002},
  numpages = {39},
  year = {2017},
  month = {Jul},
  publisher = {American Physical Society},
  doi = {10.1103/RevModPhys.89.035002},
  url = {https://link.aps.org/doi/10.1103/RevModPhys.89.035002}
}

@Article{Biamonte2017,
author={Biamonte, Jacob
and Wittek, Peter
and Pancotti, Nicola
and Rebentrost, Patrick
and Wiebe, Nathan
and Lloyd, Seth},
title={Quantum machine learning},
journal={Nature},
year={2017},
month={Sep},
day={01},
volume={549},
number={7671},
pages={195-202},
abstract={Fuelled by increasing computer power and algorithmic advances, machine learning techniques have become powerful tools for finding patterns in data. Quantum systems produce atypical patterns that classical systems are thought not to produce efficiently, so it is reasonable to postulate that quantum computers may outperform classical computers on machine learning tasks. The field of quantum machine learning explores how to devise and implement quantum software that could enable machine learning that is faster than that of classical computers. Recent work has produced quantum algorithms that could act as the building blocks of machine learning programs, but the hardware and software challenges are still considerable.},
issn={1476-4687},
doi={10.1038/nature23474},
url={https://doi.org/10.1038/nature23474}
}

@article{P18,
  doi = {10.22331/q-2018-08-06-79},
  url = {https://doi.org/10.22331/q-2018-08-06-79},
  title = {Quantum {C}omputing in the {NISQ} era and beyond},
  author = {Preskill, John},
  journal = {{Quantum}},
  issn = {2521-327X},
  publisher = {{Verein zur F{\"{o}}rderung des Open Access Publizierens in den Quantenwissenschaften}},
  volume = {2},
  pages = {79},
  month = aug,
  year = {2018}
}

@article{aliferis2005quantum, author = {Aliferis, Panos and Gottesman, Daniel and Preskill, John}, 
title = {Quantum Accuracy Threshold for Concatenated Distance-3 Codes}, year = {2006}, issue_date = {March 2006}, publisher = {Rinton Press, Incorporated}, address = {Paramus, NJ}, volume = {6}, number = {2}, issn = {1533-7146}, journal = {Quantum Info. Comput.}, doi={https://doi.org/10.26421/QIC6.2-1}, month = {mar}, pages = {97–165}, numpages = {69}, keywords = {quantum error correction, accuracy threshold, fault tolerance} 
}

@article{dennis2002topological,
  title={Topological quantum memory},
  author={Dennis, Eric and Kitaev, Alexei and Landahl, Andrew and Preskill, John},
  journal={Journal of Mathematical Physics},
  volume={43},
  number={9},
  pages={4452--4505},
  year={2002},
  doi={https://doi.org/10.1063/1.1499754},
  publisher={American Institute of Physics}
}

@article{PhysRevA.86.032324,
  title = {Surface codes: Towards practical large-scale quantum computation},
  author = {Fowler, Austin G. and Mariantoni, Matteo and Martinis, John M. and Cleland, Andrew N.},
  journal = {Phys. Rev. A},
  volume = {86},
  issue = {3},
  pages = {032324},
  numpages = {48},
  year = {2012},
  month = {Sep},
  publisher = {American Physical Society},
  doi = {10.1103/PhysRevA.86.032324},
  url = {https://link.aps.org/doi/10.1103/PhysRevA.86.032324}
}

@article{CDT09,
    author = {Cross, Andrew W. and Divincenzo, David P. and Terhal, Barbara M.},
    title = {A comparative code study for quantum fault tolerance},
    year = {2009},
    issue_date = {July 2009},
    publisher = {Rinton Press, Incorporated},
    address = {Paramus, NJ},
    volume = {9},
    number = {7},
    issn = {1533-7146},
    abstract = {We study a comprehensive list of quantum codes as candidates for codes used at the physical level in a fault-tolerant code architecture. Using the Aliferis-Gottesman-Preskill (AGP) ex-Rec method we calculate the pseudo-threshold for these codes against depolarizing noise at various levels of overhead. We estimate the logical noise rate as a function of overhead at a physical error rate of p0 = 1 \texttimes{} 10-4. The Bacon-Shor codes and the Golay code are the best performers in our study.},
    journal = {Quantum Info. Comput.},
    month = {jul},
    pages = {541–572},
    numpages = {32},
    keywords = {quantum architecture, quantum error-correction, quantum fault tolerance},
  doi={http://dx.doi.org/10.26421/QIC9.7-8-1},
}

@article{Li2023concatenation,
  doi = {10.22331/q-2023-08-22-1089},
  url = {https://doi.org/10.22331/q-2023-08-22-1089},
  title = {Concatenation {S}chemes for {T}opological {F}ault-tolerant {Q}uantum {E}rror {C}orrection},
  author = {Li, Zhaoyi and Kim, Isaac and Hayden, Patrick},
  journal = {{Quantum}},
  issn = {2521-327X},
  publisher = {{Verein zur F{\"{o}}rderung des Open Access Publizierens in den Quantenwissenschaften}},
  volume = {7},
  pages = {1089},
  month = aug,
  year = {2023}
}

@article{PhysRevA.57.127,
  title = {Theory of fault-tolerant quantum computation},
  author = {Gottesman, Daniel},
  journal = {Phys. Rev. A},
  volume = {57},
  issue = {1},
  pages = {127--137},
  numpages = {0},
  year = {1998},
  month = {Jan},
  publisher = {American Physical Society},
  doi = {10.1103/PhysRevA.57.127},
  url = {https://link.aps.org/doi/10.1103/PhysRevA.57.127}
}

@article{Gottesman2014,
  title = {Fault-Tolerant Quantum Computation with Constant Overhead},
  author = {Gottesman, Daniel},
  year = {2014},
  month = nov,
  journal = {Quantum Information \& Computation},
  volume = {14},
  number = {15-16},
  pages = {1338--1372},
  issn = {1533-7146},
  doi = {https://doi.org/10.1016/S0003-4916(02)00018-0},
  abstract = {What is the minimum number of extra qubits needed to perform a large fault-tolerant quantum circuit? Working in a common model of fault-tolerance, I show that in the asymptotic limit of large circuits, the ratio of physical qubits to logical qubits can be a constant. The construction makes use of quantum low-density parity check codes, and the asymptotic overhead of the protocol is equal to that of the family of quantum error-correcting codes underlying the fault-tolerant protocol.},
  keywords = {fault tolerance,LDPC codes,quantum error correction}
}

@article{HMOL20,
  title = {Virtual Distillation for Quantum Error Mitigation},
  author = {Huggins, William J. and McArdle, Sam and O'Brien, Thomas E. and Lee, Joonho and Rubin, Nicholas C. and Boixo, Sergio and Whaley, K. Birgitta and Babbush, Ryan and McClean, Jarrod R.},
  journal = {Phys. Rev. X},
  volume = {11},
  issue = {4},
  pages = {041036},
  numpages = {25},
  year = {2021},
  month = {Nov},
  publisher = {American Physical Society},
  doi = {10.1103/PhysRevX.11.041036},
  url = {https://link.aps.org/doi/10.1103/PhysRevX.11.041036}
}

@article{YEHM+22,
  title = {Error-Mitigated Quantum Metrology via Virtual Purification},
  author = {Yamamoto, Kaoru and Endo, Suguru and Hakoshima, Hideaki and Matsuzaki, Yuichiro and Tokunaga, Yuuki},
  journal = {Phys. Rev. Lett.},
  volume = {129},
  issue = {25},
  pages = {250503},
  numpages = {6},
  year = {2022},
  month = {Dec},
  publisher = {American Physical Society},
  doi = {10.1103/PhysRevLett.129.250503},
  url = {https://link.aps.org/doi/10.1103/PhysRevLett.129.250503}
}

@article{B97,
  title={Stabilization of quantum computations by symmetrization},
  author={Barenco, Adriano and Berthiaume, Andre and Deutsch, David and Ekert, Artur and Jozsa, Richard and Macchiavello, Chiara},
  journal={SIAM Journal on Computing},
  volume={26},
  number={5},
  pages={1541--1557},
  year={1997},
  publisher={SIAM}
}

@article{P99,
  title={Error symmetrization in quantum computers},
  author={Peres, Asher},
  journal={International journal of theoretical physics},
  volume={38},
  number={3},
  pages={799--805},
  year={1999},
  publisher={Springer}
}

@misc{BLM+22,
      title={Quantum majority vote}, 
      author={Harry Buhrman and Noah Linden and Laura Mančinska and Ashley Montanaro and Maris Ozols},
      year={2022},
      eprint={2211.11729},
      archivePrefix={arXiv},
      primaryClass={quant-ph}
}

@article{SIGA05,
  title={Quantum cloning},
  author={Scarani, Valerio and Iblisdir, Sofyan and Gisin, Nicolas and Acin, Antonio},
  journal={Reviews of Modern Physics},
  volume={77},
  number={4},
  pages={1225},
  year={2005},
  publisher={APS}
}

@article{BDEF98,
  title={Optimal universal and state-dependent quantum cloning},
  author={Bru{\ss}, Dagmar and DiVincenzo, David P and Ekert, Artur and Fuchs, Christopher A and Macchiavello, Chiara and Smolin, John A},
  journal={Physical Review A},
  volume={57},
  number={4},
  pages={2368},
  year={1998},
  publisher={APS}
}

@inproceedings{haah2016sample,
  title={Sample-optimal tomography of quantum states},
  author={Haah, Jeongwan and Harrow, Aram W and Ji, Zhengfeng and Wu, Xiaodi and Yu, Nengkun},
  booktitle={Proceedings of the forty-eighth annual ACM symposium on Theory of Computing},
  pages={913--925},
  year={2016}
}

@article{B06,
  title={Asymptotic quantum cloning is state estimation},
  author={Bae, Joonwoo and Ac{\'\i}n, Antonio},
  journal={Physical review letters},
  volume={97},
  number={3},
  pages={030402},
  year={2006},
  publisher={APS}
}

@inproceedings{OW17,
  title={Efficient quantum tomography II},
  author={O'Donnell, Ryan and Wright, John},
  booktitle={Proceedings of the 49th Annual ACM SIGACT Symposium on Theory of Computing},
  pages={962--974},
  year={2017}
}

@article{CEM99,
  title = {Optimal Purification of Single Qubits},
  author = {Cirac, J. I. and Ekert, A. K. and Macchiavello, C.},
  journal = {Phys. Rev. Lett.},
  volume = {82},
  issue = {21},
  pages = {4344--4347},
  numpages = {0},
  year = {1999},
  month = {May},
  publisher = {American Physical Society},
  doi = {10.1103/PhysRevLett.82.4344},
  url = {https://link.aps.org/doi/10.1103/PhysRevLett.82.4344}
}

@inproceedings{KW01,
  title={The rate of optimal purification procedures},
  author={Keyl, Michael and Werner, Reinhard F},
  booktitle={Annales Henri Poincare},
  volume={2},
  pages={1--26},
  year={2001},
  organization={Springer}
}

@mastersthesis{F16,
  title={Quantum state purification},
  author={Fu, Honghao},
  year={2016},
  school={University of Waterloo}
}

@article{CFLL23,
  title={Streaming quantum state purification},
  author={Childs, Andrew M and Fu, Honghao and Leung, Debbie and Li, Zhi and Ozols, Maris and Vyas, Vedang},
  journal={arXiv preprint arXiv:2309.16387},
  year={2023}
}

@article{grier2025streaming,
  title={Streaming quantum state purification for general mixed states},
  author={Grier, Daniel and Leung, Debbie and Li, Zhi and Pashayan, Hakop and Schaeffer, Luke},
  journal={arXiv preprint arXiv:2503.22644},
  year={2025}
}

@article{yang2024quantum,
  title={Quantum error suppression with subgroup stabilisation projectors},
  author={Yang, Bo and Kashefi, Elham and Leichtle, Dominik and Ollivier, Harold},
  journal={arXiv preprint arXiv:2404.09973},
  year={2024}
}

@article{RMCF04,
  title={Experimental purification of single qubits},
  author={Ricci, Marco and De Martini, Francesco and Cerf, NJ and Filip, R and Fiur{\'a}{\v{s}}ek, J and Macchiavello, Chiara},
  journal={Physical review letters},
  volume={93},
  number={17},
  pages={170501},
  year={2004},
  publisher={APS}
}

@article{SCMW+24,
  title={Holonomic swap and controlled-swap gates of neutral atoms via selective Rydberg pumping},
  author={Sun, CF and Chen, XY and Mu, WL and Wang, GC and You, JB and Shao, XQ},
  journal={EPJ Quantum Technology},
  volume={11},
  number={1},
  pages={1--21},
  year={2024},
  publisher={Springer}
}

@article{WWHF+21,
author = {Jin-Lei Wu and Yan Wang and Jin-Xuan Han and Yu-Kun Feng and Shi-Lei Su and Yan Xia and Yongyuan Jiang and Jie Song},
journal = {Photon. Res.},
keywords = {Dephasing; Doppler effect; Frequency modulation; Quantum computation; Quantum technology; Quantum teleportation},
number = {5},
pages = {814--821},
publisher = {Optica Publishing Group},
title = {One-step implementation of Rydberg-antiblockade SWAP and controlled-SWAP gates with modified robustness},
volume = {9},
month = {May},
year = {2021},
url = {https://opg.optica.org/prj/abstract.cfm?URI=prj-9-5-814},
doi = {10.1364/PRJ.415795},
abstract = {The prevalent fashion of executing Rydberg-mediated two- and multi-qubit quantum gates in neutral atomic systems is to pump Rydberg excitations using multistep piecewise pulses in the Rydberg blockade regime. Here, we propose to synthesize a \&\#x039B;-type Rydberg antiblockade (RAB) of two neutral atoms using periodic fields, which facilitates one-step implementations of SWAP and controlled-SWAP (CSWAP) gates with the same gate time. Besides, the RAB condition is modified so as to circumvent the sensitivity of RAB-based gates to infidelity factors, including atomic decay, motional dephasing, and interatomic distance deviation. Our work makes up the absence of one-step schemes of Rydberg-mediated SWAP and CSWAP gates and may pave a way to enhance the robustness of RAB-based gates.},
}

@article{BEGL24,
  title={Logical quantum processor based on reconfigurable atom arrays},
  author={Bluvstein, Dolev and Evered, Simon J and Geim, Alexandra A and Li, Sophie H and Zhou, Hengyun and Manovitz, Tom and Ebadi, Sepehr and Cain, Madelyn and Kalinowski, Marcin and Hangleiter, Dominik and others},
  journal={Nature},
  volume={626},
  number={7997},
  pages={58--65},
  year={2024},
  publisher={Nature Publishing Group UK London}
}

@article{scholl2023erasure,
  title={Erasure conversion in a high-fidelity Rydberg quantum simulator},
  author={Scholl, Pascal and Shaw, Adam L and Tsai, Richard Bing-Shiun and Finkelstein, Ran and Choi, Joonhee and Endres, Manuel},
  journal={Nature},
  volume={622},
  number={7982},
  pages={273--278},
  year={2023},
  publisher={Nature Publishing Group UK London}
}

@article{liu2022multi,
  title={Multi-state swap test algorithm},
  author={Liu, Wen and Yin, Han-Wen and Wang, Zhi-Rao and Fan, Wen-Qin},
  journal={arXiv preprint arXiv:2205.07171},
  year={2022}
}

@article{ringbauer2022universal,
  title={A universal qudit quantum processor with trapped ions},
  author={Ringbauer, Martin and Meth, Michael and Postler, Lukas and Stricker, Roman and Blatt, Rainer and Schindler, Philipp and Monz, Thomas},
  journal={Nature Physics},
  volume={18},
  number={9},
  pages={1053--1057},
  year={2022},
  publisher={Nature Publishing Group UK London}
}

@article{KPDH06,
  title={Optimal purification of thermal graph states},
  author={Kay, Alastair and Pachos, Jiannis K and D{\"u}r, Wolfgang and Briegel, Hans-J},
  journal={New Journal of Physics},
  volume={8},
  number={8},
  pages={147},
  year={2006},
  publisher={IOP Publishing}
}

@misc{lindner2023optical,
  title={Optical circuits for purification of single photon states},
  author={Lindner, Netanel H and Soonwon, CHOI and Preskill, John P},
  year={2023},
  month=jun # "~29",
  publisher={Google Patents},
  note={US Patent App. 18/077,863}
}

@article{YCH24,
    author = {Yao, Hongshun and Chen, Yu-Ao and Huang, Erdong and Chen, Kaichu and Wang, Xin},
    title = {Protocols and Trade-Offs of Quantum State Purification},
    journal = {arXiv preprint quant-ph/2404.01138},
    year = {2024},
    doi={10.48550/arXiv.2404.01138}
}

@article{CGW21,
  doi = {10.22331/q-2021-05-01-448},
  url = {e},
  title = {Certifying optimality for convex quantum channel optimization problems},
  author = {Coutts, Bryan and Girard, Mark and Watrous, John},
  journal = {{Quantum}},
  issn = {2521-327X},
  publisher = {{Verein zur F{\"{o}}rderung des Open Access Publizierens in den Quantenwissenschaften}},
  volume = {5},
  pages = {448},
  month = may,
  year = {2021}
}

@inproceedings{Bea97,
  title={Quantum computation of Fourier transforms over symmetric groups},
  author={Beals, Robert},
  booktitle={Proceedings of the twenty-ninth annual ACM symposium on Theory of computing},
  pages={48--53},
  year={1997}
}

@article{Yamanouchi1937,
  author    = {T. Yamanouchi},
  title     = {},
  journal   = {Proceedings of the Physico-Mathematical Society of Japan},
  year      = {1937},
  volume    = {18},
  pages     = {623},
}

@article{BCH06,
  title = {Efficient Quantum Circuits for Schur and Clebsch-Gordan Transforms},
  author = {Bacon, Dave and Chuang, Isaac L. and Harrow, Aram W.},
  journal = {Phys. Rev. Lett.},
  volume = {97},
  issue = {17},
  pages = {170502},
  numpages = {4},
  year = {2006},
  month = {Oct},
  publisher = {American Physical Society},
  doi = {10.1103/PhysRevLett.97.170502},
  url = {https://link.aps.org/doi/10.1103/PhysRevLett.97.170502}
}

@phdthesis{H05,
  title={Applications of coherent classical communication and the Schur transform to quantum information theory},
  author={Harrow, Aram Aram Wettroth},
  year={2005},
  school={Massachusetts Institute of Technology}
}

@article{CM23,
  title={Weak Schur sampling with logarithmic quantum memory},
  author={Cervero, Enrique and Man{\v{c}}inska, Laura},
  journal={arXiv preprint arXiv:2309.11947},
  year={2023}
}

@article{S16,
  title={On inequalities for normalized Schur functions},
  author={Sra, Suvrit},
  journal={European Journal of Combinatorics},
  volume={51},
  pages={492--494},
  year={2016},
  publisher={Elsevier}
}

@article{Alc18,
  title={The Special Unitary Group, Birdtracks, and Applications in QCD},
  author={Alcock-Zeilinger, Judith M},
  journal={Lecture Notes Universit{\"a}t T{\"u}bingen},
  year={2018}
}

@article{KS18,
  title={Birdtracks for $SU(N)$},
  author={Keppeler, Stefan},
  journal={SciPost Physics Lecture Notes},
  pages={003},
  year={2018}
}

@book{Thr41,
  title={Young’s semi-normal representation of the symmetric group},
  author={Thrall, RM},
  year={1941}
}

@article{AW17,
  title={Compact Hermitian Young projection operators},
  author={Alcock-Zeilinger, Judith and Weigert, Heribert},
  journal={Journal of Mathematical Physics},
  volume={58},
  number={5},
  year={2017},
  publisher={AIP Publishing},
  DOI={10.1063/1.4983478},
  url={https://pubs.aip.org/aip/jmp/article-abstract/58/5/051702/233496/Compact-Hermitian-Young-projection-operators?redirectedFrom=fulltext}
}

@article{AW18,
  title={Compact construction algorithms for the singlets of {$SU(N)$} over mixed tensor product spaces},
  author={Alcock-Zeilinger, Judith and Weigert, Heribert},
  journal={arXiv preprint arXiv:1812.11223},
  year={2018}
}

@article{KS13,
   title={Hermitian Young operators},
   volume={55},
   ISSN={1089-7658},
   url={http://dx.doi.org/10.1063/1.4865177},
   DOI={10.1063/1.4865177},
   number={2},
   journal={Journal of Mathematical Physics},
   publisher={AIP Publishing},
   author={Keppeler, Stefan and Sjödahl, Malin},
   year={2014},
   month=feb }

@article{Koc07a,
   title={Giant gravitons—with strings attached (I)},
   volume={2007},
   ISSN={1029-8479},
   url={http://dx.doi.org/10.1088/1126-6708/2007/06/074},
   DOI={10.1088/1126-6708/2007/06/074},
   number={06},
   journal={Journal of High Energy Physics},
   publisher={Springer Science and Business Media LLC},
   author={Koch, Robert de Mello and Smolic, Jelena and Smolic, Milena},
   year={2007},
   month=jun, pages={074–074} }

@article{de2012subgroup,
  title={On subgroup adapted bases for representations of the symmetric group},
  author={de Mello Koch, Robert and Ives, Norman and Stephanou, Michael},
  journal={Journal of Physics A: Mathematical and Theoretical},
  volume={45},
  number={13},
  pages={135204},
  year={2012},
  publisher={IOP Publishing}
}

@article{GBO23,
  title={Gelfand-Tsetlin basis for partially transposed permutations, with applications to quantum information},
  author={Grinko, Dmitry and Burchardt, Adam and Ozols, Maris},
  journal={arXiv preprint arXiv:2310.02252},
  year={2023}
}

@article{FTH23,
  title={Efficient quantum algorithm for port-based teleportation},
  author={Fei, Jiani and Timmerman, Sydney and Hayden, Patrick},
  journal={arXiv preprint arXiv:2310.01637},
  year={2023}
}

@article{SHM13,
doi = {10.1088/1751-8113/46/39/395303},
url = {https://dx.doi.org/10.1088/1751-8113/46/39/395303},
year = {2013},
month = {sep},
publisher = {IOP Publishing},
volume = {46},
number = {39},
pages = {395303},
author = {Michał Studziński and Michał Horodecki and Marek Mozrzymas},
title = {Commutant structure of $U^{\otimes(n-1)}\otimes U^\ast$ transformations},
journal = {Journal of Physics A: Mathematical and Theoretical},
abstract = {In this paper we have found irreducible representations (irreps) of the algebra of partially transposed permutation operators on the last subsystem. We give here the direct method of irreps construction for an arbitrary number of subsystems n and local dimension d. Our method is inspired by the representation theory of the symmetric group S(n), the theory of Brauer algebras and walled Brauer algebras.}
}

@misc{Ngu23,
      title={The mixed Schur transform: efficient quantum circuit and applications}, 
      author={Quynh T. Nguyen},
      year={2023},
      eprint={2310.01613},
      archivePrefix={arXiv},
      primaryClass={quant-ph},
      url={https://arxiv.org/abs/2310.01613}, 
}

@misc{GO23,
      title={Linear programming with unitary-equivariant constraints}, 
      author={Dmitry Grinko and Maris Ozols},
      year={2023},
      eprint={2207.05713},
      archivePrefix={arXiv},
      primaryClass={quant-ph}
}

@article{leditzky2017useful,
  title={Useful states and entanglement distillation},
  author={Leditzky, Felix and Datta, Nilanjana and Smith, Graeme},
  journal={IEEE Transactions on Information Theory},
  volume={64},
  number={7},
  pages={4689--4708},
  year={2017},
  publisher={IEEE}
}

@inproceedings{o2015quantum,
  title={Quantum spectrum testing},
  author={O'Donnell, Ryan and Wright, John},
  booktitle={Proceedings of the forty-seventh annual ACM symposium on Theory of computing},
  pages={529--538},
  year={2015}
}

@article{RCMW12,
  author = {Ruskov, R. and Combes, J. and Mølmer, K. and Wiseman, H. M.},
  title = {Qubit purification speed-up for three complementary continuous measurements},
  journal = {Philosophical Transactions of the Royal Society A: Mathematical, Physical and Engineering Sciences},
  year = {2012},
  volume = {370},
  issue = {1979},
  pages = {5291-5307},
  doi = {10.1098/rsta.2011.0516}
}

@article{NUY04,
  author = {Nakazato, H. and Unoki, M. and Yuasa, K.},
  title = {Preparation and entanglement purification of qubits through zeno-like measurements},
  journal = {Physical Review A},
  year = {2004},
  volume = {70},
  issue = {1},
  doi = {10.1103/physreva.70.012303}
}

@article{AFFJ05,
  author = {Andersen, U. and Filip, R. and Fiurášek, J. and Josse, V. and Leuchs, G.},
  title = {experimental purification of coherent states},
  journal = {Physical Review A},
  year = {2005},
  volume = {72},
  issue = {6},
  doi = {10.1103/physreva.72.060301}
}

@article{ebadi2022quantum,
  title={Quantum optimization of maximum independent set using Rydberg atom arrays},
  author={Ebadi, Sepehr and Keesling, Alexander and Cain, Madelyn and Wang, Tout T and Levine, Harry and Bluvstein, Dolev and Semeghini, Giulia and Omran, Ahmed and Liu, J-G and Samajdar, Rhine and others},
  journal={Science},
  volume={376},
  number={6598},
  pages={1209--1215},
  year={2022},
  publisher={American Association for the Advancement of Science}
}

@article{bernien2017probing,
  title={Probing many-body dynamics on a 51-atom quantum simulator},
  author={Bernien, Hannes and Schwartz, Sylvain and Keesling, Alexander and Levine, Harry and Omran, Ahmed and Pichler, Hannes and Choi, Soonwon and Zibrov, Alexander S and Endres, Manuel and Greiner, Markus and others},
  journal={Nature},
  volume={551},
  number={7682},
  pages={579--584},
  year={2017},
  publisher={Nature Publishing Group UK London}
}

@article{evered2023high,
  title={High-fidelity parallel entangling gates on a neutral-atom quantum computer},
  author={Evered, Simon J and Bluvstein, Dolev and Kalinowski, Marcin and Ebadi, Sepehr and Manovitz, Tom and Zhou, Hengyun and Li, Sophie H and Geim, Alexandra A and Wang, Tout T and Maskara, Nishad and others},
  journal={Nature},
  volume={622},
  number={7982},
  pages={268--272},
  year={2023},
  publisher={Nature Publishing Group UK London}
}
\end{document}